\def\A{\mathcal{A}}
\def\phi{\varphi}
\def\theta{\vartheta}
\def\true{\mathsf{True}}
\newcommand{\tran}[1]{\xrightarrow[]{#1}}
\def\lz#1{\color{red}LZ: \texttt{#1} :ZL\color{black}}
\title{On the Relationship between  LTL Normal Forms and B\"uchi Automata}
\titlerunning{On the relationship between LTL Normal Forms and B\"uchi Automata} 
\author{Jianwen Li\inst{1} \and
Geguang Pu\inst{1} \and
Lijun Zhang\inst{2} \and \\
Zheng Wang\inst{1} \and
Jifeng He\inst{1}\and
Kim G. Larsen\inst{3}
}
\institute
{
 \inst{}%
 Programming, Logic, and Semantics Group\\
Software Engineering Institute East China Normal University, 
P. R.China
 \and
 \inst{}%
DTU Informatics, Technical University of Denmark
\and
\inst{}%
Computer Science, Aalborg University, Denmark
}
\begin{document}

\maketitle

\begin{abstract}

  In this paper, we consider the problem of translating LTL formulas
  to B\"{u}chi automata. We first translate the given LTL formula into
  a special \emph{disjuctive-normal form} (DNF). The formula will be
  part of the state, and its DNF normal form specifies the atomic
  properties that should hold immediately (labels of the transitions)
  and the \emph{formula} that should hold afterwards (the
  corresponding successor state). Surprisingly, if the given formula is
  Until-free or Release-free, the B\"uchi automaton can be obtained
  directly in this manner.  For a general formula, the construction is
  slightly involved: an additional component will be needed for each
  formula that helps us to identify the set of accepting
  states. Notably, our construction is an on-the-fly construction, and
  the resulting B\"uchi automaton has in worst case $2^{2n+1}$ states
  where $n$ denotes the number of subformulas. Moreover, it has a
  better bound $2^{n+1}$ when the formula is Until- (or Release-) free.

\end{abstract}

\section{Introduction}\label{introduction}
Translating Linear Temporal Logic (LTL) formulas to their equivalent
automata (usually B\"{u}chi automata) has been studied for nearly
thirty years.
This translation plays a key role in the automata-based model
checking~\cite{Vardi86}: here the automaton of the negation of the LTL
property is first constructed, then the verification process is
reduced to the emptiness problem of the product.  Gerth et
al.~\cite{Gerth95} proposed an on-the-fly construction approach to
generating B\"{u}chi automata from LTL formulas, which means that the
counterexample can be detected even only a part of the property
automaton is generated. They called it a tableau construction
approach, which became widely used and many subsequent
works~\cite{Somenzi00,Giannakopoulou02,Daniele99,Etessami00,TACAS12}
for optimizing the automata under construction are based on
it.

In this paper, we propose a novel construction by making use of the
notion of \emph{disjuctive-normal forms} (DNF). For an LTL formula
$\phi$, its DNF normal form is an equivalent formula of the form
$\bigvee _i(\alpha_i\wedge X \phi_i)$ where $\alpha_i$ is a finite
conjunction of literals (atomic propositions or their negations), and
$\phi_i$ is a conjunctive LTL formula such that the root operator of it is not a
disjunction. We show that any LTL formula can be transformed into an
equivalent DNF normal form, and refer to $\alpha_i\wedge X\phi_i$ as a
clause of $\phi$. It is easy to see that any given LTL formula induces
a labelled transition system (LTS): states correspond to formulas, and we
assign a transition from $\phi$ to $\phi_i$ labelled with $\alpha_i$,
if $\alpha_i\wedge X\phi_i$ appears as a part of the DNF form of
$\phi$. Figure~\ref{ltl_ts_and_ba} demonstrates our idea in which the
transition labels are omitted.

\begin{figure}
\begin{minipage}[t]{0.5\linewidth}
\centering
\includegraphics[scale = 0.5]{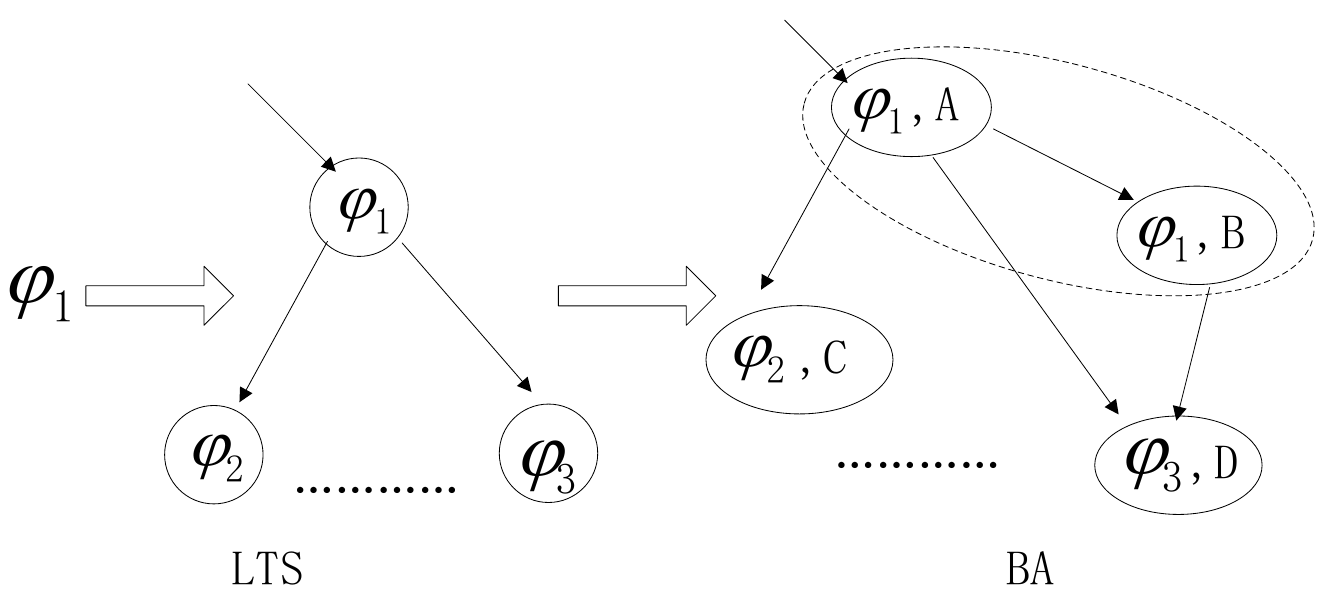}\\
\caption{A demonstration of our idea}\label{ltl_ts_and_ba}
\end{minipage}
\begin{minipage}[t]{0.5\linewidth}
\centering
\includegraphics[scale=0.9]{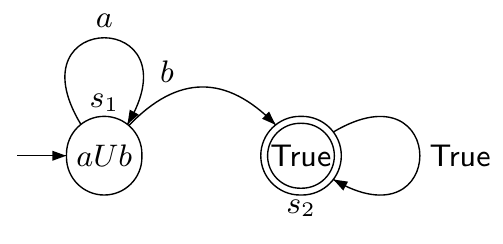}
\caption{The B\"uchi automaton for $aUb$.}\label{fig:buchiforaUb}
\end{minipage}
\end{figure}


The LTS is the starting point of our construction. Surprisingly, for
Until-free (or Release-free) formulas, the B\"uchi automaton can be
obtained directly by equipping the above LTS with the set of accepting
states, which is illustrated as follows. Consider the formula $aUb$,
whose DNF form is $(b\wedge X(\mathsf{True}))\vee (a\wedge
X(aUb))$. The corresponding B\"{u}chi automaton for $aUb$ is shown in
Figure~\ref{fig:buchiforaUb} where nodes $aUb$ and $\true$ represent
formulas $aUb$ and $\mathsf{True}$ respectively. The transitions are
self-explained. By semantics, we know that if the run $\xi$ satisfies
a Release-free formula $\phi$, then there must be a finite satisfying
prefix $\eta$ of $\xi$ such that any paths starting with $\eta$
satisfy $\phi$ as well. Thus, for this class of formulas, the state
corresponding to the formula $\true$ is considered as the single
accepting state. The Until-free formulas can be treated in a similar
way by taking the set of all states as accepting.


The main contribution of the paper is to extend the above construction
to general formulas. As an example we consider the formula $\psi=G(a U
b)$, which has the normal form $ (b\wedge X\psi) \vee ( a\wedge
X(aUb\wedge\psi))$. Note here the formula $\true$ will be even not
reachable. The most challenging part of the construction will then be
identification of the set of accepting states. For this purpose, we
identify subformulas that will be reached infinitely often, which we
call looping formulas. Only some of the looping formulas contribute to
the set of accepting states. These formulas will be the key to our
construction: we characterize a set of atomic propositions for each
formula, referred to as the \emph{obligation set}. The set contains
properties that must occur infinitely often to make the given formula
satisfiable. In our construction, we add an additional component to
the states to keep track of the obligations, and then define accepting states
based on it -- an illustrating example can be found in Section
\ref{example}.

Our construction for general formula has at most $2^{2n+1}$ states
with $n$ denoting the number of subformulas. The number of states for
the Release/Until cases is bounded $2^{n+1}$. Recall the complexity of
$2^{O(n)}$ \cite{Gerth95} of the classical tableau construction. To
the best of our knowledge, this is the first time that one can give a
precise bound on the exponent for such construction.

\subsubsection*{Related Work}
As we know, there are two main approaches to B\"{u}chi automata
construction from LTL formulas. The first approach generates the
alternating automaton from the LTL formula and then translates it to
the equivalent B\"{u}chi automaton~\cite{Vardi96}. Gastin et
al.~\cite{Gastin01} proposed a variant of this construction in 2001,
which first translates the very weak alternating co-B\"uchi automaton
to generalised automaton with accepting transitions which is then
translated into B\"uchi automaton.  In particular, the experiments
show that their algorithm outperforms the others if the formulas under
construction are restricted on fairness conditions. Recently
Babiak et al.~\cite{TACAS12} proposed some optimization strategies
based on the work~\cite{Gastin01}.  

The second approach was proposed in 1995 by Gerth et
al.~\cite{Gerth95}, which is called the \textit{tableau}
construction. This approach can generate the automata from LTL
on-the-fly, which is widely used in the verification tools for
acceleration of the automata-based verification process.
Introducing the (state-based) \textit{Generalized B\"{u}chi
Automata} (GBA) is the important feature for the tableau
construction. Daniele et al.~\cite{Daniele99} improved the tableau
construction by some simple syntactic techniques. Giannakopoulou and
Lerda~\cite{Giannakopoulou02} proposed another construction approach
that uses the transition-based Generalized B\"{u}chi automaton
(TGBA).
And some optimization techniques~\cite{Etessami00,Somenzi00}  have been
proposed to reduce the size of the generated automata. For instance,
Etessami and Holzmann~\cite{Etessami00} described the
optimization
techniques including proof theoretic reductions (formulas
rewritten), core algorithm tightening and the automata theoretic
reductions (simulation based).

\subsubsection*{Organization of the paper.}
Section \ref{example}
illustrates our approach by a running example. Section \ref{sec:pre}
introduces preliminaries of B\"uchi automata and LTL formulas and then introduces the \emph{disjunctive-normal form} for LTL formulas; Section~\ref{construction} specifies the
proposed \emph{DNF-based } construction; Section~\ref{discussion}
discusses how our approach is related to the tableau
construction in~\cite{Gerth95}. 
Section~\ref{conclusion} concludes the paper.

\section{A Running Example}~\label{example} We consider the formula
$\phi_1=\textrm{G}(bUc\wedge dUe)$ as our running example. The DNF
form of $\phi_1$ is given by:
\begin{align*}
\phi_1=
(c\wedge e\wedge X(\phi_1) )
\vee (b\wedge e\wedge X(\phi_2))
\vee (c\wedge d\wedge X(\phi_3))
\vee (b\wedge d\wedge X(\phi_4)
\end{align*}
where $\phi_2=bUc\wedge G(bUc\wedge dUe)$, $\phi_3=dUe\wedge
G(bUc\wedge dUe)$, $\phi_4=bUc\wedge dUe \wedge G(bUc\wedge dUe)$.  It
is easy to check that the above DNF form is indeed equivalent to
formula $\phi_1$.  Interestingly, we note that
$\phi_1,\phi_2,\phi_3,\phi_4$ all have the same DNF form above.

\begin{figure}
\centering
  \includegraphics[scale=1.0]{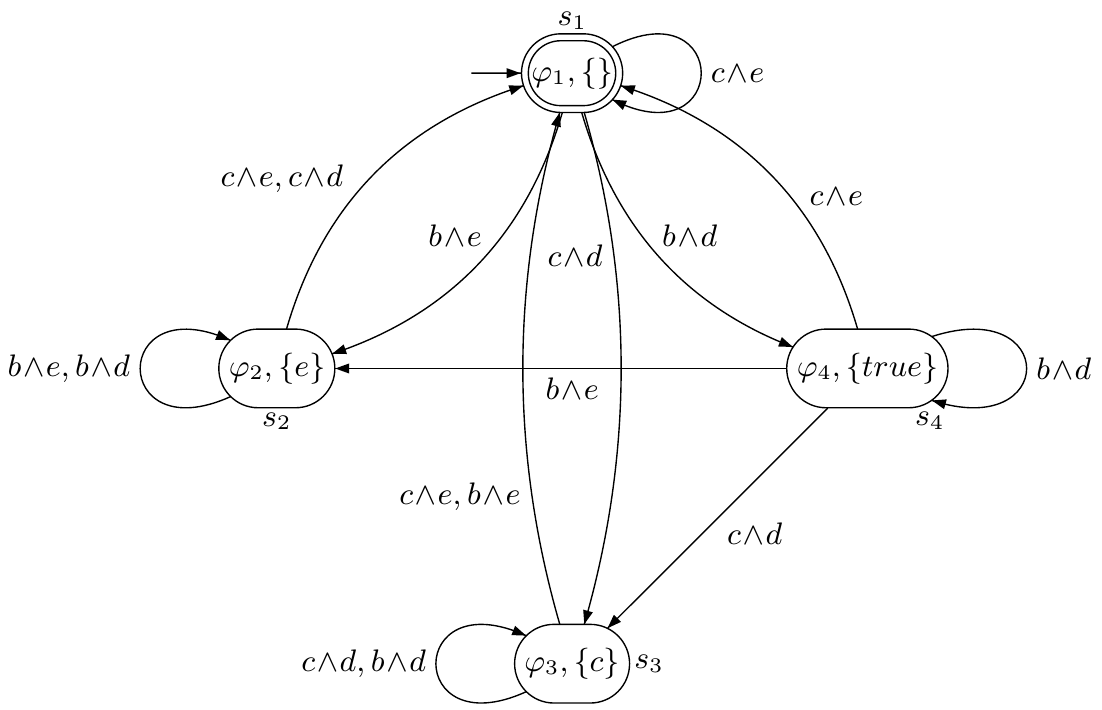}\\
  \caption{\label{fig:buechi}
The B\"uchi automaton for the formula $\phi_1$.}
\end{figure}

The corresponding B\"uchi automaton for $\phi_1$ is
depicted in Fig.~\ref{fig:buechi}.
We can see that there are four states in the generated automata,
corresponding to the four formulas $\phi_i (\mathit{i = 1, 2, 3,
  4})$. The state corresponding to the formula $\phi_1$ is also the
initial state. The transition relation is obtained by observing the
DNF forms: for instance we have a self-loop for state $s_1$ with label
$c\wedge e$. If we observe the normal form of $\phi_1$, we can see
that there is a term ($c\wedge e\wedge X(\phi_1)$), where there is a
conjunction of two terms $c\wedge e$ and $X(\phi_1)$, and $\phi_1$ in
$X$ operator corresponds to the node $s_1$ and $c\wedge e$ corresponds
the loop edge for $s_1$. 

Thus, the \emph{disjunctive-normal form} of the formula has a very
close relation with the generated automaton. The most difficult part
is to determine the set of accepting states of the automaton. We give
thus here a brief description of several notions introduced for this
purpose in our running example.  The four of all the formulas
$\mathit{\phi_i(i=1,2,3,4)}$ have the same \textit{obligation set},
i.e. $OS_{\phi_i}=\{\{c,e\}\}$, which may vary for different formulas.
In our construction, every \textit{obligation} in the
\textit{obligation set} of each formula identities the properties
needed to be satisfied infinitely if the formula is satisfiable. For
example, the formulas $\phi_i(\mathit{i=1,2,3,4})$ are satisfied if
and only if all properties in the obligation $\{c,e\}$ are met
infinitely according to our framework. Then, a state consists of a
formula and the \textit{process set}, which records all the properties
that have been met so far.  For simplicity, we initialize the
\textit{process set} $P_1$ of the initial state $s_1$ with the empty
set. For the state $s_2$, the corresponding process set $P_2 = \{e\}$
is obtained by taking the union of $P_1$ and the label $\{b,e\}$ from
$s_1$. The label $b$ will be omitted as it is not contained in the
obligation.  Similarly one can conclude $P_3 = \{c\}$ and $P_4 =
\{true\}$: here the property $true$ implies no property has been met
so far. When there is more than one property in the \textit{process
  set}, the $\{true\}$ can be erased, such as that in state
$s_3$. Moreover, the \textit{process set} in a state will be reset to
empty if it includes one \textit{obligation} in the formula's
\textit{obligation set}. For instance, the transition in the figure
$s_2\tran{c\wedge d}s_1$ is due to that $P_1' = P_2\cup
\{c\}=\{c,e\}$, which is actually in $OS_{\phi_1}$. So $P_1'$ is reset
to the empty set. One can also see the same rule when the transitions
$s_2\tran{c\wedge e}s_1$, $s_4\tran{c\wedge e}s_1$, $s_3\tran{b\wedge
  e}s_1$ occur.



Through the paper, we will go back to this example again when we
explain our construction approach.

\section{B\"uchi Automaton, LTL and Disjunctive Normal Form}\label{sec:pre}

\subsection{B\"uchi Automaton}\label{sec:automata}

A B\"{u}chi automaton is a tuple $\mathcal{A}=(S, \Sigma, \delta ,
S_0, F)$, where $S$ is a finite set of states, $\Sigma$ is a finite set of
alphabet symbols , $\delta: S\times \Sigma \to 2^S$ is the transition
relation, $S_0$ is a set of initial states, and $F\subseteq S$ is a
set of accepting states of $\mathcal{A}$.

  We use $w, w_0\in\Sigma$ to denote alphabets in $\Sigma$, and $\eta, \eta_0\in
  \Sigma^*$ to denote finite sequences.  A \emph{run}
  $\xi=w_0w_1w_2\ldots$ is an infinite sequence over
  $\Sigma^\omega$. For $\xi$ and $k\geq 1$ we use $\xi ^k=w_0w_1\ldots
  w_{k-1}$ to denote the prefix of $\xi$ up to its $k$th element (the
  $k+1$th element is not included) as well as $\xi _k$ to denote the
  suffix of $w_kw_{k+1}\ldots$ from its $(k+1)$th element (the $k+1$th
  element is included).  Thus, $\xi=\xi^k \xi_k$. For notational
  convenience we write $\xi_0=\xi$ and $\xi^0=\varepsilon$
  ($\varepsilon$ is the empty string). The run $\xi$ is accepting if
  it runs across one of the states in $F$ infinitely often.  

 \subsection{Linear Temporal Logic}\label{sec:ltl}
 We recall the linear temporal logic (LTL) which is widely used as a
 specification language to describe the properties of reactive
 systems. Assume $AP$ is a set of atomic properties, then the syntax of
 LTL formulas is defined by:
\begin{align*}
\phi\ ::=\ a \mid \neg a \ |\ \phi\wedge \phi\ |\ \phi\vee \phi\ |\ \phi\ U \phi\ |\ \phi\ R\ \phi\ |\ X\ \phi
\end{align*}
where $a\in AP$, $\phi$ is an LTL formula.  We say $\phi$ is a \emph{literal} if it is a
proposition or its negation. In this paper we use lower case letters to
denote atomic properties and $\alpha$, $\beta$, $\gamma$ to denote
propositional formulas (without temporal operators), and use $\phi$, $\psi$, $\theta$, $\mu$, $\nu$
and $\lambda$ to denote LTL formulas.

Note that w.l.o.g.  we are considering LTL formulas in negative normal
form (NNF) -- all negations are pushed down to literal level.  LTL
formulas are interpreted on infinite sequences (correspond to runs of the automata) $\xi\in \Sigma
^\omega$ with $\Sigma =2^{AP}$. The Boolean connective case is trivial,
and the semantics of temporal operators is given by:
\begin{itemize}
\item $\xi \models \phi_1\ U\ \phi_2$ iff there exists $i\geqslant 0$
  such that $\xi_i\models \phi_2$ and for all $0 \leqslant j < i,
  \xi_j\models \phi_1$;
\item $\xi \models \phi_1\ R\ \phi_2$ iff either
  $\xi_i\vDash\phi_2$ for all $i\geq 0$, or there exists $i\ge 0$
  with $\xi_i\models\phi_1\wedge\phi_2$ and $\xi_j\vDash\phi_2$ for
  all $0\leq j< i$;
  \item $\xi \models X\ \phi$ iff $\xi_i\models \phi$.
\end{itemize}

According to the LTL semantics, it holds $\phi R\psi=\neg (\neg \phi U\neg
\phi)$. We use the usual abbreviations $\true=a \vee \neg a$,
$Fa=\mathsf{True}Ua$ and $Ga=\mathsf{False}Ra$.

\textbf{Notations.} Let $\phi$ be a formula written in
\emph{conjunctive form} $\phi = \bigwedge_{i\in I} \phi_i$ such that
the root operator of $\phi_i$ is not a conjunctive: then we define the
conjunctive formula set as $CF(\phi):=\{\phi_i \mid i\in I\}$.  When
$\phi$ does not include a conjunctive as a root operator,
$\mathit{CF}(\phi)$ only includes $\phi$ itself. For technical
reasons, we assume that $\mathit{CF}(\mathsf{True})=\emptyset$.  Our
construction requires that every atoms (properties) in the formula can
be varied from their positions. For example, for the formula $aUa$ -
we should consider the two of $a$s are identified syntactically
differently, similarly for the formula $aU\neg a$.

\subsection{Disjunctive Normal Form}\label{sec:dnf}
We introduce the notion of \textit{disjunctive-normal  form} for LTL formulas in the following.

\begin{definition}[disjunctive-normal form]\label{def:dnf}
  A formula $\phi$ is in \textit{disjunctive-normal form} (DNF) if
  it can be represented as $\phi:=\bigvee _i (\alpha_i\wedge X
  \phi_i)$, where $\alpha_i$ is a finite conjunction of literals,
  and $\phi_i = \bigwedge \phi_{i_j}$ where $\phi_{i_j}$ is
  either a literal, or an \textit{Until}, \textit{Next} or
  \textit{Release} formula.

We say $\alpha_i\wedge X \phi_i$ is
a \emph{clause} of $\phi$, and write $DNF(\phi)$ to denote all of the clauses.
\end{definition}

As seen in the introduction and motivating example, DNF form plays a
central role in our construction. Thus, we first discuss that any LTL
formula $\phi$ can be transformed into an equivalent formula in DNF
form.  The transformation is done in two steps: the first step is
according to the following rules:

\begin{lemma}\label{lemma:expansion}
\begin{enumerate}

\item $DNF(\alpha) =\{\alpha \wedge X(\mathsf{True})\}$ where $\alpha$ is a literal;
\item $DNF(X\phi) = \{\mathsf{True}\wedge X(\phi)\}$;
\item $DNF(\phi_1 U \phi_2) = DNF(\phi_2)\cup DNF( \phi_1 \wedge
  X(\phi_1 U \phi_2))$;
  \item $DNF(\phi_1 R \phi_2) = DNF(\phi_1 \wedge \phi_2) \cup DNF( \phi_2 \wedge X(\phi_1 R \phi_2))$;
  \item $DNF(\phi_1 \vee \phi_2) = DNF(\phi_1)\cup DNF(\phi_2)$;
  \item $DNF(\phi_1\wedge\phi_2) = \{(\alpha_1\wedge\alpha_2) \wedge X(\psi_1\wedge\psi_2)\mid \forall i=\mathit{1,2}. \ \alpha_i\wedge X(\psi_i)\in DNF(\phi_i)\}$;

\end{enumerate}
\end{lemma}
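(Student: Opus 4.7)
The plan is to verify each of the six rules separately, in each case establishing two things: (i) the formula on the left is semantically equivalent to the disjunction of all clauses listed on the right, and (ii) the right-hand side fits the structural constraints of Definition~\ref{def:dnf}. Rules (1), (2) and (5) are essentially reformulations of obvious semantic identities: any literal $\alpha$ is equivalent to $\alpha \wedge X(\true)$ since $X(\true)$ is vacuously satisfied, $X\phi \equiv \true \wedge X\phi$ is immediate, and rule (5) is just the definition of disjunction together with the observation that a union of valid clauses is a set of valid clauses.

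For rule (6), I would use distributivity of $\wedge$ over $\vee$ combined with the identity $X\psi_1 \wedge X\psi_2 \equiv X(\psi_1 \wedge \psi_2)$, which lets me rewrite the conjunction of two DNF formulas into a disjunction of products of clauses. The resulting clauses have the form $(\alpha_1 \wedge \alpha_2) \wedge X(\psi_1 \wedge \psi_2)$; verifying that this is a valid clause amounts to checking that a conjunction of two formulas, each of which is already a conjunction of literals, Until, Next or Release subformulas, is itself a conjunction of the same shape, which is immediate.

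For rules (3) and (4), the key tool is the classical LTL fixpoint expansion laws,
\[
\phi_1 U \phi_2 \equiv \phi_2 \vee \bigl(\phi_1 \wedge X(\phi_1 U \phi_2)\bigr), \qquad
\phi_1 R \phi_2 \equiv (\phi_1 \wedge \phi_2) \vee \bigl(\phi_2 \wedge X(\phi_1 R \phi_2)\bigr),
\]
which both follow directly from the semantic clauses for $U$ and $R$ by case-splitting on whether the satisfying position is $i=0$ or $i>0$. Once these are in hand, each of rules (3) and (4) collapses to an application of rule (5) to the two disjuncts, together with rules (2) and (6) to evaluate the $X$-guarded term on the right.

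The main obstacle I expect is not any single computation but keeping the structural bookkeeping clean throughout, especially for rule (6): one must be careful that the $\psi_i$ produced recursively really are conjunctions of only the permitted shapes (literals, Next, Until, Release), so that their further conjunction is still of permitted shape. An implicit induction on the structure of $\phi$ will be needed to confirm that these rules, applied exhaustively, do terminate in a DNF form for every input formula; everything else amounts to routine case analysis on the run $\xi$.
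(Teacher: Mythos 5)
Your proposal is correct and follows essentially the same route as the paper, which justifies these rules by exactly the ingredients you name: the fixpoint expansion laws for $U$ and $R$, the identity $X\psi_1\wedge X\psi_2\equiv X(\psi_1\wedge\psi_2)$ together with distributivity for the conjunction rule, and the structural check that the resulting clauses conform to Definition~\ref{def:dnf}. (The paper states this only as a one-sentence remark; the termination/exhaustive-application concern you raise belongs to Theorem~\ref{thm:transform} rather than to this lemma, but it is not a gap here.)
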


All of the rules above are self explained, following by the
definition of DNF, distributive and the expansion laws. What remains is how to deal with
the formulas in the \textit{Next} operator: by definition, in a
clause $\alpha_i\wedge X(\phi_i)$ the root operators in $\phi_i$
cannot be disjunctions. The equivalence $X (\phi_1 \vee
\phi_2) = X \phi_1 \vee X \phi_2$ can be applied repeatedly
to move the disjunctions out of the \textit{Next} operator. The
distributive law of disjunction over conjunctions allows us to bring
any formula into an equivalent DNF form:

\begin{theorem}\label{thm:transform}
  Any LTL formula $\phi$ can be transformed into an equivalent formula in
  disjunctive-normal form.
\end{theorem}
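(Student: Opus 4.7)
The plan is to proceed by structural induction on $\phi$, using Lemma~\ref{lemma:expansion} as the constructive recipe: I show that the set of clauses produced by the rules is well-defined and that the disjunction of those clauses is semantically equivalent to $\phi$. The base case is a literal, for which rule~(1) yields $\{\alpha \wedge X(\mathsf{True})\}$; this is trivially in DNF since $\mathsf{True}$ may be regarded as the empty conjunction (consistent with $\mathit{CF}(\mathsf{True})=\emptyset$).

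For the inductive step, the easy cases are disjunction and conjunction: rule~(5) takes the union of the children's DNFs, and rule~(6) pairwise combines clauses. In the conjunction case one must verify that the merged Next-argument $\psi_1 \wedge \psi_2$ still satisfies the syntactic requirement of Definition~\ref{def:dnf}, namely no disjunction at the root and a conjunction of literal/Until/Release/Next formulas. This holds because each $\psi_i$ already has this shape by the inductive hypothesis, and a conjunction of two such is again of the same shape. For $X\phi$, I would first apply the equivalence $X(\phi_1 \vee \phi_2) \equiv X\phi_1 \vee X\phi_2$ repeatedly to push disjunctions outside every $X$, after which rule~(2) produces a singleton clause whose content meets the DNF requirement.

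The main obstacle is the Until and Release cases: rules~(3) and~(4) reintroduce the formula $\phi_1 U \phi_2$ (respectively $\phi_1 R \phi_2$) on the right-hand side, so strict structural induction does not directly apply. I would address this by observing that the re-occurrence is always guarded by an $X$, which is dispatched immediately by rule~(2) as the singleton $\{\mathsf{True} \wedge X(\phi_1 U \phi_2)\}$ with no further unfolding. Rule~(6) then merges this singleton with the DNFs of $\phi_1$ (and, for Release, with that of $\phi_2$), yielding clauses whose Next-contents are conjunctions of the allowed kinds of subformulas. Hence the unfolding terminates in one step, and the induction can be formalised by measuring formula size with every $X$ treated as an opaque leaf.

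Finally, semantic equivalence is verified rule by rule: each rule of Lemma~\ref{lemma:expansion} corresponds to a standard LTL identity, namely the expansion laws $\phi_1 U \phi_2 \equiv \phi_2 \vee (\phi_1 \wedge X(\phi_1 U \phi_2))$ and $\phi_1 R \phi_2 \equiv (\phi_1 \wedge \phi_2) \vee (\phi_2 \wedge X(\phi_1 R \phi_2))$, together with distributivity of $X$ over $\vee$ and of $\wedge$ over $\vee$. Composing these identities along the induction gives $\phi \equiv \bigvee_i (\alpha_i \wedge X\phi_i)$, completing the transformation.
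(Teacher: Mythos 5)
Your proof is correct and follows essentially the same route as the paper: both rest on the rules of Lemma~\ref{lemma:expansion} together with distributivity of $X$ and $\wedge$ over $\vee$, the paper merely packaging this as a two-phase rewrite (an intermediate form $\bigvee_i(\alpha_i\wedge X\varphi_i)$ with unconstrained $\alpha_i,\varphi_i$, followed by pulling the disjunctions out) where you run a single structural induction, and you are in fact more explicit than the paper about termination and rule-by-rule semantic soundness. One small repair: the measure ``formula size with every $X$ treated as an opaque leaf'' does not strictly decrease on the call from $\phi_1 U\phi_2$ to $\phi_1\wedge X(\phi_1 U\phi_2)$ when $\phi_2$ is a literal (both then have the same size), so either order lexicographically by the pair (number of $U$/$R$ operators not guarded by an $X$, size) or, as you already observe informally, inline rules~2 and~6 into rules~3 and~4 so that every residual recursive call is on the proper subformulas $\phi_1$ and $\phi_2$.
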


In our running example, we have $DNF(\phi_1)= DNF(\phi_2)=
DNF(\phi_3)= DNF(\phi_4)=\{c\wedge e\wedge X(\phi_1), b\wedge e\wedge
X(\phi_2), c\wedge d\wedge X(\phi_3), b\wedge d\wedge X(\phi_4)
\}$. Below we discuss the set of formulas that can be reached from a
given formula.

\begin{definition}[Formula Expansion]\label{def:expand}
  We write $\phi\tran{\alpha}\psi$ iff there exists $\alpha\wedge X(\psi) \in
  DNF(\phi)$. We say $\psi$ is expandable from $\phi$, written as
  $\phi\hookrightarrow\psi$, if there exists a finite expansion
  $\phi\tran{\alpha_1}\psi_1\tran{\alpha_2}\psi_2\tran{\alpha_3}\ldots
  \psi_n=\psi$. Let $EF(\phi)$ denote the set of all formulas that
  can be expanded from $\phi$. 
\end{definition}

The following theorem points out that $|EF(\lambda)|$ is bounded:
\begin{theorem}\label{thm:expand:bounded}
For any formula $\lambda$, $|EF(\lambda)|\leq 2^{n+1}$ where $n$ denotes the number of subformulas of $\lambda$.
\end{theorem}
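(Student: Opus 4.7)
The plan is to identify a small set of ``atomic'' subformulas from which every formula in $EF(\lambda)$ can be built as a conjunction, and then bound $|EF(\lambda)|$ by $2$ raised to the size of that set. Concretely, I would define $A(\phi)$ as the set of subformulas of $\phi$ whose root operator is neither conjunction nor disjunction; equivalently, $A(\phi)$ collects the literal, \emph{Next}, \emph{Until} and \emph{Release} subformulas of $\phi$. Since $A(\phi)$ is a subset of the subformulas of $\phi$, we have $|A(\phi)| \leq n$.

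The heart of the argument is the invariant, proved by structural induction on $\phi$, that every immediate successor of $\phi$ (i.e.\ every $\psi$ with $\phi \tran{\alpha} \psi$ for some $\alpha$) can be written as $\bigwedge_{\mu \in T} \mu$ for some $T \subseteq A(\phi)$, with the convention $\bigwedge_{\mu \in \emptyset} \mu = \true$. I would verify this case by case against Lemma~\ref{lemma:expansion}. For a literal, rule (1) gives $\true$, the empty conjunction. For $\phi = X\phi'$, a small auxiliary lemma (itself a structural induction on $\phi'$ using $X(\phi_1 \vee \phi_2) \equiv X\phi_1 \vee X\phi_2$ to push disjunctions outward) shows that the successor set equals the set of conjunctions $\bigwedge_j \mu_j$ arising from the propositional DNF of $\phi'$ viewed over atoms in $A(\phi')$, and all such $\mu_j$ lie in $A(\phi') \subseteq A(\phi)$. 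For $\phi_1 U \phi_2$, rule (3) gives successors either coming from $\phi_2$ (handled inductively, since $A(\phi_2) \subseteq A(\phi_1 U \phi_2)$) or of the form $\psi_1 \wedge (\phi_1 U \phi_2)$ with $\psi_1$ a successor of $\phi_1$ (handled inductively, using $\phi_1 U \phi_2 \in A(\phi_1 U \phi_2)$). The \emph{Release}, disjunction and conjunction cases are analogous, the conjunction case relying on rule (6) to obtain successors of $\phi_1 \wedge \phi_2$ as pointwise conjunctions of successors of $\phi_1$ and $\phi_2$.

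To lift this from immediate to iterated successors, I would induct on the length of the expansion path: if $\psi = \bigwedge_{\mu \in T} \mu$ with $T \subseteq A(\lambda)$, then rule (6) gives any immediate successor of $\psi$ as $\bigwedge_\mu \psi_\mu$, where each $\psi_\mu$ is an immediate successor of some $\mu \in T$; applying the structural claim to each $\mu$ (itself a subformula of $\lambda$, so $A(\mu) \subseteq A(\lambda)$) yields $\psi_\mu$ as a conjunction over $A(\lambda)$, and the whole remains a conjunction over $A(\lambda)$. The upshot is an injection $EF(\lambda) \hookrightarrow 2^{A(\lambda)}$, yielding $|EF(\lambda)| \leq 2^{|A(\lambda)|} \leq 2^n \leq 2^{n+1}$, with the extra factor of $2$ absorbing $\lambda$ itself (not necessarily in the atomic conjunctive form) or other bookkeeping slack. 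The main obstacle, I expect, is handling the implicit normalization of rule (2) via $X(\phi_1 \vee \phi_2) \equiv X\phi_1 \vee X\phi_2$: the rule as stated yields a single successor, but the DNF form of Definition~\ref{def:dnf} requires further splitting when the argument has top-level disjunctions, and this must be made rigorous by the auxiliary lemma so that the invariant is verified on the actual normalized successor set rather than on a spurious one.
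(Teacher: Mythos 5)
Your proof is correct and takes essentially the same route as the paper's: the paper shows (Lemmas~\ref{lemma:dnf:finite} and~\ref{lemma:expand:finite}) that $CF(\psi)\subseteq cl(\lambda)$ for every $\psi\in EF(\lambda)$, where $cl(\lambda)$ is the set of subformulas of $\lambda$ together with $\true$, using the same two inductions you describe (structural induction for a single expansion step, then induction on the length of the expansion path), and concludes $|EF(\lambda)|\le 2^{|cl(\lambda)|}=2^{n+1}$. Your restriction to the set $A(\lambda)$ of non-conjunctive, non-disjunctive subformulas is a mild sharpening of the same counting argument, and the normalization of $X$ over disjunctions that you flag is a genuine but minor point that the paper's own proof likewise leaves implicit.
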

\section{\emph{DNF-based}
B\"uchi Automaton Construction}\label{construction}
Our goal of this section is to construct the B\"uchi automaton
$\A_\lambda$ for $\lambda$. We establish a few simple properties of
general formulas that shall shed insights on the construction for the
$Release$-free ($Until$-free) formulas. We then define the labelled transition system for a formula. In the following three
subsections we present the construction for $Release$-free
($Until$-free) and general formulas, respectively.  

In the remaining of the paper, we fix $\lambda$ as the input LTL
formula.  All formulas being considered will vary over the set
$EF(\lambda)$, and $AP$ will denote the set of all literals appearing
in $\lambda$, and $\Sigma =2^{AP}$. 

\subsection{Transition Systems for LTL Formulas}
We first extend formula expansions
to subset in $\Sigma$:
\begin{definition}\label{def:expandset}
  For $\omega\in \Sigma$ and propositional formula $\alpha$,
  $\omega\models \alpha$ is defined in the standard way: if $\alpha$
  is a literal, $\omega\models \alpha$ iff $\alpha\in \omega$, and
  $\omega\models \alpha_1\wedge\alpha_2$ iff
  $\omega\models\alpha_1\wedge\omega\models\alpha_2$, and
  $\omega\models \alpha_1\vee\alpha_2$ iff
  $\omega\models\alpha_1\vee\omega\models\alpha_2$.

We write $\phi\tran{\omega}\psi$ if
$\phi\tran{\alpha}\psi$ and $w\models\alpha$.  For a word
$\eta=\omega_0\omega_1..\omega_k$, we write $\phi\tran{\eta}\psi$ iff
$\phi\tran{\omega_0}\psi_1\tran{\omega_1}\psi_2\tran{\omega_2}..\psi_{k+1}=\psi$.

For a run $\xi\in\Sigma^\omega$, we write $\phi\tran{\xi}\phi$
iff $\xi$ can be written as $\xi=\eta_0\eta_1\eta_2\ldots$ such that $\eta_i$ is a finite sequence, and  $\phi\tran{\eta_i}\phi$ for all $i\ge 0$.
\end{definition}

Below we provide a few interesting properties derived from our DNF normal forms.
\begin{lemma}\label{lemma:dnf}
Let  $\xi$ be a run and $\lambda$ a formula. Then,  for all $n\ge
1$,
$\xi\vDash\lambda\Leftrightarrow\lambda\tran{\xi^n}\phi\wedge\xi_n\vDash\phi$.
\end{lemma}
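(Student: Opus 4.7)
The plan is to prove the lemma by induction on $n$, with the base case $n = 1$ doing essentially all the work, since the inductive step amounts to composing one expansion step with the induction hypothesis.

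For the base case $n=1$, the prefix $\xi^1$ is just $\omega_0$. I would first appeal to Theorem~\ref{thm:transform} (and the clause-generating rules of Lemma~\ref{lemma:expansion}), which give that $\lambda$ is semantically equivalent to the disjunction of its clauses, that is, $\lambda \equiv \bigvee_i (\alpha_i \wedge X \phi_i)$ where the $\alpha_i \wedge X\phi_i$ range over $DNF(\lambda)$. Then $\xi \vDash \lambda$ iff some disjunct $\alpha_i \wedge X\phi_i$ satisfies $\omega_0 \vDash \alpha_i$ and $\xi_1 \vDash \phi_i$, which by Definitions~\ref{def:expand} and~\ref{def:expandset} is exactly $\lambda \tran{\omega_0} \phi_i$ together with $\xi_1 \vDash \phi_i$. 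Both directions follow symmetrically from this equivalence.

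For the inductive step, assume the statement for $n$ and prove it for $n+1$. Split the prefix as $\xi^{n+1} = \omega_0 \cdot (\xi_1)^n$ and the suffix as $\xi_{n+1} = (\xi_1)_n$. For the forward direction, $\xi \vDash \lambda$ yields, by the base case, some $\psi$ with $\lambda \tran{\omega_0} \psi$ and $\xi_1 \vDash \psi$; then applying the induction hypothesis to $\psi$ and the run $\xi_1$ at length $n$ produces some $\phi$ with $\psi \tran{(\xi_1)^n} \phi$ and $(\xi_1)_n \vDash \phi$. Concatenating the two expansions gives $\lambda \tran{\xi^{n+1}} \phi$ and $\xi_{n+1} \vDash \phi$, as required. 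The backward direction is symmetric: given $\lambda \tran{\xi^{n+1}} \phi$, decompose along the first step into $\lambda \tran{\omega_0} \psi \tran{(\xi_1)^n} \phi$, use the induction hypothesis to conclude $\xi_1 \vDash \psi$, and then the base case to conclude $\xi \vDash \lambda$.

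The only nonroutine part is the semantic equivalence $\lambda \equiv \bigvee DNF(\lambda)$ used in the base case, i.e., that the rules of Lemma~\ref{lemma:expansion} really do produce an equivalent formula. I would treat this as a sublemma, proved by induction on the structure of $\lambda$: the literal and $X$ cases are immediate; $U$ and $R$ use the standard fixpoint expansion identities $\phi_1 U \phi_2 \equiv \phi_2 \vee (\phi_1 \wedge X(\phi_1 U \phi_2))$ and $\phi_1 R \phi_2 \equiv (\phi_1 \wedge \phi_2) \vee (\phi_2 \wedge X(\phi_1 R \phi_2))$; disjunction distributes trivially; and conjunction uses the identity $(\alpha_1 \wedge X\psi_1) \wedge (\alpha_2 \wedge X\psi_2) \equiv (\alpha_1 \wedge \alpha_2) \wedge X(\psi_1 \wedge \psi_2)$ combined with the distributivity of $\wedge$ over $\vee$. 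Once that sublemma is in hand, the main induction above is essentially bookkeeping about how prefixes and suffixes decompose.
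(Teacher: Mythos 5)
Your proof is correct, and it follows the route the paper implicitly intends: the paper gives no explicit proof of this lemma, relying on the semantic equivalence $\lambda\equiv\bigvee DNF(\lambda)$ (which it dismisses as "self explained" via the expansion and distributive laws) and on iterating that equivalence along the prefix. Your explicit structural-induction sublemma for the equivalence, followed by the induction on $n$ with the decomposition $\xi^{n+1}=\omega_0\cdot(\xi_1)^n$ and $\xi_{n+1}=(\xi_1)_n$, supplies exactly the missing details and handles both directions correctly.
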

Essentially, $\xi\models\lambda$ is equivalent to that we can
reach a formula $\phi$ along the prefix $\xi^n$  such that the
suffix $\xi_n$ satisfies $\phi$. The following corollary is a direct
consequence of Lemma \ref{lemma:dnf} and  the fact that we have only
finitely many formulas in $EF(\lambda)$:
\begin{corollary}\label{coro:expand:existcycle}
  If $\xi\vDash\lambda$, then there exists $n\ge 1$ such that
  $\lambda\tran{\xi^n}\phi\wedge\xi_n\vDash\phi \wedge
  \phi\tran{\xi_n}\phi$.
On the other side, if $\lambda\tran{\xi^n}\phi\wedge\xi_n\vDash\phi \wedge \phi\tran{\xi_n}\phi$, then $\xi\vDash\lambda$.
\end{corollary}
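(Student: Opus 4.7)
The plan is to prove the two directions separately, deriving both from Lemma~\ref{lemma:dnf} combined with the finiteness of $EF(\lambda)$ established in Theorem~\ref{thm:expand:bounded}. The backward direction is immediate: if $\lambda\tran{\xi^n}\phi$ and $\xi_n\models\phi$, then the right-to-left implication of Lemma~\ref{lemma:dnf} already yields $\xi\models\lambda$, so in fact the hypothesis $\phi\tran{\xi_n}\phi$ is not needed for this direction at all.

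For the forward direction I plan to first build a concrete infinite path in the labelled transition system that realizes $\xi$, and then pigeonhole on it. Set $\phi_0:=\lambda$, which satisfies $\xi_0=\xi\models\phi_0$. Inductively, given $\phi_i$ with $\xi_i\models\phi_i$, apply the DNF expansion $\phi_i=\bigvee_j(\alpha_{i,j}\wedge X\psi_{i,j})$: since $\xi_i=w_i\xi_{i+1}\models\phi_i$, some clause $j$ must satisfy $w_i\models\alpha_{i,j}$ and $\xi_{i+1}\models\psi_{i,j}$, so let $\phi_{i+1}:=\psi_{i,j}$. This produces a sequence $\phi_0,\phi_1,\phi_2,\ldots$ in $EF(\lambda)$ with $\phi_i\tran{w_i}\phi_{i+1}$ and $\xi_i\models\phi_i$ throughout.

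By Theorem~\ref{thm:expand:bounded} the set $EF(\lambda)$ is finite, so by pigeonhole some formula $\phi$ occurs infinitely often in this sequence; choose indices $n_0<n_1<n_2<\ldots$ with $\phi_{n_k}=\phi$, and set $n:=n_0$. Then $\lambda\tran{\xi^n}\phi$ follows from the initial segment of the path, $\xi_n\models\phi$ holds by construction, and decomposing $\xi_n$ as the concatenation of the finite blocks $w_{n_k}w_{n_k+1}\ldots w_{n_{k+1}-1}$, each of which loops $\phi$ back to $\phi$, yields $\phi\tran{\xi_n}\phi$ in the sense of Definition~\ref{def:expandset}.

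The main obstacle is the coherent construction in the forward direction. Lemma~\ref{lemma:dnf} at a fixed $n$ only asserts existence of some witnessing formula, and because DNF expansion is nondeterministic, independently chosen witnesses at different values of $n$ need not lie on a common path through the transition system; without a common path one cannot glue two recurrences of the same $\phi$ into a loop at $\phi$. The step-by-step construction above sidesteps this by threading the choices together so that one coherent infinite path is produced, to which a single pigeonhole argument then applies. This threaded construction is essentially the only content beyond Lemma~\ref{lemma:dnf} and the finite bound on $EF(\lambda)$.
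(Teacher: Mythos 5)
Your proof is correct and follows the paper's intended argument exactly: the paper simply declares this corollary a direct consequence of Lemma~\ref{lemma:dnf} and the finiteness of $EF(\lambda)$, and your threaded path construction plus pigeonhole is precisely the detail needed to make the witnesses at different depths coherent, which is the only nontrivial content here. One nitpick: $n:=n_0$ could be $0$ (if the recurring formula is $\lambda$ itself, starting at index $0$), violating the requirement $n\ge 1$, so take $n:=n_1$ instead.
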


This corollary gives the hint that after a finite prefix we can focus
on whether the suffix satisfies the \emph{looping formula} $\phi$,
i.e,. those $\phi$ with $\phi\hookrightarrow\phi$.
From Definition~\ref{def:expand} and the expansion rules for LTL
formulas, we have the following corollary:
\begin{corollary}\label{coro:expand:cycle}
  If $\lambda\hookrightarrow\lambda$ holds and $\lambda\neq\mathsf{True}$, then there is at least one \textit{Until} or \textit{Release} formula in $CF(\lambda)$.
\end{corollary}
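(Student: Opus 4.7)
The plan is to argue by contradiction: assume $CF(\lambda)$ contains neither an Until nor a Release formula, and derive a contradiction with $\lambda \hookrightarrow \lambda$ and $\lambda \neq \mathsf{True}$. Since the loop $\lambda \hookrightarrow \lambda$ requires at least one transition, $\lambda$ must appear as the ``after-$X$'' formula of some DNF clause, so by Definition~\ref{def:dnf} every element of $CF(\lambda)$ is a literal, a Next, an Until, or a Release formula; in particular $CF(\lambda)$ contains no disjunctions. Combined with the contradiction hypothesis, every $\lambda_j \in CF(\lambda)$ must then be either a literal or of the form $X\psi_j$.

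Using the conjunction rule of Lemma~\ref{lemma:expansion}, one then sees that $DNF(\lambda)$ consists of a single clause $\alpha \wedge X(\psi)$ where $\alpha$ collects the literal conjuncts and $\psi = \bigwedge_{j : \lambda_j = X\psi_j} \psi_j$, taking $\psi = \mathsf{True}$ if $\lambda$ has no $X$ conjunct. Hence $\lambda$ has a unique successor $\psi$, and the assumption $\lambda \hookrightarrow \lambda$ forces $\lambda \in EF(\psi)$.

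The crucial technical ingredient is a subformula-closure lemma: for every $\phi' \in EF(\phi)$, each element of $CF(\phi')$ is a syntactic subformula of $\phi$. I would prove this by induction on the length of the expansion path, supported by an inner induction on the structure of $\phi$ showing that for every clause $\alpha \wedge X\tau \in DNF(\phi)$ we have $CF(\tau) \subseteq \mathit{Sub}(\phi)$. The Until and Release cases are handled by the fact that their self-referential clauses $\phi_1 \wedge X(\phi_1 U \phi_2)$ and $\phi_2 \wedge X(\phi_1 R \phi_2)$ put back the original formula as the post-$X$ part, which is trivially its own subformula.

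Applying the closure to $\lambda \in EF(\psi)$ yields $CF(\lambda) \subseteq \mathit{Sub}(\psi)$, and the contradiction then splits into two cases. If $\lambda$ has no $X$ conjunct, then $\psi = \mathsf{True}$, and a direct calculation gives $EF(\mathsf{True}) = \{\mathsf{True}\}$, forcing $\lambda = \mathsf{True}$, contrary to the hypothesis. Otherwise, pick $X\psi_{j^*} \in CF(\lambda)$ of maximum syntactic size; the required containment $X\psi_{j^*} \in \mathit{Sub}(\psi)$ then fails by a size count, since every $X$-rooted subformula of $\psi = \bigwedge_{j} \psi_j$ must sit inside some $\psi_k$ and therefore have size strictly less than that of $X\psi_k$, which is at most the size of $X\psi_{j^*}$. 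The main obstacle is the subformula-closure lemma itself: it is intuitive but requires careful bookkeeping through all the DNF rules, with special attention to the Until/Release expansions where the formula reappears as a subformula of the expanded clause.
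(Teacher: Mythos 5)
Your argument is correct in substance, but it is worth noting that the paper offers no proof of this corollary at all -- it is merely asserted to follow from Definition~\ref{def:expand} and the expansion rules -- so any careful derivation is added value rather than a rival route. The subformula-closure lemma you identify as the crucial ingredient is exactly what the paper establishes in the appendix as Lemmas~\ref{lemma:dnf:finite} and~\ref{lemma:expand:finite} (namely $\psi\in EF(\varphi)\Rightarrow CF(\psi)\subseteq cl(\varphi)$, with $cl(\varphi)$ the subformulas of $\varphi$ together with $\mathsf{True}$) on the way to Theorem~\ref{thm:expand:bounded}, so you may simply cite it instead of re-proving it; your maximal-size argument on the $X$-conjuncts then supplies precisely the step the paper leaves implicit. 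One over-claim should be repaired: $DNF(\lambda)$ need not be a single clause, because Definition~\ref{def:dnf} does not forbid disjunctions \emph{inside} a Next conjunct (e.g.\ $X(a\vee b)\in CF(\lambda)$ produces two clauses once $X$ is pushed over $\vee$), so $\lambda$ may have several successors. This does not damage the proof: every successor $\psi$ reached on the loop still has the form $\bigwedge_j\theta_j$ with each $\theta_j$ either $\mathsf{True}$ or a subformula of the corresponding $\psi_j$, so any $X$-rooted subformula of $\psi$ lies inside some $\theta_k$ and your size count $|X\psi_{j^*}|\le|\theta_k|\le|\psi_k|<|X\psi_k|\le|X\psi_{j^*}|$ goes through verbatim; just phrase the argument for ``the successor on the looping path'' rather than ``the unique successor.''
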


As we described in previous, the elements in $EF(\lambda)$ and its
corresponding DNF-normal forms naturally induce a labelled transition
system, which can be defined as follows:
\begin{definition}[LTS for $\lambda$]
  The labelled transition system $TS_{\lambda}$ generated from the
  formula $\lambda$ is a tuple $\langle \Sigma,S, \delta, S_0\rangle$: where 
  $\Sigma = AP$, $S  = EF(\lambda)$, $S_0=\{\lambda\}$ and $\delta$ is defined as follows:
  $\psi\in\delta(\phi, \omega)$ iff $\phi\tran{\omega}\psi$ holds,
  where $\phi,\psi\in EF(\lambda)$ and $\omega\in \Sigma$.
\end{definition}

\subsection{B\"uchi automata for Release/Until-free Formulas}
The following lemma is a special instance of our central theorem
\ref{thm:central}.  It states properties of accepting runs with respect to Release/Until-free formulas:

\begin{lemma}\label{lem:releasefree}
  \begin{enumerate}
  \item   Assume $\lambda$ is  $Release$-free. Then, $\xi\vDash\lambda\Leftrightarrow\exists n\cdot\lambda\tran{\xi^n}\mathsf{True}$.
  \item Assume $\lambda$ is  $Until$-free. Then $\xi\vDash\lambda\Leftrightarrow\exists n,\phi\cdot\lambda\tran{\xi^n}\phi\wedge\phi\tran{\xi_n}\phi$.
  \end{enumerate}
\end{lemma}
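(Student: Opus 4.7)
The plan is to treat each of the four implications separately, using Lemma~\ref{lemma:dnf} and Corollary~\ref{coro:expand:existcycle} as the central tools.

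For part~(1), direction $\Leftarrow$: I would instantiate Lemma~\ref{lemma:dnf} with $\phi=\mathsf{True}$, observing that $\xi_n\vDash\mathsf{True}$ holds trivially. For part~(1), direction $\Rightarrow$: I would induct on the structure of the Release-free $\lambda$, using the expansion rules of Lemma~\ref{lemma:expansion}. Literals reduce in one step to $\mathsf{True}$; for $X\psi$, $\phi_1\vee\phi_2$, and $\phi_1\wedge\phi_2$ the inductive prefixes combine via rules~2, 5 and~6 (taking the longer of the two prefixes for $\wedge$). The crux is $\lambda=\psi_1 U\psi_2$: by the semantics of $U$, pick the least $k$ with $\xi_k\vDash\psi_2$ (so $\xi_j\vDash\psi_1$ for $j<k$); applying the IH to $\psi_2$ on the shifted run produces $m$ with $\psi_2\tran{(\xi_k)^m}\mathsf{True}$. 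Using rule~3 I would unwind the Until $k$ times through the non-discharging clause $\psi_1\wedge X(\psi_1 U\psi_2)$ (each step valid because $\xi_j\vDash\psi_1$ admits a one-step DNF transition, via Lemma~\ref{lemma:dnf} at $n=1$), and then at step $k$ switch to the $DNF(\psi_2)$-branch so that the remaining derivation coincides with $\psi_2\tran{(\xi_k)^m}\mathsf{True}$, yielding $\lambda\tran{\xi^{k+m}}\mathsf{True}$.

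For part~(2), direction $\Rightarrow$: Corollary~\ref{coro:expand:existcycle} applies directly and I simply discard the $\xi_n\vDash\phi$ conjunct. For part~(2), direction $\Leftarrow$: I would appeal to the converse side of Corollary~\ref{coro:expand:existcycle}, so the task reduces to deriving $\xi_n\vDash\phi$ from $\phi\tran{\xi_n}\phi$. Since Until-freeness is preserved by every rule of Lemma~\ref{lemma:expansion}, the formula $\phi\in EF(\lambda)$ is itself Until-free. I would then establish the auxiliary claim that, for an Until-free $\phi$ and any finite word $\eta$ with $\phi\tran{\eta}\phi$, every Release conjunct $\psi_1 R\psi_2\in CF(\phi)$ is preserved step-by-step along the intermediate formulas of this derivation; by rule~4 of Lemma~\ref{lemma:expansion} the only clause that preserves $\psi_1 R\psi_2$ is $\psi_2\wedge X(\psi_1 R\psi_2)$, which forces $\psi_2$ at the current letter. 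Iterating along the decomposition $\xi_n=\eta_0\eta_1\eta_2\ldots$ from Definition~\ref{def:expandset} then yields $\xi_n\vDash G\psi_2$ for each such Release, hence $\xi_n\vDash\psi_1 R\psi_2$; a parallel analysis of literal and $X$-conjuncts (which can only be regenerated from Release conjuncts within each $\eta_i$) gives $\xi_n\vDash\phi$.

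The main obstacle is this auxiliary claim in part~(2) $\Leftarrow$. The subtlety is that conjuncts of $\phi$ which are literals or $X$-formulas disappear after a single expansion step and must be recreated by the end of the loop, so one must track carefully which Release conjuncts of $\phi$ persist through every intermediate formula of $\eta_i$ and hence force their $\psi_2$-component at every letter. Once this invariant is in place the rest of the proof is routine, and the present lemma will then fold cleanly into the more general Theorem~\ref{thm:central} referenced below.
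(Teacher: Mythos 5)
Your decomposition into four implications is reasonable, and Part~1 ($\Leftarrow$) and Part~2 ($\Rightarrow$) are unproblematic. The genuine gap is the auxiliary claim on which Part~2 ($\Leftarrow$) rests: it is \emph{not} true that every Release conjunct of an Until-free looping formula $\phi$ persists through all intermediate formulas of a loop $\phi\tran{\eta}\phi$, so the route through $\xi_n\vDash G\psi_2$ collapses. Concretely, take $\lambda=G(c\vee(aRb))$ and $\phi'=aRb\wedge\lambda\in EF(\lambda)$. One checks from Lemma~\ref{lemma:expansion} that $\phi'\tran{\{a,b,c\}}\lambda\tran{\{c\}}\lambda\tran{\{b\}}\phi'$, so $\eta=\{a,b,c\}\{c\}\{b\}$ loops $\phi'$ back to itself; the conjunct $aRb$ is discharged at the very first step through the $DNF(a\wedge b)$ branch of rule~4 and is only regenerated two steps later by the $G$-conjunct. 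The word $\xi_n=(\{a,b,c\}\{c\}\{b\})^\omega$ does satisfy $\phi'$, as the lemma requires, but it does not satisfy $Gb$, so the intermediate conclusion you would derive for the Release conjunct $aRb$ is false. This is precisely the phenomenon the paper's machinery is built to handle: Lemma~\ref{lemma:cycle:partialorder} stratifies $CF(\phi)$ into $S_0\subset S_1\subset\cdots\subset S_n=CF(\phi)$ so that discharged-and-regenerated conjuncts are treated by induction up the strata, with Lemma~\ref{lemma:release:sat} reserved for the conjuncts that genuinely self-loop. Indeed the paper gives no standalone proof of Lemma~\ref{lem:releasefree} at all; it obtains it as a special case of Theorem~\ref{thm:central}, noting (via Lemma~\ref{lemma:until:finitesat}, where $UCF(\phi)=\emptyset$ for Until-free $\phi$) that $\eta\models_f\phi$ holds vacuously for every loop, after which the soundness Lemma~\ref{lemma:finitesat:infsat} delivers $\xi_n\vDash\phi$. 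Your attempt to shortcut this runs into exactly the difficulty that stratification was introduced to solve.

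A secondary, repairable imprecision sits in the Until case of Part~1 ($\Rightarrow$). After one unwinding of $\psi_1U\psi_2$ via $DNF(\psi_1\wedge X(\psi_1U\psi_2))$ you are not back at $\psi_1U\psi_2$ but at $\theta_0\wedge(\psi_1U\psi_2)$, where $\alpha_0\wedge X\theta_0\in DNF(\psi_1)$; these residuals $\theta_0,\theta_1,\ldots$ accumulate by rule~6, so after $k$ steps the remaining derivation is the $\psi_2$-derivation \emph{conjoined} with all residual derivations and does not ``coincide with $\psi_2\tran{(\xi_k)^m}\mathsf{True}$.'' You must additionally argue that each residual reaches $\mathsf{True}$ (it is a conjunction of Release-free subformulas satisfied by the corresponding suffix, so a strengthened induction hypothesis applies) and then pad all derivations to a common length using $\mathsf{True}\tran{\omega}\mathsf{True}$, as you already do implicitly for the $\wedge$ case. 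With that repair Part~1 goes through; Part~2 ($\Leftarrow$) needs the stratification argument, not the persistence invariant.
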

Essentially, If $\lambda$ is Release-free, we will
reach $\true$ after finitely many steps; If $\lambda$ is Until-free we
will reach a looping formula after finitely many steps. The B\"uchi
automaton for Release-free or Until-free formulas will be directly
obtained by equipping the LTS with the set of accepting states:

\begin{definition}[$\A_\lambda$ for Release/Until-free formulas]\label{def:rfreeautomaton}
  For a Release/Until-free formula $\lambda$, we define the B\"uchi automaton
  $\mathcal{A}_\lambda=(S, \Sigma, \delta , S_0, F)$ where $TS_{\lambda}=\langle \Sigma, S, \delta, S_0\rangle$. The set $F$ is defined by:
  $F=\{\true\}$ if $\lambda$ is Release-free while $F=S$ if $\lambda$ is Until-free.
\end{definition}

Notably, $\true$ is the only accepting state for $\A_\lambda$ when
$\lambda$ is Release-free while all the states are accepting ones if
it is Until-free.

\begin{theorem}[Correctness and Complexity]\label{bound}
  Assume $\lambda$ is $Until$-free or $Release$-free.
Then, for any sequence
  $\xi\in \Sigma^\omega$, it holds $\xi\vDash \lambda$ iff $\xi$ is accepted
  by $\mathcal{A}_{\lambda}$. Moreover,
  $\mathcal{A}_{\lambda}$ has at most $2^{n+1}$ states,
  where $n$ is the number of subformulas in $\lambda$.
\end{theorem}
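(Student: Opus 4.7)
The plan is to reduce both correctness and complexity to results already established. For correctness, I would peel off the equivalence direction by direction and reuse Lemma \ref{lem:releasefree}, which already translates satisfaction of $\lambda$ into the existence of a particular path through the LTS $TS_\lambda$. Since by Definition \ref{def:rfreeautomaton} the automaton $\A_\lambda$ shares its states, initial state, alphabet and transition relation with $TS_\lambda$, the only work left is to match up paths in $TS_\lambda$ with \emph{accepting} runs of $\A_\lambda$ under the two different choices of $F$. For complexity, I would simply note that $S = EF(\lambda)$ and invoke Theorem \ref{thm:expand:bounded}.

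For the Release-free case $F = \{\true\}$, the ``only if'' direction starts from $\xi \models \lambda$ and uses Lemma \ref{lem:releasefree}(1) to obtain $n$ with $\lambda \tran{\xi^n} \true$. This gives a finite run ending in $\true$; I would then observe that $\true$ has a self-loop compatible with every letter (by the expansion rule, $DNF(\true)$ reduces to $\true \wedge X(\true)$), so the finite run extends to an infinite one that visits $\true$ infinitely often. The ``if'' direction is even shorter: an accepting run must enter $\true$ at some finite stage $n$, giving $\lambda \tran{\xi^n} \true$, and Lemma \ref{lem:releasefree}(1) concludes $\xi \models \lambda$.

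For the Until-free case $F = S$, every infinite run is accepting, so I would show that $\xi \models \lambda$ is equivalent to the existence of an infinite run of $\A_\lambda$ on $\xi$. One direction comes from Lemma \ref{lem:releasefree}(2): given $\phi$ and $n$ with $\lambda \tran{\xi^n} \phi$ and $\phi \tran{\xi_n} \phi$, unfolding the second relation (as in Definition \ref{def:expandset}) produces the required infinite run. For the other direction, suppose an infinite run exists; since $EF(\lambda)$ is finite, some state $\phi$ occurs at infinitely many positions $n_0 < n_1 < \cdots$. Then $\lambda \tran{\xi^{n_0}} \phi$ and, writing $\xi_{n_0}$ as the concatenation of the non-empty segments $w_{n_i} \cdots w_{n_{i+1}-1}$, each of which loops $\phi$ back to $\phi$, we obtain $\phi \tran{\xi_{n_0}} \phi$, and Lemma \ref{lem:releasefree}(2) yields $\xi \models \lambda$.

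Finally, the size bound is immediate: $|S| = |EF(\lambda)| \leq 2^{n+1}$ by Theorem \ref{thm:expand:bounded}. The main obstacle is less in the logical structure than in the bookkeeping around $\true$ and the pigeonhole step. Specifically, I would need to pin down that $\true$ really does admit a universal self-loop in the LTS (so that accepting runs can be continued forever once $\true$ is reached), and that the segments used to realise $\phi \tran{\xi_{n_0}} \phi$ in the Until-free converse are genuinely non-empty and hence yield a legal decomposition in the sense of Definition \ref{def:expandset}. Both points are minor but worth stating explicitly so the reduction to Lemma \ref{lem:releasefree} is airtight.
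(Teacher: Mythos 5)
Your proposal is correct and follows essentially the same route as the paper: both reduce correctness to Lemma~\ref{lem:releasefree} by identifying runs of $\A_\lambda$ with expansion paths in $TS_\lambda$, and both obtain the bound directly from Theorem~\ref{thm:expand:bounded}. In fact you supply more detail than the paper's two-line argument does --- in particular the universal self-loop at $\true$ and the pigeonhole decomposition of $\xi_{n_0}$ into non-empty $\phi$-to-$\phi$ segments, which the paper leaves implicit --- and both points are sound.
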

\begin{proof}
The proof of the correctness is trivial according to Lemma~\ref{lem:releasefree}: 1) if $\lambda$ is Release-free, then every run $\xi$ of $\mathcal{A}_{\lambda}$ can run across the $\true$-state\footnote{In this paper we use $\phi$-state to denote the state representing the formula $\phi$.} infinitely often iff it satisfies $\exists n\geq 0\cdot\lambda\tran{\xi^n}\true$, that is, $\xi\vDash\lambda$; 2) if $\lambda$ is Until-free, then $\xi\vDash\lambda$ iff $\exists n, \phi\cdot\lambda\tran{\xi^n}\phi\wedge\phi\tran{\xi_n}\phi$, which will run across $\phi$-state infinitely often so that is accepted by $\mathcal{A}_{\lambda}$ according to the construction.

The upper bound is a direct consequence of Theorem \ref{thm:expand:bounded}.
\end{proof}

\subsection{Central Theorem for General Formulas}
In the previous section we have constructed B\"uchi automaton for
Release-free or Until-free formulas, which is obtained by equipping
the defined LTS with appropriate accepting states.  For general
formulas, this is however slightly involved. For instance, consider the
LTS of the formula $\phi=G(bUc\wedge dUe)$ in our running example:
there are infinitely many runs starting from the initial state $s_1$,
but which of them should be accepting?  Indeed, it is not obvious how
to identify the set of accepting states. In this section we present
our central theorem for general formulas aiming at identifying the
accepting runs.

\begin{figure}
\centering
  \includegraphics[scale = 0.5]{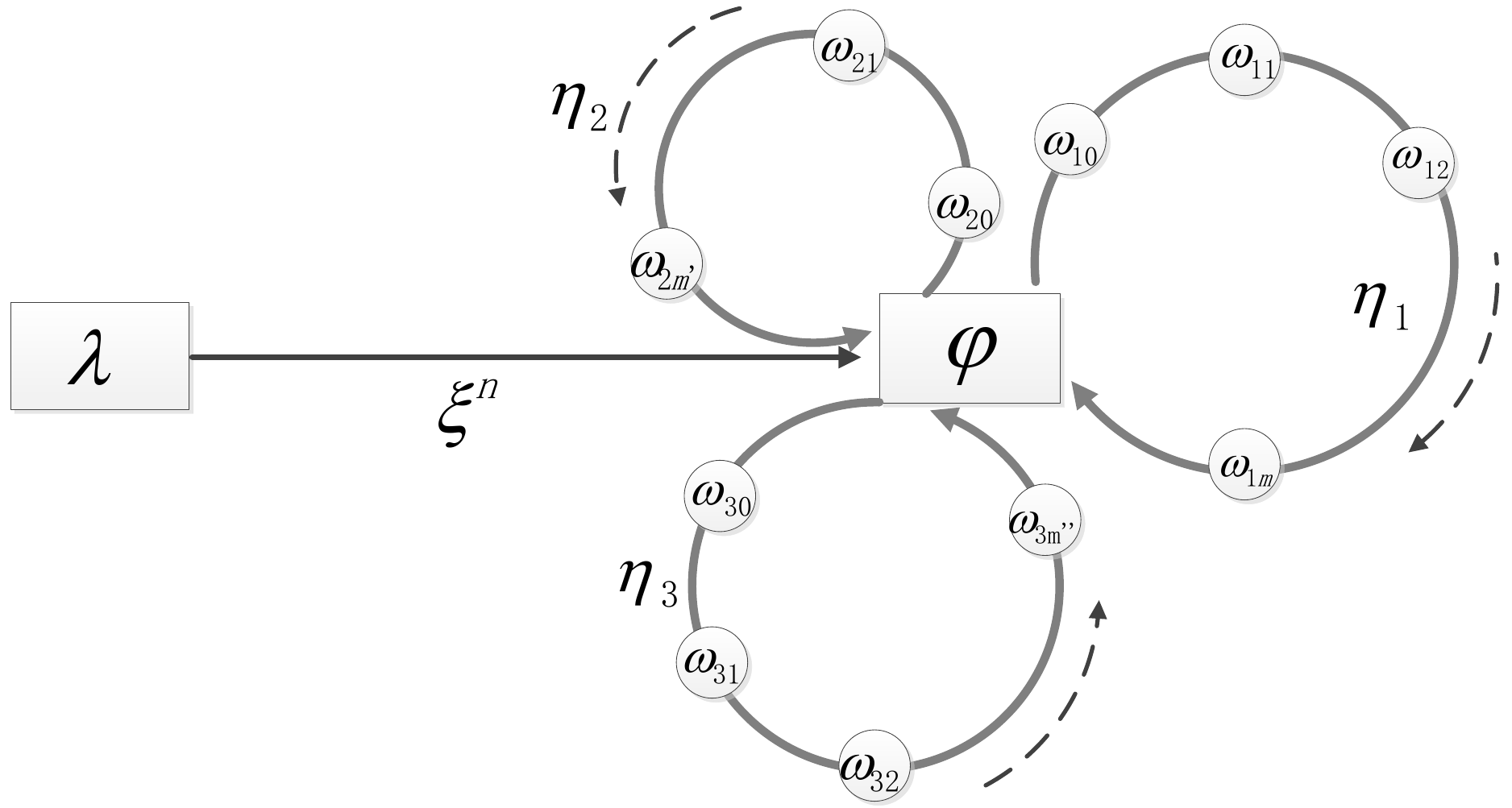}\\
  \caption{A snapshot illustrating the relation $\xi\models\lambda$}\label{fig:central_theorem}
\end{figure}

Assume the run $\xi=\omega_0\omega_1\ldots$ satisfies the formula
$\lambda$.  We refer to
$\lambda(=\phi_0)\tran{\alpha_0}\phi_1\tran{\alpha_1}\phi_2\ldots$ as
an expansion path from $\lambda$, which corresponds to a path in the
LTS $TS_\lambda$, but labelled with propositional formulas. Obviously, $\xi\models\lambda$ implies that there
exists an expansion path in $TS_\lambda$ such that
$\omega_i\models\alpha_i$ for all $i\ge 0$.  As the set $EF(\lambda)$
is finite, we can find a looping formula $\phi=\phi_i$ that occurs
\emph{infinitely often} along this expansion path. On the other side,
we can \emph{partition} the run $\xi$ into sequences
$\xi=\eta_0\eta_1\ldots$ such each finite sequence $\eta_i$ is
consistent with respect to one loop $\phi \hookrightarrow \phi$ along
the expansion path. This is illustrated in Figure
\ref{fig:central_theorem}. The definition below formalizes the notion
of consistency for finite sequence:

\begin{definition}\label{def:finitestepsat}
  Let $\eta=\omega_0\omega_1\ldots\omega_n$ ($n\geq 0$) be a finite sequence. Then, we say
that $\eta$ satisfies the LTL formula $\phi$, denoted
  by $\eta\models _f \phi$, if the following conditions are satisfied:
   \begin{itemize}
   \item there exists
     $\phi_0=\phi\tran{\alpha_0}\phi_1\tran{\alpha_1}\ldots\tran{\alpha_{n}}\phi_{n+1}
     =\psi$ such that $\omega_i\models\alpha_i$ for $0\le i \le n$,
     and with $S:=\bigcup_{0\leq j\leq n} CF(\alpha_j)$, it holds
        \begin{enumerate}
           \item if $\phi$ is a literal then $\phi\in S$ holds;
           \item if $\phi$ is $\phi_1U\phi_2$ or $\phi_1R\phi_2$ then $S\models_f\phi_2$ holds;
           \item if $\phi$ is $\phi_1\wedge \phi_2$ then $S\models_f\phi_1 \wedge S\models_f\phi_2$ holds;
           \item if $\phi$ is $\phi_1\vee \phi_2$ then $S\models_f\phi_1 \vee S\models_f\phi_2$ holds;
           \item if $\phi$ is $X \phi_2$ then $S\models_f\phi_2$ holds;
         \end{enumerate}
  \end{itemize}
\end{definition}

This predicate specifies whether the given finite sequence $\eta$ is
consistent with respect to the finite expansion
$\phi_0=\phi\tran{\alpha_0}\phi_1\tran{\alpha_1}\ldots\tran{\alpha_{n}}\phi_{n+1}
=\psi$. The condition $\omega_i\models\alpha_i$ requires that the
finite sequence $\eta$ is consistent with respect to the labels along
the finite expansion from $\phi_0$. The rules for literals and Boolean
connections are intuitive. For Until operator $\phi_1 U \phi_2$, it is
defined recursively by $S\models_f\phi_2$: as to make the Until
subformula being satisfied, we should make sure that $\phi_2$ holds
under $S$. Similar, for release operator $\phi_1 R\phi_2$, we know
that $\phi_1\wedge \phi_2$ or $\phi_2$ plays a key role in an
accepting run of $\phi_1 R \phi_2$. Because $\phi_1\wedge \phi_2$
implies $\phi_2$, and with the rule (4) in the definition, we have
$S\models_f\phi_1 R\phi_2 \equiv S\models_f\phi_2$. Assume
$\phi=X\phi_2$. As $\mathit{CF}(\mathsf{True})$ is defined as
$\emptyset$, we have $\eta\models_f \phi$ iff $\eta'\models_f\phi_2$
with $\eta'=\omega_1\omega_2\ldots\omega_n$.

The predicate $\models_f$ characterizes whether the prefix of an
accepting run contributes to the satisfiability of $\lambda$. The idea
comes from Corollary~\ref{coro:expand:existcycle}: Once $\phi$ is
expanded from itself infinitely by a run $\xi$ as well as
$\xi\models\phi$, there must be some common feature each time $\phi$
loops back to itself. This common feature is what we defined in
$\models_f$. In our running example, consider the finite sequence
$\eta=\{b,d\}\{b,d\}\{c,e\}$ corresponding to the path $s_1s_4s_4s_1$:
according to the definition $\eta\models_f\phi_1$ holds. For
$\eta=\{b,d\}\{b,d\}\{b,d\}$, however, $\eta\not\models_f\phi_1$.

With the notation $\models_f$, we study below properties for the
looping formulas, that will lead to our \textit{central theorem}.



\begin{lemma}[Soundness]\label{lemma:finitesat:infsat}
  Given a looping formula $\phi$ and an infinite word $\xi$, let $\xi
  =\eta_1\eta_2\ldots$. If $\forall i\geq 1\cdot
  \phi\tran{\eta_i}\phi\wedge \eta_i\models_f\phi$, then $\xi\vDash
  \phi$.
\end{lemma}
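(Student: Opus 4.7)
The plan is to proceed by structural induction on $\phi$. Since $\phi$ is looping, Corollary \ref{coro:expand:cycle} restricts attention to either $\phi = \mathsf{True}$ (trivial) or $CF(\phi)$ containing an Until/Release subformula; literals and $X$-formulas admit no self-loop in the DNF expansion, and disjunctions never arise as reachable states in $TS_\lambda$ because rule (5) of Lemma \ref{lemma:expansion} distributes $DNF$ over $\vee$. Thus the non-trivial inductive cases are conjunction, Until, and Release.

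For $\phi = \phi_1 \wedge \phi_2$, I would use the conjunction rule (6) of Lemma \ref{lemma:expansion} to factorize each transition $\phi \tran{\alpha} \psi$ into simultaneous transitions $\phi_1 \tran{\alpha_1} \psi_1$ and $\phi_2 \tran{\alpha_2} \psi_2$ with $\alpha = \alpha_1 \wedge \alpha_2$ and $\psi = \psi_1 \wedge \psi_2$. Since the conjunctive decomposition of the final state $\phi_1 \wedge \phi_2$ is unique, applying the factorization along the loop $\phi \tran{\eta_i} \phi$ yields $\phi_1 \tran{\eta_i} \phi_1$ and $\phi_2 \tran{\eta_i} \phi_2$. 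Rule (3) of Definition \ref{def:finitestepsat} then splits $\eta_i \models_f \phi$ into $\eta_i \models_f \phi_1$ and $\eta_i \models_f \phi_2$, so both components are looping and satisfy the premise. The induction hypothesis delivers $\xi \models \phi_1$ and $\xi \models \phi_2$, hence $\xi \models \phi$.

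For $\phi = \phi_1 U \phi_2$, I would show that a position within $\eta_1$ witnesses $\phi_2$ semantically, with $\phi_1$ holding at every earlier position. The expansion path of $\eta_1$ can only initially follow the $DNF(\phi_1 \wedge X\phi)$-branch from $\phi$, since the $DNF(\phi_2)$-branch lands in $EF(\phi_2)$, and $\phi \notin EF(\phi_2)$ as $\phi$ is not a subformula of $\phi_2$. Consequently each $\omega_j$ of $\eta_1$ satisfies a label derived from $\phi_1$'s DNF, ensuring local consistency with $\phi_1$. Rule (2) of Definition \ref{def:finitestepsat} gives $S_1 \models_f \phi_2$, and a nested induction on $\phi_2$'s structure then pinpoints a position $k_0$ in $\eta_1$ at which $\phi_2$'s essential literals are realized, yielding $\xi_{k_0} \models \phi_2$ while $\xi_l \models \phi_1$ for $l < k_0$. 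The Release case $\phi = \phi_1 R \phi_2$ is analogous and even simpler in spirit: every clause of $DNF(\phi_1 R \phi_2)$ forces $\phi_2$'s initial-step obligation locally, and together with the deeper obligations met in each $\eta_i$ via $S_i \models_f \phi_2$, a similar nested induction yields $\xi_k \models \phi_2$ at every $k \geq 0$, giving the $\phi_2$-forever branch of Release's semantics.

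The main obstacle will be the nested induction bridging $\models_f$ with genuine semantic satisfaction $\models$ in the Until and Release cases. The predicate $\models_f$ is defined recursively via literals appearing in an aggregated set $S$ of clause labels, which is an order-forgetting condition; translating it into pointwise semantic satisfaction at a specific position requires careful tracking of which clause in the expansion witnesses which literal at the base of each nested temporal operator. This bookkeeping becomes particularly delicate when $\phi_2$ itself contains nested Until/Release structure, and I expect one would need to strengthen the inductive claim — for instance by simultaneously reasoning about how sub-expansions of non-looping subformulas consume prefixes of $\eta_1$ — rather than induct purely on $\phi$'s outermost operator.
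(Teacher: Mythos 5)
Your structural induction breaks at the conjunction case, and the break is not a technicality. From $\phi\tran{\eta_i}\phi$ with $\phi=\phi_1\wedge\phi_2$ you cannot conclude $\phi_1\tran{\eta_i}\phi_1$ and $\phi_2\tran{\eta_i}\phi_2$: the factorization of rule (6) only tells you that $\phi$ is reached again as a \emph{set of conjuncts}, not that each conjunct returns to itself. Take $\phi=aUb\wedge G(aUb)$ and the one-step loop via the clause $(a\wedge b)\wedge X(\true\wedge aUb\wedge G(aUb))$: here the component $aUb$ expands to $\true$ while $G(aUb)$ expands to $aUb\wedge G(aUb)$, regenerating a fresh copy of $aUb$. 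Neither component loops, yet the conjunction does, and this regeneration pattern is the generic case (it is exactly what happens in the paper's running example). The paper's proof is built around precisely this phenomenon: Lemma~\ref{lemma:cycle:partialorder} stratifies $CF(\phi)$ into $S_0\subset\cdots\subset S_n=CF(\phi)$ so that conjuncts in $S_0$ either reach $\true$ or loop, and conjuncts in $S_i$ expand to formulas whose conjuncts lie in $S_{i-1}\cup\{\mu\}$; the induction is then over strata, not over the syntax of $\phi$.

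The second gap is the one you flag yourself but do not close: converting the order-forgetting certificate $S\models_f\phi_2$ into $\xi_{k_0}\models\phi_2$ at a concrete position. Since $S$ aggregates clause labels across a whole $\eta_i$, there is in general no single position realizing $\phi_2$ when $\phi_2$ has temporal structure, and "pinpointing $k_0$ where $\phi_2$'s essential literals are realized" does not yield $\xi_{k_0}\models\phi_2$ (a literal being true at $k_0$ says nothing about the future that $\phi_2$ may constrain). The paper never attempts this localization. Instead it shows (Lemma~\ref{lemma:until:finitesat}) that $\eta_i\models_f\phi$ forces every Until conjunct to \emph{discharge} within each loop iteration, i.e.\ to reach a successor $\mu'$ with $\mu\notin CF(\mu')$ and $CF(\mu')$ in a lower stratum; satisfaction of $\mu$ then follows from satisfaction of $\mu'$ by the suffix via the expansion--semantics correspondence (Lemma~\ref{lemma:dnf}), and the conjuncts that never discharge are shown to be Release formulas, for which looping alone already implies satisfaction (Lemma~\ref{lemma:release:sat}). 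Without some replacement for these two mechanisms -- the stratification and the discharge/Release dichotomy -- your Until and Release cases do not go through.
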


The soundness property of the looping formula says that if there
exists a partitioning $\xi = \eta_1\eta_2...$ such that $\phi$ expends
to itself by each $\eta_i$ and $\eta_i\models_f\phi$ holds, then
$\xi\models\phi$.

\begin{lemma}[Completeness]\label{lemma:completeness}
  Given a looping formula $\phi$ and an infinite word $\xi$, if $\phi\tran{\xi}\phi$ and $\xi\vDash\phi$ holds,
  then there exists a partitioning $\eta_1\eta_2\ldots$ for $\xi$, i.e.
  $\xi=\eta_1\eta_2\ldots$, such that for all $i\geq 0$, $\phi\tran{\eta_i}\phi\wedge \eta_i\models_f\phi$ holds.
\end{lemma}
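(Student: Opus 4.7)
The plan is to construct the required partition $\xi = \eta_1\eta_2\ldots$ inductively, by first fixing an auxiliary finer partition $\xi = \zeta_1\zeta_2\ldots$ witnessing $\phi\tran{\xi}\phi$ (so that $\phi\tran{\zeta_j}\phi$ for each $j$), and then merging consecutive blocks of $\zeta_j$'s into larger chunks that additionally satisfy $\models_f\phi$. A key preliminary observation is that, by Lemma~\ref{lemma:dnf}, if we chop off a prefix $\zeta_1\cdots\zeta_j$ then the resulting suffix $\xi'$ still satisfies $\phi$, because the expansion of $\phi$ along that prefix returns to $\phi$. This lets us invoke the hypotheses inductively on suffixes.

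Next, I would carry out the inductive step: assume $\eta_1,\ldots,\eta_{i-1}$ have been constructed so that $\eta_1\cdots\eta_{i-1} = \zeta_1\cdots\zeta_{j}$, and the remaining suffix $\xi' = \zeta_{j+1}\zeta_{j+2}\ldots$ satisfies $\phi$. Define $\eta_i := \zeta_{j+1}\cdots\zeta_{j+m}$ for the smallest $m \ge 1$ at which both $\phi\tran{\eta_i}\phi$ and $\eta_i\models_f\phi$ hold. The first requirement is free, since concatenating $\phi$-loops is again a $\phi$-loop, and yields a combined expansion path $\phi\tran{\alpha_0}\cdots\tran{\alpha_{n}}\phi$ covering $\eta_i$. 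The task reduces to bounding $m$ so that the auxiliary set $S := \bigcup_{0\le j\le n} CF(\alpha_j)$ satisfies the structural clauses of $\models_f\phi$.

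The heart of the argument is a structural induction on $\phi$ showing that, for any suffix $\xi'$ with $\xi'\models\phi$, some finite extension of $\zeta_{j+1}\cdots\zeta_{j+m}$ witnesses $\models_f\phi$. I would proceed by case analysis on the conjuncts of $CF(\phi)$. Literals in $CF(\phi)$ appear in every label of every clause of $DNF(\phi)$ by the rule for conjunctions in Lemma~\ref{lemma:expansion}, so they are automatically in $S$ for any $m\ge 1$. For a conjunct $X\psi_2$, we unfold one step and recurse on $\psi_2$ with the shifted suffix. For $\psi_1 U\psi_2$, the hypothesis $\xi'\models\psi_1U\psi_2$ supplies a finite position $p$ where $\psi_2$ must hold; extending $m$ past this position lets us apply the inductive hypothesis to $\psi_2$ at $p$ to obtain the nested witness $S\models_f\psi_2$. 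The release case $\psi_1 R\psi_2$ is similar, using the fact that $\psi_2$ holds at every position of $\xi'$. Since $\phi$ has only finitely many subformulas and each eventuality is witnessed after finitely many steps, combining the requirements produces the desired finite $m$.

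The main obstacle I expect is the coordination between the outer and inner inductions. The predicate $\models_f$ applies recursively, with inner calls feeding $S$ as a one-letter pseudo-word into $\models_f\psi_2$, so nested Until/Release/Next operators trigger a cascade of finite-witness requirements. The subtle part is showing that all witnesses can be simultaneously collected inside one common extension $\zeta_{j+1}\cdots\zeta_{j+m}$ rather than requiring conflicting cut points for different conjuncts — this rests on the fact that $\xi'\models\phi$ implies $\xi'$ satisfies each conjunct, so the ``next witness'' for every outstanding obligation always arrives in a finite prefix of $\xi'$, and one may safely take $m$ to be the maximum of all the (finitely many) indices at which the required witnesses first appear.
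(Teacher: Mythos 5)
Your proposal is correct and follows essentially the same route as the paper: a structural induction on $\phi$ (over the literal, Next, Until and Release cases) showing that $\xi'\models\phi$ yields a finite prefix satisfying $\models_f\phi$, then aligning that prefix with a return of the expansion to $\phi$ so the chunk also satisfies $\phi\tran{\eta_i}\phi$, and finally iterating on the remaining suffix, which still satisfies $\phi$ by Lemma~\ref{lemma:dnf}. The paper packages these as its Lemmas~\ref{lemma:complete:finiteexist} and~\ref{lemma:complete:recursive}, and your ``take $m$ to be the maximum of the finitely many witness indices'' step relies on the same (implicit) monotonicity of the accumulated label set $S$ under prefix extension that the paper uses.
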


The completeness property of the looping formula states the other
direction. If $\phi\tran{\xi}\phi$ as well as $\xi\models\phi$, we can find a
partitioning $\eta_1\eta_2\ldots$ that  makes $\phi$ expending to
itself by each $\eta_i$ and $\eta_i\models_f\phi$ holds.
Combining Lemma 6, Lemma 7 and Corollary 1, we have our central
theorem:

\begin{theorem}[Central Theorem]\label{thm:central}
   Given a formula $\lambda$ and an infinite word $\xi$,  we have
   \begin{align*}
         \xi\vDash \lambda \Leftrightarrow \exists \phi, n\cdot \lambda\tran{\xi^n}\phi\wedge \exists \xi_n = \eta_1\eta_2\ldots \cdot \forall i\geq 1\cdot \phi\tran{\eta_i}\phi\wedge \eta_i\models_f\phi
   \end{align*}

\end{theorem}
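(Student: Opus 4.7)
The plan is to prove both directions by direct appeal to the previously stated results: Corollary~\ref{coro:expand:existcycle}, Lemma~\ref{lemma:dnf}, and the soundness/completeness lemmas (Lemma~\ref{lemma:finitesat:infsat} and Lemma~\ref{lemma:completeness}) for looping formulas. The central theorem is essentially a repackaging of these facts into a single $\Leftrightarrow$ statement that drives the general construction, so the work lies in chaining them together cleanly rather than introducing new arguments.

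For the forward direction ($\xi\vDash\lambda \Rightarrow \ldots$), I would begin with Corollary~\ref{coro:expand:existcycle}: from $\xi\vDash\lambda$ I extract an $n\geq 1$ and a looping formula $\phi$ such that $\lambda\tran{\xi^n}\phi$, $\xi_n\vDash\phi$, and $\phi\tran{\xi_n}\phi$. The last two conjuncts are exactly the hypothesis of Lemma~\ref{lemma:completeness}, applied to $\phi$ and the suffix $\xi_n$; this yields a partitioning $\xi_n = \eta_1\eta_2\ldots$ with $\phi\tran{\eta_i}\phi$ and $\eta_i\models_f\phi$ for every $i\geq 1$. That is exactly the right-hand side.

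For the backward direction, I would start from a witness $\phi$, $n$, and partitioning $\xi_n = \eta_1\eta_2\ldots$ satisfying $\phi\tran{\eta_i}\phi$ and $\eta_i\models_f\phi$ for all $i\geq 1$. By Lemma~\ref{lemma:finitesat:infsat} (Soundness), $\xi_n\vDash\phi$. Since $\lambda\tran{\xi^n}\phi$ is assumed, I then invoke Lemma~\ref{lemma:dnf} with $\phi$ playing the role of the reached formula to conclude $\xi\vDash\lambda$.

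The main obstacle, conceptually, was the completeness lemma itself, but since we may rely on it as stated, the remaining task is bookkeeping: I must be careful that the quantifier on $n$ matches the statement (the existential on $n$ in Corollary~\ref{coro:expand:existcycle} is exactly what we need here) and that the partitioning returned by Lemma~\ref{lemma:completeness} is applied to $\xi_n$ rather than $\xi$, so that its concatenation with the prefix $\xi^n$ reproduces the original run. One small check worth performing is the boundary case $n=0$: Lemma~\ref{lemma:dnf} is stated for $n\geq 1$, but if the partitioning begins at the very start we take $n=1$ after one expansion step, and all later arguments go through verbatim. With these pieces in place the theorem follows immediately.
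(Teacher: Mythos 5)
Your proof is correct and follows essentially the same route as the paper: the forward direction chains Corollary~\ref{coro:expand:existcycle} with Lemma~\ref{lemma:completeness}, and the backward direction combines Lemma~\ref{lemma:finitesat:infsat} with the second half of Corollary~\ref{coro:expand:existcycle} (which you cite via its source, Lemma~\ref{lemma:dnf} --- an immaterial difference). Your bookkeeping remarks about the quantifier on $n$ and applying the partitioning to $\xi_n$ are sound and, if anything, more careful than the paper's two-line proof.
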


The central theorem states that given a formula $\lambda$, we 
can always extend it to a looping formula which satisfies the 
soundness and completeness properties. Reconsider
Figure~\ref{fig:central_theorem}: formula $\lambda$ extends to the
looping formula $\phi$ by $\xi^n$, and $\xi_n$ can be partitioned into sequences $\eta_1\eta_2\ldots$. The loops from $\phi$  correspond to these finite sequences $\eta_i$ in the sense $\eta_i\models_f\phi$.

\subsection{B\"uchi automata for General Formulas}\label{sec:automationgeneration}
Our central theorem sheds insights about the correspondence between
the accepting run and the expansion path from $\lambda$. However, how
can we guarantee the predicate $\models_f$ for looping formulas in the theorem?  We need
the last ingredient for starting our automaton construction: we
extract the \emph{obligation sets} from LTL formulas that will enable
us to characterize $\models_f$.

\begin{definition}\label{def:obligationset}
  Given a formula $\phi$, we define its obligation set, i.e. $OS_{\phi}$, as follows:
  \begin{enumerate}
    \item If $\phi=p$, $OS_{\phi}=\{ \{p\}\}$;
    \item If $\phi=X\psi$, $OS_{\phi}=OS_{\psi}$;
    \item If $\phi=\psi_1\vee\psi_2$, $OS_{\phi}=OS_{\psi_1}\cup OS_{\psi_2}$;
    \item If $\phi=\psi_1\wedge\psi_2$, $OS_{\phi}=\{S_1\cup S_2 \mid S_1\in OS_{\psi_1}\wedge S_2\in OS_{\psi_2}\}$;
    \item If $\phi=\psi_1 U\psi_2$ or $\psi_1 R\psi_2$, $OS_{\phi}=OS_{\psi_2}$;
  \end{enumerate}
  For every element set $O\in OS_{\phi}$, we call it the obligation of $\phi$.
\end{definition}

The obligation set provides all obligations (elements in obligation
set) the given formula is supposed to have. Intuitively, a run $\xi$
accepts a formula $\phi$ if $\xi$ can eliminate the obligations of
$\phi$. Take the example of $G(a R b)$, the run $(b)^{\omega}$
accepts $a R b$, and the run eliminates the obligation set $\{\{b\}\}$
infinitely often.

Notice the similarity of the definition of the obligation set and the
predicate $\models_f$.  For instance, the obligation set of ${\phi_1 R
  \phi_2}$ is the obligation set of $\phi_2$, which is similar in the
definition of $\models_f$.  The interesting rule is the conjunctive
one. For obligation set $OS_{\phi}$, there may be more than one
element in $OS_{\phi}$.  However, from the view of satisfiability, if
one obligation in $OS_{\phi}$ is satisfied, we can say the obligations
of $\phi$ is fulfilled. This view leads to the definition of the
conjunctive rule. For $\psi_1\wedge\psi_2$, we need to fulfill the
obligations from both $\psi_1$ and $\psi_2$, which means we have to
trace all possible unions from the elements of $OS_{\psi_1}$ and
$OS_{\psi_2}$. For instance, the obligation set of $G(a U b \wedge c U
(d\vee e))$ is $\{\{b, d\}, \{b, e\}\}$.
The following lemmas gives the
relationship of $\models_f$ and \textit{obligation set}.

\begin{lemma}\label{lemma:obligaionandsatonce}
  For all $O\in OS_{\phi}$, it holds $O\models_f\phi$. On the other side,
$S\models_f\phi$ implies that $\exists O\in OS_{\phi}\cdot O\subseteq S$.
\end{lemma}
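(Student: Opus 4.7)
The plan is to prove both implications by structural induction on the formula $\phi$, exploiting the fact that the recursive clauses of Definition \ref{def:obligationset} and the clauses (1)--(5) of Definition \ref{def:finitestepsat} are organized by matching syntactic patterns. In each case the obligation $O$ is treated as the singleton sequence $\eta = O$, and the witness expansion path required by $\models_f$ is built from a specific clause of $DNF(\phi)$.

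For the forward direction, the base case $\phi = p$ is immediate: the only obligation is $\{p\}$, and using the clause $p \wedge X(\true) \in DNF(p)$ gives $S = \{p\}$, so clause (1) of $\models_f$ is satisfied. For $\phi = \psi_1 \vee \psi_2$, $\phi = \psi_1 U \psi_2$, $\phi = \psi_1 R \psi_2$, and $\phi = X \psi$, the obligation sets $OS_\phi$ coincide with (or are a union of) the obligation sets of the relevant subformulas, so the induction hypothesis, combined with clauses (4), (2), (5) respectively and a matching clause from $DNF(\phi)$, closes the case. The subtle case is conjunction: given $O = S_1 \cup S_2$ with $S_i \in OS_{\psi_i}$, the induction hypothesis only delivers $S_i \models_f \psi_i$, whereas clause (3) demands $(S_1 \cup S_2) \models_f \psi_1$ and $(S_1 \cup S_2) \models_f \psi_2$. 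To close this gap I will first prove a monotonicity auxiliary lemma: if $S \models_f \phi$ and $S \subseteq S'$, then $S' \models_f \phi$, by a straightforward structural induction using that the expansion path witnessing $S \models_f \phi$ remains valid when the length-one input symbol is enlarged (propositional satisfaction is monotone, and every recursive subcall is invoked on $S$, which is monotone in the input).

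For the reverse direction, I again induct on $\phi$. The case $\phi = p$ is clause (1), which forces $p \in S$, so $O = \{p\}$ works. The $X$, $U$, and $R$ cases extract a recursive premise $S \models_f \psi'$ from clauses (5) and (2), and the induction hypothesis produces an $O \in OS_{\psi'} = OS_\phi$ with $O \subseteq S$. The disjunctive case selects whichever disjunct makes clause (4) hold and applies the induction hypothesis, noting $OS_\phi = OS_{\psi_1} \cup OS_{\psi_2}$. The conjunctive case uses clause (3) to get $S \models_f \psi_i$ for both $i$, invokes the induction hypothesis to obtain $O_i \in OS_{\psi_i}$ with $O_i \subseteq S$, and then observes $O_1 \cup O_2 \in OS_{\psi_1 \wedge \psi_2}$ and $O_1 \cup O_2 \subseteq S$.

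The main obstacle is the conjunctive case of the forward direction, which genuinely needs the monotonicity auxiliary lemma; once that is in place, the rest of the induction is a routine case-by-case matching of the two definitions. A second, smaller wrinkle is the $X\psi$ case, where the leading transition in $DNF(X\psi)$ has label $\true$ and contributes $CF(\true) = \emptyset$ to the accumulated set $S$, so one must be careful to use a two-step expansion path (first via $\true$, then using the path supplied by the inductive hypothesis on $\psi$) rather than a single-step one.
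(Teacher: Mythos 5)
Your proposal is correct and follows essentially the same route as the paper: a structural induction on $\phi$ for each direction, matching the clauses of the obligation-set definition against the corresponding clauses of $\models_f$, with the conjunctive case handled via unions of obligations exactly as in the paper's appendix. The one genuine addition is your explicit monotonicity lemma ($S\models_f\phi$ and $S\subseteq S'$ imply $S'\models_f\phi$), which the paper silently assumes when it passes from $S_i\models_f\psi_i$ to $S_1\cup S_2\models_f\psi_i$ in the conjunctive case --- making that step explicit is a small but real improvement over the published argument.
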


For our input formula $\lambda$, now we discuss how to construct the
B\"uchi automaton $\A_\lambda$.  We first describe the
states of the automaton. A state will be consisting of the formula
$\phi$ and a
\emph{process set} that keeps track of properties have been satisfied so far. Formally:

\begin{definition}[states of the automaton for $\lambda$]
  A state is a tuple $\langle\phi,P\rangle$ where $\phi$ is a formula from $EF(\lambda)$, and $P\subseteq AP$
  is a \textit{process set}.
\end{definition}

Refer again to Figure \ref{fig:central_theorem}: reading the input finite
sequence $\eta_1$, each element in the process set $P_i$ corresponds
to a property set belonging to $AP$, which will be used to keep track whether all
elements in an obligation are met upon returning back to a $\phi$-state. If we
have $P_i=\emptyset$, we have successfully returned to the accepting
states. Now we have all ingredients for constructing our B\"uchi
automaton $\A_\lambda$:

\begin{definition}[B\"uchi Automaton $\A_\lambda$]
  The B\"uchi automaton for the formula $\lambda$ is defined as
  $\mathcal{A}_{\lambda}=(\Sigma, S, \delta, S_0, \mathcal{F})$, where
  $\Sigma = 2^{AP}$ and:
  \begin{itemize}
  \item $S=\{\langle\phi,P\rangle\mid \phi\in EF(\lambda)\}$ is the set of states;
  \item $S_0=\{\langle \lambda,\emptyset\rangle\}$ is the set of initial states;
  \item $\mathcal{F} =\{\langle\phi,\emptyset\rangle\mid\phi\in EF(\lambda)\}$ is the set of accepting states;
  \item Let states $s_1,s_2$ with $s_1=\langle\phi_1,P_1\rangle$, $s_2=\langle\phi_2,P_2\rangle$ and
    $w\subseteq 2^{AP}$. Then, $s_2\in \delta(s_1, \omega)$ iff there exists
    $\phi_1\tran{\alpha}\phi_2$ with $\omega\models\alpha$ such that the corresponding $P_2$ is
    updated by:
    \begin{enumerate}
    \item $P_2 = \emptyset$ if $\exists O\in OS_{\phi_2}\cdot O\subseteq P_1\cup CF(\alpha)$,
  \item $P_2 = P_1 \cup CF(\alpha)$ otherwise.
    \end{enumerate}
\end{itemize}
\end{definition}

The transition is determined by the expansion relation
$\phi_1\tran{\alpha}\phi_2$ such that $\omega\models\alpha$. The process
set $P_2$ is updated by $P_1\cup CF(\alpha)$ unless there is no element set $O\in OS_{\phi_2}$ such that $P_1\cup CF(\alpha)\supseteq O$.
In that case $P_2$ will be set to $\emptyset$ and the corresponding state will be recognized as an accepting one.

Now we state the correctness of our construction:

\begin{theorem}[Correctness of Automata Generation]\label{thm:correct}
  Let $\lambda$ be the input formula. Then, for any sequence
  $\xi\in \Sigma^\omega$, it holds $\xi\vDash \lambda$ iff $\xi$ is accepted
  by $\mathcal{A}_{\lambda}$.
\end{theorem}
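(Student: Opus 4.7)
The plan is to reduce the correctness proof to the Central Theorem (Theorem~\ref{thm:central}) via the obligation-set characterization in Lemma~\ref{lemma:obligaionandsatonce}. The key observation is that the reset rule in the construction of $\mathcal{A}_\lambda$ is engineered so that visits to accepting states $\langle \phi, \emptyset\rangle$ correspond precisely to finishing a finite segment that verifies $\models_f \phi$: the process set $P$ accumulates exactly the propositions $CF(\alpha)$ read along the current segment, and it is reset when these accumulated propositions cover some obligation $O \in OS_\phi$, which by Lemma~\ref{lemma:obligaionandsatonce} is equivalent to satisfying $\models_f\phi$ on that segment.

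\textbf{Forward direction} ($\xi\models\lambda \Rightarrow \xi$ accepted). Apply the Central Theorem to obtain $n$, a looping formula $\phi\in EF(\lambda)$, and a partition $\xi_n=\eta_1\eta_2\ldots$ with $\lambda\tran{\xi^n}\phi$ and $\phi\tran{\eta_i}\phi$, $\eta_i\models_f\phi$ for every $i\ge 1$. I would use these expansions to build a run of $\mathcal{A}_\lambda$ whose formula components follow the exact expansion path from the Central Theorem, starting from $\langle\lambda,\emptyset\rangle$. For each block $\eta_i$, Lemma~\ref{lemma:obligaionandsatonce} gives some $O\in OS_\phi$ with $O\subseteq S(\eta_i)$, where $S(\eta_i)=\bigcup_j CF(\alpha_j)$ is the cumulative literal set along the chosen expansion of $\eta_i$. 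Since the process set grows monotonically between resets, at some step within $\eta_i$ (at the latest, its final step) the condition $P_1\cup CF(\alpha)\supseteq O$ is triggered and the reset rule produces $P_2=\emptyset$, i.e.\ an accepting state. Hence accepting states are visited at least once per block, thus infinitely often.

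\textbf{Backward direction} (accepted $\Rightarrow \xi\models\lambda$). Given an accepting run $\langle\psi_0,P_0\rangle\langle\psi_1,P_1\rangle\ldots$ with $\psi_0=\lambda$, $P_0=\emptyset$, list the instants $k_0<k_1<\ldots$ at which $P_{k_j}=\emptyset$. Since $EF(\lambda)$ is finite, by the pigeonhole principle there is some $\phi\in EF(\lambda)$ with $\psi_{k_j}=\phi$ for infinitely many $j$; restricting to this subsequence yields times $m_0<m_1<\ldots$. Let $n:=m_0$ and $\eta_j:=\omega_{m_{j-1}}\cdots\omega_{m_j-1}$, so that $\xi_n=\eta_1\eta_2\ldots$ and the expansions read off the run give $\lambda\tran{\xi^n}\phi$ and $\phi\tran{\eta_j}\phi$. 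For each $\eta_j$, the reset that produced $P_{m_j}=\emptyset$ requires $P_{m_j-1}\cup CF(\alpha_{m_j-1})\supseteq O$ for some $O\in OS_\phi$; since every element of this union was added since the previous (possibly intermediate) reset within $\eta_j$, we have $S(\eta_j)\supseteq O$. Combining Lemma~\ref{lemma:obligaionandsatonce} (which gives $O\models_f\phi$) with a straightforward monotonicity argument on Definition~\ref{def:finitestepsat} yields $S(\eta_j)\models_f\phi$ and hence $\eta_j\models_f\phi$. The Central Theorem then delivers $\xi\models\lambda$.

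\textbf{Main obstacle.} The subtle step is the backward direction, where two issues must be handled with care: (i) the pigeonhole step is essential because an accepting visit only pins down $P=\emptyset$ but not the formula component, yet the Central Theorem requires all blocks to loop back to the \emph{same} $\phi$; (ii) we must establish monotonicity of $\models_f$ in its set argument, i.e.\ $O\models_f\phi$ and $O\subseteq S$ implies $S\models_f\phi$, which is proved by a routine induction on $\phi$ using the rules of Definition~\ref{def:finitestepsat} but is not stated explicitly as a lemma in the paper. The rest of the argument is essentially bookkeeping, making sure the expansion chosen along $\xi$ in the LTS $TS_\lambda$ is consistent with the one witnessed by the accepting run of $\mathcal{A}_\lambda$.
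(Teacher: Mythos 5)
Your proof is correct and follows essentially the same route as the paper's: both directions reduce the statement to the Central Theorem (Theorem~\ref{thm:central}) via the correspondence between accepting runs of $\mathcal{A}_\lambda$ and expansion paths, together with Lemma~\ref{lemma:obligaionandsatonce} linking obligation sets to $\models_f$. You are in fact somewhat more explicit than the paper about the pigeonhole step (fixing a single recurring accepting state $\langle\phi,\emptyset\rangle$), the handling of intermediate resets within a block, and the monotonicity of $\models_f$ in its set argument, all of which the paper's proof uses only implicitly.
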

The correctness follows mainly from the fact that our construction
strictly adheres to our central theorem (Theorem
\ref{thm:central}).

We note that two very simple optimizations can be identified for our
construction:
\begin{itemize}
\item If two states have the same DNF normal form and the same
process set $P$, they are identical. Precisely, we merge
states $s_1=\langle \phi_1, P_1\rangle$ and $s_2=\langle \phi_2,
P_2\rangle$ if $DNF(\phi_1)=DNF(\phi_2)$, and $P_1=P_2$;
\item The elements in the process set $P$ can be restricted into those
  atomic propositions appearing in $OS_{\phi}$: Recall here $\phi\in
  EF(\lambda)$. One can observe directly that only those properties
  are used for checking the \textit{obligation} conditions, while
  others will not be used so that it can be omitted in the process
  set $P$.
\end{itemize}
Now we can finally explain a final detail of our running example:

\begin{example}
  In our running example state $s_1$ is the accepting state of the
  automaton. It should be mentioned that the state $s_2$ =
  $\langle\phi_2,\{e\}\rangle$
  originally has an edge labeling $c\wedge d$ to the state
  $\langle\phi_3,\emptyset\rangle$ according to our construction, which is a
  new state.  However, this state is equivalent with $s_1=\langle
  \phi_1,\emptyset\rangle$, as $\phi_1$
  and $\phi_3$ have the same DNF normal form. So these two states are
  merged. The same cases occur on state $s_3$ to state $s_1$ with the
  edge labeling $b\wedge e$, state $s_2$ to state $s_2$ with the edge
  labeling $b\wedge d$ and etc. After merging these states, we have
  the automaton as depicted in Figure \ref{fig:buechi}.
\end{example}

\begin{theorem}[Complexity]\label{generalbound}
  Let $\lambda$ be the input formula. Then the B\"uchi automaton
  $\mathcal{A}_{\lambda}$ has the upper bound $2^{2n+1}$, where $n$ is
  the number of subformulas in $\lambda$.
\end{theorem}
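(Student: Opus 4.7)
The plan is to bound the two components of a state $\langle \phi, P\rangle$ independently and then multiply. The formula component $\phi$ ranges over $EF(\lambda)$, whose size is already controlled by Theorem~\ref{thm:expand:bounded} as $|EF(\lambda)|\leq 2^{n+1}$. The process-set component $P$ is, by definition, a subset of $AP$, so its number of possible values is at most $2^{|AP|}$. Hence the total number of reachable states is bounded by $|EF(\lambda)|\cdot 2^{|AP|}$.

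The first step is therefore to invoke Theorem~\ref{thm:expand:bounded} directly. The second step is to argue that $|AP|\leq n$: recall from Section~\ref{sec:pre} that $AP$ is the set of literals actually occurring in $\lambda$, and since every literal of $\lambda$ is itself a subformula of $\lambda$ (we work in NNF so negations sit at the literal level), the number of distinct literals cannot exceed the number of subformulas $n$. Combining these two bounds gives
\begin{equation*}
|S|\;\leq\;|EF(\lambda)|\cdot 2^{|AP|}\;\leq\;2^{n+1}\cdot 2^{n}\;=\;2^{2n+1},
\end{equation*}
which is the claimed upper bound.

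I do not expect a real obstacle here; the only place where one must be a bit careful is the counting of literals versus subformulas. In principle an atomic proposition $a$ and its negation $\neg a$ could both occur and would be counted as two literals, but in NNF each of them is still its own subformula, so the bound $|AP|\leq n$ is preserved. Note also that the stronger optimization mentioned right after Theorem~\ref{thm:correct} (restricting $P$ to propositions appearing in $OS_\phi$) is not needed for the $2^{2n+1}$ bound; it only tightens the constant in practice. Therefore the argument reduces to plugging the two previously established bounds together, and no further structural analysis of the automaton is required.
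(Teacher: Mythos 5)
Your proof is correct and follows essentially the same route as the paper: the paper's own justification is precisely the product bound $|EF(\lambda)|\cdot 2^{|AP|}\leq 2^{n+1}\cdot 2^{n}=2^{2n+1}$, with the first factor coming from Theorem~\ref{thm:expand:bounded} and the second from $|AP|\leq n$. Your extra remark about counting literals versus subformulas in NNF is a sensible precaution but does not change the argument.
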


The number of states is bounded by $2^{n+1}\cdot 2^{|AP|} \le
2^{2n+1}$. Recall in the construction $AP$ is the set of atomic
prepositions appearing in $\lambda$, thus $|AP|$ is much smaller than
$n$ in general.  We remark that the first part $2^{n+1}$ is much
smaller in practice due to equivalent DNF representations. Indeed, it
can be reduced to $2^{dnf(\lambda)+1}$ where $dnf(\lambda)$ denotes the number of equivalence   classes of $EF(\lambda)$ induced by
equivalent DNF representations. 
In our running example, all of the
formulas have the same DNF normal form, thus this part is equal to
$2^{1+1}=4$. On the other side, the second part $2^{|AP|}$ can be
further reduced to the set of atomic propositions that appear in the
obligation sets: in our running example this is $|\{c,e\}|$.
\section{Discussion}\label{discussion}
In this section, we discuss the relationship and differences between
our proposed approach and the tableau construction.

Generally speaking, our approach is essentially a tableau one that
is based on the expansion laws of $Until$ and $Release$ operators. The
interesting aspect of our approach is the finding of a special
normal formal with its DNF-based labeled transition system, which is closely
related to the B\"uchi automaton under construction. The tableau
approach explicitly expands the formula recursively based on the
semantics of LTL formulas while the nodes of the potential automaton
are split until no new node is generated. However, our approach
first studies the LTL normal forms to discover the obligations we
have to fulfill for the automaton to be generated, and then presents
a simple mapping between LTL formulas into B\"uchi automata.

The insight behind our approach is adopting a different view on the
accepting conditions.  The tableau approach focuses on the
$Until$-operator. For instance, to decide the accepting states, the
tableau approach needs to trace all the $Until$-subformulas and records
the ``eventuality'' of $\psi$ in $\phi U \psi$, which leads to the
introduction of the \textit{Generalized B\"{u}chi Automata} (GBA) in
tableau approach. However, our approach focuses on the
\textit{looping formulas}, which potentially consist of the
accepting states. Intuitively, an infinite sequence (word) will
satisfy the formula $\lambda$ iff $\lambda$ can expand to some
looping formula $\phi$ which can be satisfied by the suffix of the word
removing the finite sequence arriving at $\phi$. 
The key point of our approach is to introduce the static obligation
set for each formula in the DNF-based labeled transition system, which
indicates that an accepting run is supposed to infinitely fulfil one
of the obligations in the obligation set.  Thus, the obligation set
gives the "invariability" for general formulas instead of the
``eventuality'' for $Until$-formulas. 
In the approach,  we use a process set to record the obligation that
formula $\phi$ has been satisfied from its last appearance. Then, we
would decide the accepting states easily when the process set
fulfills one obligation in the  obligation set of $\phi$ (We reset
it empty afterwards). One can also note our approach is  on-the-fly:
the successors of the current state can be obtained as soon as its
DNF normal form is acquired.

The most interesting part is that, our approach can give a more precise theoretical upper bound
for the complexity of the translation when comparing to the tableau framework (Theorem \ref{generalbound}). And a
better one can be acquired when the formulas are restricted into Release-free or Until-free (Theorem \ref{bound}).

\section{Conclusion}\label{conclusion}

In this paper, we propose the \textit{disjunctive-normal forms}  for  LTL
formulas. Based on the DNF representation, we introduce the
DNF-based labeled transition system for formula $\lambda$ and study the
relationship between the transition system and the B\"uchi automata
for $\lambda$. Thus, a simple but on-the-fly automata construction is
achieved. When the formula under construction is
Release/Until-free, our construction is very straightforward in
theory, and leads to at most $2^{n+1}$ states. In the general
way, our approach gives a more precise bound of $2^{2n+1}$ compared
to the one of $2^{O(n)}$ for tableau construction.

\newpage
\appendix
\section{Proofs}
\subsection{Proof of Theorem \ref{thm:transform}}
Let $\phi$ be a formula $\varphi =
\bigvee_{i\in I} \varphi_i$ such that the root operator of
$\varphi_i$ is not a disjunctive: then we define the disjunctive
formula set as $DF(\varphi):=\{\varphi_i \mid i\in I\}$.  When
$\varphi$ does not include a disjunctive as a root operator,
$DF(\varphi)$ only include $\varphi$ itself.
\begin{proof}
  We first can directly use the rules in Lemma~\ref{lemma:expansion} to generate an intermediate normal form for $\phi$, whose format is $\bigvee _i (\alpha_i\wedge X \varphi_i)$ where $\alpha_i$ is an propositional formula and $\varphi_i$ is an LTL formula without any constraint in Definition~\ref{def:dnf}. We denote the set of this intermediate normal form of the formula $\phi$ as $DNF_1(\phi)$;

  Second we prove any intermediate normal form can be changed to the \\\textit{disjunctive-normal form}. Intuitively, one can easily find for each $\alpha_i$ and $\varphi_i$ the corresponding $DF(\alpha_i)$ and $DF(\varphi_i)$ can be obtained trivially. Then we can get the final \emph{disjunctive-normal form} through the following two steps:
  \begin{enumerate}
    \item $DNF_2(\phi)=\{\alpha_i\wedge X\psi\mid \alpha\wedge X\psi\in DNF_1(\phi)\wedge\alpha_i\in DF(\alpha)\}$;
    \item $DNF(\phi)=\{\alpha\wedge X\psi_i\mid\alpha\wedge X\psi\in DNF_2(\phi)\wedge\psi_i\in DF(\psi)\}$.
  \end{enumerate}
\end{proof}

\subsection{Proof of Theorem \ref{thm:expand:bounded}}
Let $n$ be the number of subformulas in $\lambda$. Moreover, let
$cl(\lambda)$ be the set of subformulas in $\lambda$ and
$\mathsf{True}$, so obviously $|cl(\lambda)|=n+1$. Before the proof we
introduce two lemmas first.
\begin{lemma}\label{lemma:dnf:finite}
  Let $\alpha\wedge X\psi\in DNF(\varphi)$, then $CF(\psi)\subseteq cl(\varphi)$.
\end{lemma}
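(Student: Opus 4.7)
The plan is to prove this by structural induction on $\varphi$, following exactly the case split of the expansion rules in Lemma \ref{lemma:expansion}. The inductive hypothesis will state: for every subformula $\varphi'$ of $\varphi$ and every clause $\alpha' \wedge X\psi' \in DNF(\varphi')$, we have $CF(\psi') \subseteq cl(\varphi')$.

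For the two base cases, namely $\varphi$ a literal and $\varphi = X\phi$, the single-clause form of $DNF(\varphi)$ given by rules (1) and (2) of Lemma \ref{lemma:expansion} makes the claim immediate: in the literal case $\psi = \mathsf{True}$ and $CF(\mathsf{True}) = \emptyset$, and in the $X\phi$ case $\psi = \phi$, so $CF(\psi) = CF(\phi) \subseteq cl(\phi) \subseteq cl(X\phi)$. The disjunction case $\varphi = \phi_1 \vee \phi_2$ follows by partitioning $DNF(\varphi) = DNF(\phi_1) \cup DNF(\phi_2)$ and applying the inductive hypothesis on each side, noting $cl(\phi_1) \cup cl(\phi_2) \subseteq cl(\varphi)$. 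The conjunction case $\varphi = \phi_1 \wedge \phi_2$ will use that every clause in $DNF(\varphi)$ has its $X$-part of the form $\psi_1 \wedge \psi_2$ with $\alpha_i \wedge X\psi_i \in DNF(\phi_i)$; since $CF$ distributes over the outermost conjunctions we get $CF(\psi_1 \wedge \psi_2) = CF(\psi_1) \cup CF(\psi_2)$, and the induction hypothesis lands both pieces inside $cl(\phi_1) \cup cl(\phi_2) \subseteq cl(\varphi)$.

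The main obstacle, as expected, is the Until case (and, symmetrically, the Release case), because there $\varphi$ appears inside its own DNF expansion. For $\varphi = \phi_1 U \phi_2$, rule (3) gives $DNF(\varphi) = DNF(\phi_2) \cup DNF(\phi_1 \wedge X(\phi_1 U \phi_2))$. For clauses coming from $DNF(\phi_2)$, the inductive hypothesis on $\phi_2$ suffices, since $cl(\phi_2) \subseteq cl(\varphi)$. For clauses coming from $DNF(\phi_1 \wedge X(\phi_1 U \phi_2))$, I will apply the conjunction rule to rewrite such a clause as $(\alpha_1 \wedge \alpha_2) \wedge X(\psi_1 \wedge \psi_2)$ where $\alpha_1 \wedge X\psi_1 \in DNF(\phi_1)$ and $\alpha_2 \wedge X\psi_2 \in DNF(X(\phi_1 U \phi_2))$. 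By rule (2) the second factor forces $\psi_2 = \phi_1 U \phi_2 = \varphi$, whose root is not a conjunction, so $CF(\psi_2) = \{\varphi\} \subseteq cl(\varphi)$. The inductive hypothesis on $\phi_1$ yields $CF(\psi_1) \subseteq cl(\phi_1) \subseteq cl(\varphi)$, and then $CF(\psi_1 \wedge \psi_2) = CF(\psi_1) \cup CF(\psi_2) \subseteq cl(\varphi)$ closes the case. The Release case $\varphi = \phi_1 R \phi_2$ uses rule (4) and is handled analogously, combining the inductive hypothesis on $\phi_1 \wedge \phi_2$ (whose closure is already contained in $cl(\varphi)$) with the observation that the recursive occurrence of $\phi_1 R \phi_2$ appears only under an $X$, so it contributes itself as the sole element of $CF$.

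The only subtle point worth writing out carefully is that $cl(\cdot)$ by construction includes $\varphi$ itself together with $\mathsf{True}$, so the self-reference in the Until/Release expansions is covered; everything else is a routine bookkeeping of set inclusions. No new tools beyond Lemma \ref{lemma:expansion}, the definition of $CF$, and the convention $CF(\mathsf{True}) = \emptyset$ are needed.
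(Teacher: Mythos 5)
Your proof is correct and follows essentially the same route as the paper's: structural induction on $\varphi$ over the expansion rules of Lemma~\ref{lemma:expansion}, using $CF(\psi_1\wedge\psi_2)=CF(\psi_1)\cup CF(\psi_2)$ for the conjunction case and handling the self-referential occurrence of $\varphi$ in the Until/Release expansions by noting it lands in $cl(\varphi)$ directly. Your version is merely slightly more explicit in routing the Until/Release clauses through the conjunction and Next rules, which is a presentational rather than substantive difference.
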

\begin{proof}
  We prove it by structural induction over $\phi$.
  \begin{itemize}
    \item Basic step: If $\varphi$ is the case of the literal $p$, then since $p = p\wedge X\mathsf{True}$, so obviously $CF(\mathsf{True})\subseteq cl(\varphi)$.
    \item Inductive step: If the formulas $\varphi_i$ ($i=\mathit{1,2}$) satisfy $\alpha\wedge X\psi\in DNF(\varphi_i) \Rightarrow CF(\psi)\subseteq cl(\varphi_i)$, then:
        \begin{enumerate}
          \item If $\phi=\phi_1\vee\phi_2$, we know $cl(\phi)=cl(\phi_1)\cup cl(\phi_2)\cup \{\phi_1\vee\phi_2\}$. According to Lemma~\ref{lemma:expansion}.5 we have $\alpha\wedge X\psi\in DNF(\varphi)\Rightarrow\alpha\wedge X\psi\in DNF(\varphi_1)\cup DNF(\varphi_2)$, then by induction hypothesis we have $CF(\psi)\subseteq cl(\phi_1)\cup cl(\phi_2)$, so $CF(\psi)\subseteq cl(\phi)$;
          \item If $\phi=X\phi_1$, we know $cl(\phi)=cl(\phi_1)\cup \{X\phi_1\}$. According to Lemma~\ref{lemma:expansion}.2 we have $\alpha\wedge X\psi\in DNF(\varphi)\Rightarrow\psi =\phi_1$, so $CF(\psi)\subseteq cl(\phi_1)\subseteq cl(\phi)$;
          \item If $\phi=\phi_1\wedge\phi_2$, we know $cl(\varphi_1\wedge\varphi_2)=\{\varphi_1\wedge\varphi_2\}\cup cl(\varphi_1)\cup cl(\varphi_2)$. According to Lemma~\ref{lemma:expansion}.6 we know $\alpha\wedge X\psi\in DNF(\varphi_1\wedge\varphi_2)\Rightarrow\exists\alpha_1\wedge X\psi_1\in DNF(\phi_1), \alpha_2\wedge X\psi_2\in DNF(\phi_2)\cdot\alpha = \alpha_1\wedge\alpha_2\wedge\psi = \psi_1\wedge\psi_2$. Then by induction hypothesis we have $CF(\psi_1)\subseteq cl(\phi_1)$ and $CF(\psi_2)\subseteq cl(\phi_2)$, so $CF(\psi)\subseteq cl(\varphi_1)\cup cl(\varphi_2)\subseteq cl(\varphi_1\wedge\varphi_2)$;
          \item If $\phi=\phi_1 U\phi_2$, we know $cl(\varphi_1 U\varphi_2)=cl(\varphi_1)\cup cl(\varphi_2)\cup \{\varphi_1 U \varphi_2\}$. According to Lemma~\ref{lemma:expansion}.3 if $\alpha\wedge X\psi\in DNF(\varphi_2)$ then $CF(\psi)\subseteq cl(\varphi_2)$ directly by induction hypothesis, else if $\alpha\wedge X\psi\in \{\alpha\wedge X(\psi_1\wedge \varphi_1 U \varphi_2)\mid \alpha\wedge X\psi_1\in DNF(\varphi_1)\}$ then by induction hypothesis we have $CF(\psi)=CF(\psi_1)\cup\{\varphi_1 U \varphi_2\}\subseteq cl(\varphi_1)\cup \{\varphi_1 U \varphi_2\}\subseteq cl(\varphi_1 U \varphi_2)$;
          \item If $\phi=\phi_1 R \phi_2$ one can also prove in the similar way that $\alpha\wedge X\psi\in DNF(\varphi)\Rightarrow CF(\psi)\subseteq cl(\varphi)$.
        \end{enumerate}
  \end{itemize}
\end{proof}

\begin{lemma}\label{lemma:expand:finite}
  Let $\psi \in EF(\varphi)$ then $CF(\psi)\subseteq cl(\varphi)$;
\end{lemma}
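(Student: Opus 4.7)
The plan is to prove Lemma~\ref{lemma:expand:finite} by induction on the length of the expansion chain witnessing $\psi \in EF(\varphi)$, relying crucially on Lemma~\ref{lemma:dnf:finite} at each step. Recall that $\psi \in EF(\varphi)$ means there exists a finite sequence $\varphi = \psi_0 \tran{\alpha_1} \psi_1 \tran{\alpha_2} \ldots \tran{\alpha_n} \psi_n = \psi$ where each step corresponds to a clause in the DNF of the previous formula.

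\medskip

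\noindent\textbf{Base case.} If $n=0$, then $\psi = \varphi$ and clearly $CF(\varphi) \subseteq cl(\varphi)$ by definition of the closure.

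\medskip

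\noindent\textbf{Inductive step.} Assume the statement holds for every formula reachable from $\varphi$ in $n-1$ steps, and suppose $\varphi \hookrightarrow \psi' \tran{\alpha} \psi$ with $\psi'$ reachable in $n-1$ steps. By the induction hypothesis, $CF(\psi') \subseteq cl(\varphi)$. The main work is to transfer the DNF bound from Lemma~\ref{lemma:dnf:finite} (which relates $CF$ of a successor to $cl$ of the immediate predecessor) across the induction step; that is, I need to derive $CF(\psi) \subseteq cl(\varphi)$ from $CF(\psi) \subseteq cl(\psi')$. The key observation is that although $cl(\psi')$ may not itself be contained in $cl(\varphi)$, we only need the closures of the \emph{conjuncts} of $\psi'$. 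Writing $\psi' = \bigwedge_{i\in I} \psi'_i$ with $\psi'_i \in CF(\psi')$, Lemma~\ref{lemma:expansion}.6 gives $\alpha = \bigwedge_i \alpha_i$ and $\psi = \bigwedge_i \theta_i$ with $\alpha_i \wedge X\theta_i \in DNF(\psi'_i)$; applying Lemma~\ref{lemma:dnf:finite} to each conjunct yields $CF(\theta_i) \subseteq cl(\psi'_i)$, and hence $CF(\psi) = \bigcup_i CF(\theta_i) \subseteq \bigcup_i cl(\psi'_i)$. Since every $\psi'_i \in CF(\psi') \subseteq cl(\varphi)$, we have $cl(\psi'_i) \subseteq cl(\varphi)$, concluding $CF(\psi) \subseteq cl(\varphi)$.

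\medskip

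The main obstacle, therefore, is not the induction scheme itself but the subtle point that $cl(\psi') \not\subseteq cl(\varphi)$ in general (since $\psi'$ may be a conjunction of formulas from $cl(\varphi)$ that was not itself a subformula of $\varphi$). This is circumvented by exploiting the distributive expansion rule for conjunctions (Lemma~\ref{lemma:expansion}.6), which lets us decompose the DNF step along the conjuncts and apply Lemma~\ref{lemma:dnf:finite} separately to each, so that the inclusion only depends on the closures of members of $CF(\psi')$, all of which lie in $cl(\varphi)$ by induction. Together with Theorem~\ref{thm:expand:bounded}'s earlier use in bounding $|EF(\lambda)|$, this confirms that expansions never leave the syntactic material of $\lambda$.
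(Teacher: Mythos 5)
Your proof is correct and follows essentially the same route as the paper's: induction on the length of the expansion chain, handling the last step by decomposing $\psi'$ into its conjuncts $CF(\psi')$, applying Lemma~\ref{lemma:dnf:finite} to each conjunct via the conjunction rule of Lemma~\ref{lemma:expansion}.6, and using that $\psi'_i \in cl(\varphi)$ implies $cl(\psi'_i)\subseteq cl(\varphi)$. The only (immaterial) difference is that you start the induction at $n=0$ with $\psi=\varphi$, whereas the paper's base case is a single DNF step.
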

\begin{proof}
  We prove it by induction over the number of steps that $\psi$ can be reached from $\phi$.

  \begin{itemize}
  \item   Base step: If $\alpha\wedge X\psi\in DNF(\varphi)$ then according to Lemma~\ref{lemma:dnf:finite} we know $CF(\psi)\subseteq cl(\varphi)$.
  \item Induction step: If $\exists\varphi\rightarrow\psi_1\rightarrow\psi_2\rightarrow\ldots\psi_k=\psi$ where $k\geq 1$ and $CF(\psi)\subseteq cl(\varphi)$ hold, then according to Lemma~\ref{lemma:dnf:finite} we know for all $\nu\in CF(\psi)$ we have $\beta\wedge X\mu\in DNF(\nu)\Rightarrow CF(\mu)\subseteq cl(\nu\}\subseteq cl(\varphi)$. Then according to Lemma~\ref{lemma:expansion}.6 we know $\forall \alpha\wedge X\psi' \in DNF(\psi)\cdot CF(\psi')\subseteq cl(\varphi)$ holds. That is,
      if $\psi$ can be reached from $\phi$ in $k$ steps and $CF(\psi)\subseteq cl(\phi)$ holds, then any $\psi'$ can be reached from $\phi$ in $k+1$ steps also has $CF(\psi')\subseteq cl(\phi)$.
  \end{itemize}
\end{proof}

Now come to prove Theorem~\ref{thm:expand:bounded}. From Lemma~\ref{lemma:expand:finite} we know for all $\psi\in EF(\lambda)$ if $\mu\in CF(\psi)$ then we have $\mu\in cl(\lambda)$.
So the elements number in $CF(\psi)$ can not exceed the number of $cl(\lambda)$, i.e.
$|CF(\psi)|\leq |cl(\lambda)|$. Thus $|EF(\lambda)|\leq 2^{|cl(\lambda)|}=2^{n+1}$.

\subsection{Proof of Lemma \ref{lemma:obligaionandsatonce}}
We first prove the first part of the lemma by induction over the formula $\phi$.
  \begin{itemize}
   \item Basic step: If $\phi = p$, then $OS_{\phi} = \{\{p\}\}$, and $\{p\},\models_f p$ obviously true.
   \item Inductive step: If for the formulas $\psi_i$ ($i = 1,2$), $\forall O\in OS_{\psi_i}\cdot O\models_f\psi_i$ holds. Then we have:
   \begin{enumerate}
      \item If $\phi = X \psi_1$, then $OS_{\phi} = OS_{\psi_1}$. Since for each $O$ in $OS_{\phi}$, the predicate $O\models_f\phi\equiv O\models_f \psi_1$ according to its definition, and since $OS_{\phi} = OS_{\psi_1}$ so $O\in OS_{\psi_1}$. Then by induction hypothesis we know $O\models_f\psi_1$ holds thus $O\models_f\phi$ holds.
      \item If $\phi = \psi_1\vee\psi_2$, then $OS_{\phi} = OS_{\psi_1}\cup OS_{\psi_2}$, so we know $\forall O\in OS_{\phi}\cdot O\in OS_{\psi_1}\vee O\in OS_{\psi_2}$. Then since $O\models_f\phi \equiv O\models_f\psi_1\vee O\models_f\psi_2$, and by induction hypothesis $O\models_f \psi_1$ holds when $O\in OS_{\psi_1}$ while $O\models_f\psi_2$ holds when $O\in OS_{\psi_2}$. Due to $O\in OS_{\psi_1}\vee O\in OS_{\psi_2}$ so $O\models_f\phi \equiv O\models_f \psi_1\vee O\models_f \psi_2$ is true.
      \item If $\phi = \psi_1\wedge\psi_2$, then $OS_{\phi} = \{S_1\cup S_2\mid S_1\in OS_{\psi_1}\wedge S_2\in OS_{\psi_2}\}$.
          Then $\forall O\in OS_{\phi}\exists S_1\in OS_{\psi_1}, S_2\in OS_{\psi_2}\cdot O = S_1\cup S_2$. By induction hypothesis that $S_1\models_f \psi_1$ and $S_2\models_f \psi_2$ are true, thus $O\models_f\phi\equiv S_1\cup S_2\models_f \psi_1\wedge S_1\cup S_2\models_f\psi_2$ holds.
      \item If $\phi = \psi_1 U\psi_2$, then $OS_{\phi} = OS_{\psi_2}$. Since for each $O$ in $OS_{\phi}$ $O\models_f\phi\equiv O\models_f\psi_2$, and by induction hypothesis $O\models_f\psi_2$ holds, so $O\models_f\phi$ also holds. Similarly one can prove the situation when $\phi = \psi_1 R\psi_2$ and we omit it here.
   \end{enumerate}
  \end{itemize}

  We then prove the second part of the lemma also by induction over the formula $\phi$.
  \begin{itemize}
   \item Basic step: If $\phi = p$, then $OS_{\phi} = \{\{p\}\}$, and $S\models_f p\Rightarrow p\in S$. So obviously $\exists O\in OS_{\phi}\cdot O\subseteq S$.
   \item Inductive step: If for the formulas $\psi_i$ ($i = 1,2$), $S\models_f\psi_i\Rightarrow\exists O_i\in OS_{\psi_i}\cdot O\subseteq S$ holds. Then we have:
   \begin{enumerate}
      \item If $\phi = X \psi_1$, then we know $OS_{\phi} = OS_{\psi_1}$ and $S\models_f \phi\equiv S\models_f\psi_1$. Since by induction hypothesis $S\models_f\psi_1\Rightarrow\exists O\in OS_{\psi_1}\cdot O\subseteq S$, and $OS_{\psi_1}=OS_{\phi}$, so $O \in OS_{\phi}$. Thus $S\models_f\phi\Rightarrow\exists O\in OS_{\phi}\cdot O\subseteq S$ holds.
      \item If $\phi = \psi_1\vee\psi_2$, then we have $OS_{\phi} = OS_{\psi_1}\cup OS_{\psi_2}$ and $S\models_f\phi\equiv S\models_f\psi_1\vee S\models_f\psi_2$. By induction hypothesis $S\models_f\psi_1\Rightarrow\exists O\in OS_{\psi_1}\cdot O\subseteq S$ and $S\models_f\psi_2\Rightarrow\exists O\in OS_{\psi_2}\cdot O\subseteq S$, so $S\models_f\phi\Rightarrow\exists O\in OS_{\psi_1}\cup OS_{\psi_2}\cdot O\subseteq S$, in which $OS_{\psi_1}\cup OS_{\psi_2}$ is exactly $OS_{\phi}$. Thus $S\models_f\phi\Rightarrow\exists O\in OS_{\phi}\cdot O\subseteq S$ holds.
      \item If $\phi = \psi_1\wedge\psi_2$, then $OS_{\phi} = \{S_1\cup S_2\mid S_1\in OS_{\psi_1}\wedge S_2\in OS_{\psi_2}\}$.
          Since $S\models_f \phi\equiv S\models_f \psi_1\wedge S\models_f\psi_2$, and by induction hypothesis we have $S\models_f\psi_i\Rightarrow\exists O_i\in OS_{\psi_i}\cdot O_i\subseteq S$, where $i = 1, 2$, so $S\models_f\phi\Rightarrow\exists O = O_1\cup O_2\cdot O\subseteq S$. Obviously $O\in OS_{\phi}$, so $S\models_f\phi\Rightarrow\exists O\in OS_{\phi}\cdot O\subseteq S$ holds.
      \item If $\phi = \psi_1 U\psi_2$, then we know $OS_{\phi} = OS_{\psi_2}$ and $O\models_f \phi\equiv O\models_f\psi_2$. By induction hypothesis $O\models_f \psi_2\Rightarrow\exists O\in OS_{\psi_2}\cdot O\subseteq S$, and since $OS_{\phi}= OS_{\psi_2}$ so $O$ is also in $OS_{\phi}$. Thus $S\models_f\phi\Rightarrow\exists O\in OS_{\phi}\cdot O\subseteq S$ holds. Similarly one can prove the case when $\phi = \psi_1 R\psi_2$ and we omit it here.
   \end{enumerate}
  \end{itemize}

\subsection{Proof of Lemma \ref{lemma:finitesat:infsat}}
There are some other lemmas need to be introduced before we prove this lemma.

\begin{lemma}\label{lemma:subformula:equiv}
  $\mu\in cl(\nu)\wedge\nu\in cl(\mu)\Leftrightarrow\mu=\nu$.
\end{lemma}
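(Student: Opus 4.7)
The plan is to split the biconditional. The backward direction is immediate from the definition of $cl$: every formula is a subformula of itself, so $\mu = \nu$ trivially yields $\mu \in cl(\nu) \wedge \nu \in cl(\mu)$.

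For the forward direction, my approach is to equip LTL formulas with a structural size measure $|\cdot|$ — concretely, the number of nodes in the abstract syntax tree — and to establish as an auxiliary claim, by a routine structural induction on $\nu$, that $\mu \in cl(\nu)$ implies $|\mu| \leq |\nu|$, with equality if and only if $\mu = \nu$. The base case is literals, where $cl(p) = \{p\}$ (modulo the $\mathsf{True}$ point discussed below) so the claim is vacuous. The inductive step is identical for every binary temporal or Boolean connective $\ast$: since $cl(\nu_1 \ast \nu_2) = \{\nu_1 \ast \nu_2\} \cup cl(\nu_1) \cup cl(\nu_2)$ and every element of $cl(\nu_i)$ has size strictly less than $|\nu_1 \ast \nu_2|$ by the induction hypothesis, any $\mu \ne \nu_1 \ast \nu_2$ in $cl(\nu_1 \ast \nu_2)$ lies in one of the $cl(\nu_i)$ and hence satisfies $|\mu| \leq |\nu_i| < |\nu_1 \ast \nu_2|$. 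The unary $X$ case is analogous.

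Given the auxiliary claim, the hypothesis $\mu \in cl(\nu) \wedge \nu \in cl(\mu)$ gives both $|\mu| \leq |\nu|$ and $|\nu| \leq |\mu|$, forcing $|\mu| = |\nu|$; the equality clause of the claim then yields $\mu = \nu$.

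The only subtlety, and thus the main obstacle, is the treatment of $\mathsf{True}$: the excerpt defines $cl(\lambda)$ to contain all subformulas of $\lambda$ \emph{together with} $\mathsf{True}$, so $\mathsf{True}$ may appear in $cl(\nu)$ even when it is not a syntactic subformula of $\nu$. I would dispatch this by treating $\mathsf{True}$ as a distinguished atomic formula with $cl(\mathsf{True}) = \{\mathsf{True}\}$ and handling it as a separate case: if $\mu = \mathsf{True} \in cl(\nu)$ and $\nu \in cl(\mu) = \{\mathsf{True}\}$, then $\nu = \mathsf{True} = \mu$; and symmetrically if $\nu = \mathsf{True}$. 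After this bookkeeping, the size-based argument above applies verbatim to the remaining formulas.
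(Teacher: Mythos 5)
Your proof is correct, but it does substantially more than the paper, whose entire argument for this lemma is the single sentence ``According to the definition of $cl$, it is definitely true'' --- i.e., the authors treat antisymmetry of the subformula relation as immediate and give no argument. Your route makes this precise in the standard way: you introduce a node-count measure, prove by structural induction that every element of $cl(\nu)$ other than $\nu$ itself has strictly smaller size, and conclude by antisymmetry of $\leq$ on sizes. That is the right way to discharge the claim, and your separate treatment of $\mathsf{True}$ addresses a genuine wrinkle the paper glosses over entirely: since $cl(\lambda)$ is defined as the subformulas of $\lambda$ \emph{together with} $\mathsf{True}$, the element $\mathsf{True}$ need not be a syntactic subformula, and the bare size argument would not cover it (indeed, if $\mathsf{True}$ were expanded as the abbreviation $a\vee\neg a$ its size could exceed that of a literal $\nu$ containing it in $cl(\nu)$). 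Treating $\mathsf{True}$ as a distinguished atom with $cl(\mathsf{True})=\{\mathsf{True}\}$ resolves this cleanly. The only cost of your approach is length; what it buys is an actual proof where the paper offers an assertion.
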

\begin{proof}
  According to the definition of $cl$, it is definitely true.
\end{proof}

\begin{lemma}\label{lemma:formulacycle:exist}
  $\phi\hookrightarrow\phi \Rightarrow\exists\mu\in CF(\phi)\cdot cl(\mu)\cap CF(\phi) = \{\mu\}$.
\end{lemma}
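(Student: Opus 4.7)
The plan is to recognize the statement as asserting the existence of a minimal element in a finite partial order, where the order is the subformula-containment restricted to $CF(\phi)$. Concretely, I would define $\nu \preceq \mu$ on $CF(\phi)$ by $\nu \in cl(\mu)$. Reflexivity is immediate ($\mu \in cl(\mu)$), transitivity follows from the transitivity of the subformula relation, and antisymmetry is exactly Lemma~\ref{lemma:subformula:equiv}. Hence $(CF(\phi), \preceq)$ is a partial order.

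Next I would invoke the hypothesis to guarantee $CF(\phi) \neq \emptyset$; this is where the assumption $\phi \hookrightarrow \phi$ is used together with (implicitly) $\phi \neq \mathsf{True}$, since $CF(\mathsf{True}) = \emptyset$ by convention. For every other $\phi$, $CF(\phi)$ is nonempty (it contains at least the top-level non-conjunctive parts of $\phi$). Since $CF(\phi)$ is also finite, the poset $(CF(\phi), \preceq)$ admits a minimal element $\mu$.

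I would then conclude by showing $cl(\mu) \cap CF(\phi) = \{\mu\}$. The inclusion $\{\mu\} \subseteq cl(\mu) \cap CF(\phi)$ is trivial. For the converse, take $\nu \in cl(\mu) \cap CF(\phi)$, which by definition means $\nu \preceq \mu$; minimality of $\mu$ forces $\nu = \mu$ (otherwise $\nu$ would be strictly below $\mu$, contradicting minimality, using antisymmetry from Lemma~\ref{lemma:subformula:equiv}).

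The main thing to watch out for is really just the edge case $\phi = \mathsf{True}$, since the statement as written would fail there (the empty set contains no witness $\mu$). I expect that the intended reading is either to exclude $\mathsf{True}$ in the same spirit as Corollary~\ref{coro:expand:cycle}, or to treat it as a degenerate case separately in the application; no deep argument is required beyond verifying that the uses of the lemma in the sequel all occur under $\phi \neq \mathsf{True}$. Apart from this caveat, the proof is a routine application of the finite-poset-has-a-minimal-element principle and involves no combinatorial difficulty.
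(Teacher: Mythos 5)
Your proof is correct and is essentially the paper's own argument: the paper establishes the same ``finite poset under the subformula order has a minimal element'' fact by hand, assuming every $\mu\in CF(\phi)$ has a distinct member of $CF(\phi)$ among its subformulas and deriving an infinite descending chain of pairwise-distinct formulas (distinctness guaranteed by Lemma~\ref{lemma:subformula:equiv}), contradicting the finiteness of $CF(\phi)$. Your caveat about the degenerate case $\phi=\mathsf{True}$ (where $CF(\mathsf{True})=\emptyset$ by convention yet $\mathsf{True}\hookrightarrow\mathsf{True}$, so the statement as written has no witness) is a genuine gap that the paper's proof silently ignores, though it is harmless where the lemma is applied.
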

\begin{proof}
  For each $\mu$ in $CF(\phi)$ let $S_{\mu} = cl(\mu)\cap CF(\phi)$, then we know easily $S_\mu\supseteq \{\mu\}$. If $\forall\mu\in CF(\phi)\cdot S_{\mu}\supset\{\mu\}$, then we know $\exists\mu_1\in S_{\mu}$ and $\mu_1\neq \mu$. Since $\mu_1$ is also in $CF(\phi)$, then according to the assumption $\exists\mu_2\in S_{\mu_1}$ and $\mu_2\neq \mu_1$. Moreover according to Lemma~\ref{lemma:subformula:equiv} $\mu_2\neq\mu$ also holds. However for $\mu_2$ it is also in $CF(\phi)$ and has at least one subformula $\mu_3$ in $CF(\phi)$ and $\mu_3\neq\mu_2$... Infinitely using this will cause $CF(\phi)$ be an infinite set - that is obviously impossible. So this lemma is true.
\end{proof}

\begin{lemma}\label{lemma:formulacycle:forall}
  $\phi\hookrightarrow\phi\Rightarrow\forall\phi\tran{\eta}\phi\exists\mu\in CF(\phi)\cdot(\mu\tran{\eta}\mathsf{True}\vee \mu\tran{\eta}\mu)$.
\end{lemma}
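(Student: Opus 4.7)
The plan is to reduce the hypothesis $\phi \tran{\eta} \phi$ (a loop at the top-level conjunction) to parallel loops on each conjunctive factor, and then pick a $\subseteq$-minimal factor furnished by Lemma~\ref{lemma:formulacycle:exist} whose own evolution along $\eta$ cannot leave $\{\mathsf{True},\mu^*\}$.

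First I would prove a one-step decomposition lemma: if $\phi = \bigwedge_{\mu \in CF(\phi)} \mu$ and $\phi \tran{\omega} \phi'$, then for each $\mu \in CF(\phi)$ there exists $\psi_\mu$ with $\mu \tran{\omega} \psi_\mu$ (along the \emph{same} $\omega$) and $\phi' = \bigwedge_{\mu \in CF(\phi)} \psi_\mu$. This is essentially immediate from iterating Lemma~\ref{lemma:expansion}.6 over the top-level conjunction: the witnessing clause $\alpha \wedge X\phi' \in DNF(\phi)$ factors as $\alpha = \bigwedge_\mu \alpha_\mu$ and $\phi' = \bigwedge_\mu \psi_\mu$ with $\alpha_\mu \wedge X\psi_\mu \in DNF(\mu)$, and $\omega \models \alpha$ forces $\omega \models \alpha_\mu$ for every $\mu$.

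Then I would iterate along $\eta = \omega_0 \omega_1 \cdots \omega_{n-1}$, tracking for each $\mu \in CF(\phi)$ the evolving descendant $\mu^{(i+1)}$ obtained from $\mu^{(i)}$ by the one-step decomposition (re-applied to every new conjunct that appears). This produces, for each $\mu \in CF(\phi)$, a derivation $\mu \tran{\eta} \mu^{(n)}$ with $\phi = \bigwedge_{\mu \in CF(\phi)} \mu^{(n)}$, so in particular $CF(\mu^{(n)}) \subseteq CF(\phi)$. Since $\mu^{(n)} \in EF(\mu)$, Lemma~\ref{lemma:expand:finite} also gives $CF(\mu^{(n)}) \subseteq cl(\mu)$. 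Selecting $\mu^* \in CF(\phi)$ via Lemma~\ref{lemma:formulacycle:exist} with $cl(\mu^*) \cap CF(\phi) = \{\mu^*\}$, the two inclusions combine to $CF(\mu^{*(n)}) \subseteq \{\mu^*\}$, leaving only $\mu^{*(n)} = \mathsf{True}$ (if the $CF$-set is empty) or $\mu^{*(n)} = \mu^*$ (up to $\wedge$-normalisation), which gives exactly the two alternatives of the lemma.

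The hard part will be making the parallel-decomposition induction precise. The descendant $\mu^{(i)}$ at intermediate steps is generally a proper conjunction whose own conjunctive factors must themselves be re-decomposed at the next step, so one must treat formulas up to associativity, commutativity and idempotence of $\wedge$ to keep the bookkeeping sound. In particular, this is what allows the concluding step ``$CF(\mu^{*(n)}) = \{\mu^*\}$ implies $\mu^{*(n)} = \mu^*$'' to be read as an identity of LTS states rather than merely an identity of $CF$-sets. The degenerate case $\phi = \mathsf{True}$, where $CF(\phi) = \emptyset$, does not arise under a nontrivial $\phi \hookrightarrow \phi$ by Corollary~\ref{coro:expand:cycle}.
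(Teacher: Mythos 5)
Your proposal is correct and follows essentially the same route as the paper: both select the $\subseteq$-minimal conjunct $\mu^*$ furnished by Lemma~\ref{lemma:formulacycle:exist} with $cl(\mu^*)\cap CF(\phi)=\{\mu^*\}$ and conclude that along any loop $\phi\tran{\eta}\phi$ this conjunct can only evolve to $\mathsf{True}$ or back to itself. The paper's own proof simply asserts this last step in one sentence, whereas you supply the missing justification (the parallel decomposition of conjunctive transitions via Lemma~\ref{lemma:expansion}.6 combined with $CF(\mu^{*(n)})\subseteq cl(\mu^*)$ from Lemma~\ref{lemma:expand:finite}), which is a faithful elaboration rather than a different argument.
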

\begin{proof}
  According to Lemma \ref{lemma:formulacycle:exist} we know $\exists\mu\in CF(\phi)\cdot cl(\mu)\cap CF(\phi) =\{\mu\}$. Then we know for such $\mu$ it will meet and only meet $\mu\tran{\eta}\mathsf{True}\vee\mu\tran{\eta}\mu$ when each $\phi\tran{\eta}\phi$ holds.
\end{proof}

\begin{lemma}\label{lemma:cycle:partialorder}
  If $\phi\hookrightarrow\phi$, then there exists $S_0\subset S_1\subset\ldots\subset S_n= CF(\phi) (n\geq 0)$ such that $\forall \mu\in S_0\forall\phi\tran{\eta}\phi\cdot\mu\tran{\eta}\mathsf{True}\vee\mu\tran{\eta}\mu$, and for $i\geq 1$ we have $\forall\mu\in S_i\forall\phi\tran{\eta}\phi\cdot\mu\tran{\eta}\mu'$ and $CF(\mu')\subseteq S_{i-1}\cup\{\mu\}$.
\end{lemma}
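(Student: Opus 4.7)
The plan is to stratify $CF(\phi)$ by a subformula-depth that is internal to $CF(\phi)$, and at each level to use the fact that along any cycle $\phi\tran{\eta}\phi$ a conjunct of $\phi$ can only expand to a formula whose own conjuncts are simultaneously subformulas of the original conjunct and still members of $CF(\phi)$.

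First I would establish the following key observation: for any $\phi\tran{\eta}\phi$ and any $\mu\in CF(\phi)$, there is $\mu'$ with $\mu\tran{\eta}\mu'$ and $CF(\mu')\subseteq cl(\mu)\cap CF(\phi)$. To prove it, write $\phi = \bigwedge_{\mu\in CF(\phi)}\mu$ and iterate Lemma~\ref{lemma:expansion}.6 along each single-symbol step of $\eta$: each application factors a conjunctive expansion into the conjunction of its per-conjunct expansions, so every $\mu\in CF(\phi)$ can be traced through $\eta$ to some formula $\mu'$ whose conjunction $\bigwedge_\mu \mu'$ equals $\phi$. In particular $CF(\phi) = \bigcup_\mu CF(\mu')$, so $CF(\mu')\subseteq CF(\phi)$. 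Combining with Lemma~\ref{lemma:expand:finite}, which gives $CF(\mu')\subseteq cl(\mu)$, yields the observation.

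Next I would build the chain explicitly. Let $S_0 = \{\mu\in CF(\phi)\mid cl(\mu)\cap CF(\phi) = \{\mu\}\}$, which is non-empty by Lemma~\ref{lemma:formulacycle:exist}. For $\mu\in S_0$, the observation gives $\mu\tran{\eta}\mu'$ with $CF(\mu')\subseteq\{\mu\}$, forcing $\mu' = \mathsf{True}$ or $\mu' = \mu$, which is exactly the property required at level $0$. For $i\geq 1$, set $S_i = S_{i-1}\cup\{\mu\in CF(\phi)\mid cl(\mu)\cap CF(\phi)\subseteq S_{i-1}\cup\{\mu\}\}$. For $\mu\in S_i\setminus S_{i-1}$ the observation again furnishes $\mu'$, and $CF(\mu')\subseteq cl(\mu)\cap CF(\phi)\subseteq S_{i-1}\cup\{\mu\}$, as demanded at level $i$.

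For termination I would induct on the subformula depth of $\mu$ inside $CF(\phi)$: since $CF(\phi)$ is finite and the subformula order is well-founded, every $\mu\in CF(\phi)$ eventually enters some $S_j$, so some $S_n = CF(\phi)$. Skipping indices where the chain stagnates yields the strictly ascending sequence $S_0\subset S_1\subset\ldots\subset S_n = CF(\phi)$ required by the statement. The main obstacle is the key observation itself: one has to check carefully that conjunctive expansion commutes with the conjunctive decomposition, so that tracking a single conjunct $\mu$ through the entire cycle $\eta$ is well-defined and its trace $\mu'$ satisfies $CF(\mu')\subseteq CF(\phi)$. This reduces to a single induction on $|\eta|$ using Lemma~\ref{lemma:expansion}.6 at each step.
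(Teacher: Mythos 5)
Your proof is correct and follows essentially the same route as the paper: both stratify $CF(\phi)$ bottom-up along the subformula order inside $CF(\phi)$, seed $S_0$ with the $\mu$ satisfying $cl(\mu)\cap CF(\phi)=\{\mu\}$ (Lemma~\ref{lemma:formulacycle:exist}), and grow $S_i$ by adding formulas whose $CF(\phi)$-subformulas already lie below, terminating by finiteness. The only difference is one of explicitness: you isolate as a ``key observation'' the fact that each conjunct $\mu$ traces through a cycle $\phi\tran{\eta}\phi$ to some $\mu'$ with $CF(\mu')\subseteq cl(\mu)\cap CF(\phi)$ (via Lemma~\ref{lemma:expansion}.6 and Lemma~\ref{lemma:expand:finite}), a step the paper uses only implicitly in its inductive stage.
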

\begin{proof}
  From Lemma~\ref{lemma:formulacycle:forall} we know $S_0\neq\emptyset$. Then let $S_1=S_0\cup \{\mu\mid\mu\in CF(\phi)\wedge\forall\phi\tran{\eta}\phi\cdot\mu\tran{\eta}\mu'\wedge CF(\mu')\subseteq S_0\cup\{\mu\}\}$. $S_1\supset S_0$ holds for the same reason with $S_0$ that $\exists\mu\in CF(\phi)\cdot cl(\mu)\cap CF(\phi) = S_0\cup\{\mu\}$, and such $\mu$s can be added into $S_1$. Inductively we can find the set $S_n=S_{n-1}\cup \{\mu\mid\mu\in CF(\phi)\wedge\forall\phi\tran{\eta}\phi\cdot\mu\tran{\eta_i}\mu'\wedge CF(\mu')\subseteq S_{n-1}\cup\{\mu\}\}$ ($n\geq 1$). Since $S_n\supset S_{n-1}$ and $|CF(\phi)|$ is limited and $\forall j\geq 0\cdot S_{j}\subseteq CF(\phi)$, so we can finally find $S_n = CF(\phi)$.
\end{proof}

\begin{figure}
\centering
  \includegraphics[scale = 0.6]{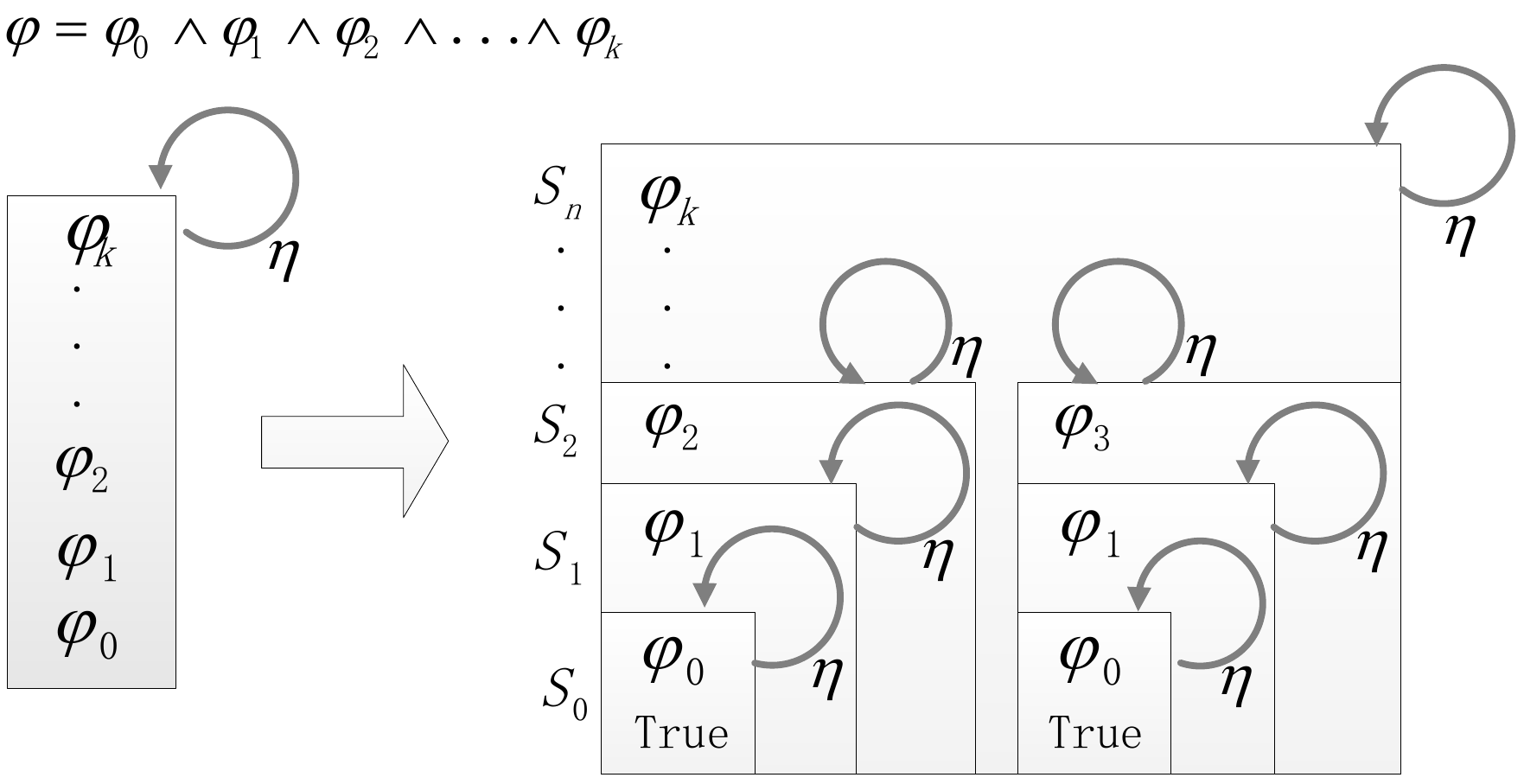}\\
  \caption{A demonstration of Lemma~\ref{lemma:cycle:partialorder} }\label{fig:loopingformula}
\end{figure}

A demonstration of this lemma is shown in Figure~\ref{fig:loopingformula}. In this case, $CF(\phi)=\{\phi_0,\phi_1,...,\phi_k\}$ and $\phi\tran{\eta}\phi$ holds. Then according to Lemma~\ref{lemma:cycle:partialorder} there exists $\phi_0$ so that $\phi_0\tran{\eta}\true\vee\phi_0\tran{\eta}\phi_0$ holds. Moreover, for $S_1=S_0\cup\{\phi_1\}$ we have $\phi\tran{\eta}\phi'$ and $CF(\phi')\subseteq S_0\cup\{\phi_1\}$. Note that including $S_{i-1}$ there can be more than one formulas added into $S_i$ at the same time: see $\phi_1$ and $\phi_3$ in $S_2$. This property for the looping formula plays a key role in the proofs in the following.

\begin{lemma}\label{lemma:finitesat:expandsat}
  $\phi\tran{\alpha}\psi\wedge S\models_f\psi\Rightarrow S\cup CF(\alpha)\models_f \phi$.
\end{lemma}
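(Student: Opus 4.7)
The natural strategy is structural induction on $\phi$, reading off the shape of $\alpha$ and $\psi$ from the DNF expansion rule (Lemma~\ref{lemma:expansion}) for each syntactic case of $\phi$, and then matching this against the corresponding clause of Definition~\ref{def:finitestepsat} to turn the hypothesis $S\models_f\psi$ into the conclusion $S\cup CF(\alpha)\models_f\phi$.

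First I would dispatch the easy cases. For a literal $\phi=p$, Lemma~\ref{lemma:expansion}(1) forces $\alpha=p$ and $\psi=\mathsf{True}$, so $CF(\alpha)=\{p\}$ and clause~(1) gives $S\cup\{p\}\models_f p$ trivially. For $\phi=X\phi_2$, Lemma~\ref{lemma:expansion}(2) gives $\alpha=\mathsf{True}$ and $\psi=\phi_2$, so $CF(\alpha)=\emptyset$ and clause~(5) reduces the goal $S\models_f X\phi_2$ to the hypothesis $S\models_f\phi_2$. For $\phi=\phi_1\vee\phi_2$, Lemma~\ref{lemma:expansion}(5) places $\alpha\wedge X\psi$ into $DNF(\phi_i)$ for some $i$, so $\phi_i\tran{\alpha}\psi$ and the IH yields $S\cup CF(\alpha)\models_f\phi_i$, from which clause~(4) concludes. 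The conjunction case $\phi=\phi_1\wedge\phi_2$ uses Lemma~\ref{lemma:expansion}(6) to decompose $\alpha=\alpha_1\wedge\alpha_2$, $\psi=\psi_1\wedge\psi_2$, and $CF(\alpha)=CF(\alpha_1)\cup CF(\alpha_2)$ with $\phi_i\tran{\alpha_i}\psi_i$; I unpack $S\models_f\psi_1\wedge\psi_2$ into $S\models_f\psi_i$ via clause~(3), apply IH to each $i$ to obtain $S\cup CF(\alpha_i)\models_f\phi_i$, and combine them into $S\cup CF(\alpha)\models_f\phi_1\wedge\phi_2$.

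The Until case $\phi=\phi_1 U\phi_2$ splits along Lemma~\ref{lemma:expansion}(3). If $\alpha\wedge X\psi\in DNF(\phi_2)$, then $\phi_2\tran{\alpha}\psi$ and the IH gives $S\cup CF(\alpha)\models_f\phi_2$; clause~(2) immediately lifts this to $S\cup CF(\alpha)\models_f\phi_1 U\phi_2$ since the Until obligations coincide with those of $\phi_2$. Otherwise $\psi=\psi_1\wedge(\phi_1 U\phi_2)$ with $\phi_1\tran{\alpha}\psi_1$, and the hypothesis $S\models_f\psi$ yields $S\models_f\phi_1 U\phi_2$ via clause~(3) and hence $S\models_f\phi_2$ via clause~(2); clause~(2) then gives $S\cup CF(\alpha)\models_f\phi_1 U\phi_2$. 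The Release case is entirely symmetric using Lemma~\ref{lemma:expansion}(4) and the fact that the Release clause of $\models_f$ also filters through $\phi_2$.

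The main obstacle is reconciling the two roles the set $S$ plays in Definition~\ref{def:finitestepsat}: it is simultaneously the one-element input ``sequence'' driving the expansion on $\phi$'s side and the flattened set appearing in the recursive structural conditions. In particular, in the conjunction and Until/Release subcases one must argue that $\models_f$ is monotone in its set argument and that a witnessing expansion path for $\psi_1\wedge\psi_2$ can be refined into separate witnessing paths for $\psi_1$ and $\psi_2$ on the same input. I expect these bookkeeping steps, rather than the induction skeleton itself, to be the delicate part; once monotonicity is established as a small auxiliary observation, the remainder of the argument follows the DNF expansion laws mechanically.
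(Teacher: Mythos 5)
Your proposal is correct and follows essentially the same route as the paper: structural induction on $\phi$, using the DNF expansion rules of Lemma~\ref{lemma:expansion} case by case and matching each against the corresponding clause of Definition~\ref{def:finitestepsat}, with the Until/Release cases split exactly as you describe. The monotonicity of $\models_f$ in its set argument, which you rightly flag as the one non-mechanical step, is used silently in the paper's own proof as well.
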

\begin{proof}
  We prove it by induction over the formula $\phi$.
  \begin{itemize}
    \item Basic step: If $\phi = p$, then we know $DNF(\phi)=\{p\wedge X\mathsf{True}\}$. So $\phi\tran{\alpha}\psi\Rightarrow p\in CF(\alpha)\wedge CF(\alpha)\models_f\mathsf{True}$. Thus $S\cup CF(\alpha)\models_f\phi$ is true.
    \item Inductive step: Assume $\phi_i (i=1,2)$ meet $\phi_i\tran{\alpha_i}\psi_i\wedge S_i\models_f\psi_i\Rightarrow S_i\cup CF(\alpha_i)\models_f \phi_i$, then we have:
        \begin{enumerate}
          \item If $\phi=X\phi_1$, then we know $DNF(\phi)=\{\mathsf{True}\wedge X(\phi_1)\}$. If $S\models_f\phi_1$ holds, then since $S\cup CF(\mathsf{True})\models_f\phi\equiv S\models_f\phi_1$, so $S\models_f \phi$ holds.
          \item If $\phi =\phi_1\vee\phi_2$, then we know $DNF(\phi)=DNF(\phi_1)\cup DNF(\phi_2)$, that is, $\forall \alpha\wedge X\psi\in DNF(\phi)\cdot \alpha\wedge X\psi\in DNF(\phi_1)\cup DNF(\alpha_2)$. If $S\models_f\psi$ holds then by induction hypothesis we have $\phi_1(\phi_2)\tran{\alpha}\psi\wedge S\models_f \psi\Rightarrow S\cup CF(\alpha)\models_f \phi_1(\phi_2)$, which indeed implies $S\cup CF(\alpha)\models_f \phi$ according to the definition of $\models_f$ (Definition~\ref{def:finitestepsat}). So $\phi\tran{\alpha}\psi\wedge S\models_f \psi\Rightarrow S\cup CF(\alpha)\models_f \phi$.
          \item If $\phi =\phi_1\wedge\phi_2$, then we know $\forall\alpha\wedge X\psi\in DNF(\phi)$ there exists $\alpha_i$ and $\psi_i(i=1,2)$ so that $\alpha=\alpha_1\wedge\alpha_2$ and $\psi=\psi_1\wedge\psi_2$ as well as $\alpha_1\wedge X\psi_1\in DNF(\phi_1)$ and $\alpha_2\wedge X\psi_2\in DNF(\phi_2)$. If $S\models_f\psi$ holds, then $S\models_f\psi_i(i=1,2)$ hold. By induction hypothesis we have $\phi_i\tran{\alpha_i}\psi_i\wedge S\models_f \psi_i\Rightarrow S\cup CF(\alpha_i)\models_f \phi_i (i=1,2)$, so $S\cup CF(\alpha_1)\cup CF(\alpha_2)\models_f \phi_1\wedge\phi_2$ holds. Thus $S\cup CF(\alpha)\models_f \phi$ holds.
          \item If $\phi =\phi_1 U\phi_2$, then we know for each $\alpha\wedge X\psi\in DNF(\phi)$, it is either in $DNF(\phi_2)$ or $\exists\alpha\wedge X\psi_1\in DNF(\phi_1)$ and $\psi = \psi_1\wedge \phi$. If $S\models_f \psi$ holds then $S\models_f \phi$ obviously holds when $\psi = \psi_1\wedge\phi$. Thus $S\cup CF(\alpha)\models_f \phi$ holds. And if $\alpha\wedge X\psi\in DNF(\phi_2)$ by induction hypothesis we have $S\cup CF(\alpha)\models_f\phi_2 \equiv S\cup CF(\alpha)\models_f \phi$ directly.
          \item If $\phi =\phi_1 R\phi_2$, then we know for each $\alpha\wedge X\psi\in DNF(\phi)$, it is either in $DNF(\phi_1\wedge\phi_2)$ or $\exists\alpha\wedge X\psi_2\in DNF(\phi_2)$ and $\psi = \psi_2\wedge \phi$. If $S\models_f \psi$ holds then we have proven $S\models_f \phi$ holds when $\alpha\wedge X\psi\in DNF(\phi_1\wedge\phi_2)$. And if $\psi = \psi_2\wedge\phi$ then $S\models_f \psi$ obviously makes $S\cup CF(\alpha)\models_f \phi$ hold.
        \end{enumerate}
  \end{itemize}
\end{proof}

\begin{lemma}\label{lemma:finitesat:expandsat2}
  Let $\phi_0=\phi\tran{\alpha_0}\phi_1\tran{\alpha_1}\phi_2\tran{\alpha_2}\ldots\tran{\alpha_n}\phi_{n+1}=\psi$ and $T=\bigcup_{0\leq j\leq n}\alpha_j$. If $S\models_f \psi$ then $S\cup T\models_f\phi$ holds.
\end{lemma}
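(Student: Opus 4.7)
The plan is to prove Lemma~\ref{lemma:finitesat:expandsat2} by a straightforward induction on $n$, the length of the expansion, using Lemma~\ref{lemma:finitesat:expandsat} as the engine of the inductive step. The key observation is that the previous lemma handles a single-step expansion $\phi\tran{\alpha}\psi$, transferring $\models_f$ from the target formula to the source at the cost of enlarging the witnessing set by $CF(\alpha)$; the present statement is merely the iterated form of this for a chain of $n+1$ transitions, where $T$ accumulates the contributions of all intermediate labels.

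For the base case $n=0$, the expansion is just $\phi\tran{\alpha_0}\phi_1=\psi$ and $T=CF(\alpha_0)$, so the conclusion $S\cup T\models_f \phi$ is exactly one application of Lemma~\ref{lemma:finitesat:expandsat} to the hypothesis $S\models_f \psi$. For the inductive step, suppose the statement holds for chains of length $n$. Given a chain of length $n+1$,
\[
\phi_0=\phi\tran{\alpha_0}\phi_1\tran{\alpha_1}\cdots\tran{\alpha_n}\phi_{n+1}=\psi,
\]
define the shorter chain $\phi_1\tran{\alpha_1}\cdots\tran{\alpha_n}\phi_{n+1}=\psi$ and let $T'=\bigcup_{1\le j\le n} CF(\alpha_j)$. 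From $S\models_f \psi$ and the induction hypothesis we obtain $S\cup T'\models_f \phi_1$. Then, applying Lemma~\ref{lemma:finitesat:expandsat} to the single step $\phi_0\tran{\alpha_0}\phi_1$ with the witness set $S\cup T'$ in place of $S$, we conclude $(S\cup T')\cup CF(\alpha_0)\models_f \phi_0$, i.e.\ $S\cup T\models_f \phi$, since $T=CF(\alpha_0)\cup T'$.

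I do not expect a genuine obstacle here: the lemma is essentially the transitive closure of the one-step lemma already proved, and the induction is completely routine. The only care required is notational — keeping track that $T$ really decomposes as $CF(\alpha_0)\cup T'$ so that the two applications (IH and Lemma~\ref{lemma:finitesat:expandsat}) compose cleanly, and interpreting the shorthand $\bigcup_{0\le j\le n}\alpha_j$ in the statement as $\bigcup_{0\le j\le n} CF(\alpha_j)$ consistently with the convention of Definition~\ref{def:finitestepsat}.
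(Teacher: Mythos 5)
Your proof is correct and matches the paper's own argument: the paper likewise obtains the result by iterating Lemma~\ref{lemma:finitesat:expandsat} along the chain (it peels transitions from the back end rather than the front, but this is the same routine induction). Your reading of $\bigcup_{0\le j\le n}\alpha_j$ as $\bigcup_{0\le j\le n} CF(\alpha_j)$ is also the intended one.
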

\begin{proof}
  According to Lemma~\ref{lemma:finitesat:expandsat} we know $\phi_n\tran{\alpha_n}\phi_{n+1}=\psi\wedge S\models_f\psi\Rightarrow S\cup CF(\alpha_n)\models_f \phi_n$ holds. Inductively using Lemma~\ref{lemma:finitesat:expandsat} we can finally prove this lemma is true.
\end{proof}

\begin{lemma}\label{lemma:until:finitesat}
  If $\phi\tran{\eta}\phi$, then $\forall\mu\in UCF(\phi)\cdot\mu\tran{\eta}\mu'\wedge\mu\not\in CF(\mu')\Leftrightarrow\eta\models_f\phi$: here $UCF(\phi)\subseteq CF(\phi)$ and each $\mu$ in $UCF(\phi)$ is the Until formula.
\end{lemma}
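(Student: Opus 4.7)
The plan is to prove both directions by structural/length induction that pivots on the DNF expansion rule for Until, namely $DNF(\mu_1 U\mu_2) = DNF(\mu_2)\cup DNF(\mu_1\wedge X(\mu_1 U\mu_2))$, which cleanly splits every one-step unfolding of an Until into a \emph{discharge} branch (taken from $DNF(\mu_2)$) and a \emph{postpone} branch (taken from $DNF(\mu_1\wedge X\mu)$).

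First I would establish an auxiliary syntactic characterization. For $\mu = \mu_1 U\mu_2$ and any expansion $\mu = \nu_0\tran{\alpha_0}\nu_1\tran{\alpha_1}\ldots\tran{\alpha_k}\nu_{k+1} = \mu'$, induction on $k$ shows that $\mu\in CF(\mu')$ iff the postpone branch is chosen at \emph{every} step. Equivalently, $\mu\notin CF(\mu')$ iff at some index $j$ the chosen clause $\alpha_j\wedge X\nu_{j+1}$ is drawn from the discharge branch, so that $\nu_{j+1}$ comes from expanding $\mu_2$ rather than carrying $\mu$ forward. This reduces both directions of the biconditional to an analysis of where the discharge branch is exercised along $\eta$.

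For $(\Leftarrow)$, I unpack $\eta\models_f\phi$. Since $\phi = \bigwedge_{\mu\in CF(\phi)}\mu$, Definition~\ref{def:finitestepsat} yields $S\models_f\mu$ for every $\mu\in CF(\phi)$, where $S = \bigcup_j CF(\alpha_j)$ and $\phi = \phi_0\tran{\alpha_0}\phi_1\ldots\tran{\alpha_n}\phi_{n+1}$ is the witnessing expansion. For $\mu = \mu_1 U\mu_2\in UCF(\phi)$, this gives $S\models_f\mu_2$, and the conjunctive decomposition rule (Lemma~\ref{lemma:expansion}.6) allows me to project the overall expansion onto a single-conjunct expansion $\mu\tran{\eta}\mu'$. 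Using the auxiliary characterization together with Lemma~\ref{lemma:finitesat:expandsat2} (contrapositively), I argue that if no discharge step were ever taken along this projected expansion, then $S$ could not witness $S\models_f\mu_2$; hence the discharge branch fires at some step, giving $\mu\notin CF(\mu')$.

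For $(\Rightarrow)$, assume that for each $\mu\in UCF(\phi)$ there is a projected expansion $\mu\tran{\eta}\mu'$ whose discharge branch fires at some index $j_\mu$. Using the conjunctive rule in the opposite direction, I assemble a single expansion $\phi\tran{\eta}\phi$ (which is guaranteed to exist by hypothesis) in which each Until conjunct uses its own discharging projection; the accumulated label set $S$ then contains the contribution of the discharge step for every $\mu\in UCF(\phi)$. I then verify the clauses of $\models_f$ conjunct by conjunct via Lemma~\ref{lemma:finitesat:expandsat2}: for literals and $X$-formulas the loop $\phi\tran{\eta}\phi$ immediately supplies the needed contribution; for $\mu_1 U\mu_2\in UCF(\phi)$, the discharge step at index $j_\mu$ plus Lemma~\ref{lemma:finitesat:expandsat2} gives $S\models_f\mu_2$, hence $S\models_f\mu$; and for $\mu_1 R\mu_2\in CF(\phi)$, the fact that every looping expansion of a Release must use the branch $DNF(\mu_2\wedge X\mu)$ whenever it carries $\mu$ forward contributes a $\mu_2$-satisfying chunk to $S$, yielding $S\models_f\mu_2$ and thus $S\models_f\mu$.

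The main obstacle will be the combination step in the $(\Rightarrow)$ direction: the lemma gives us individually a discharging expansion for each Until conjunct, but the conclusion requires a single coherent expansion $\phi\tran{\eta}\phi$ whose label sequence $\alpha_0,\ldots,\alpha_n$ simultaneously realises all of these discharges and still loops back to $\phi$. Justifying this merge rigorously relies on the independence of conjunct expansions in Lemma~\ref{lemma:expansion}.6 and on a careful bookkeeping argument showing that merging the propositional labels preserves satisfiability of each $\omega_j$, while the looping hypothesis $\phi\tran{\eta}\phi$ constrains the Release and postpone choices consistently. Once this bookkeeping is in place, the remaining verifications reduce to routine case analysis over the structure of $CF(\phi)$.
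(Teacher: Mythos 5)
Your discharge/postpone reading of the Until expansion rule is the right starting point, and your $(\Leftarrow)$ direction is in the same spirit as the paper's (which argues the contrapositive: an Until conjunct whose projection postpones forever prevents $\eta\models_f\phi$). But the step you give for it does not follow as stated: $S\models_f\mu_2$ is a condition on the accumulated label set $S$ alone, so it can hold even when $\mu$'s own projection never takes the discharge branch (the literals witnessing $\mu_2$ may enter $S$ through labels contributed by \emph{other} conjuncts of $\phi$); ``Lemma~\ref{lemma:finitesat:expandsat2} contrapositively'' does not bridge this. More importantly, the obstacle you single out in $(\Rightarrow)$ --- merging separate discharging expansions into one coherent loop --- is not the real difficulty: the hypothesis already supplies a loop $\phi\tran{\eta}\phi$, and by the conjunctive DNF rule (Lemma~\ref{lemma:expansion}.6) its projections onto the conjuncts of $CF(\phi)$ are automatically coherent, so nothing needs to be reassembled.

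The difficulty you do not address is the circularity in your flat ``conjunct by conjunct'' verification. To conclude $S\models_f\mu$ via Lemma~\ref{lemma:finitesat:expandsat2} you must already know $S\models_f\psi$ for the endpoint $\psi$ of $\mu$'s projected expansion, and along a loop that endpoint is itself built from members of $CF(\phi)$: when an Until discharges, the continuation of $\mu_2$ lands on other conjuncts of $\phi$, and a looping Release $\nu_1R\nu_2$ carries itself plus an expansion of $\nu_2$ forward. A single pass over $CF(\phi)$ therefore never bottoms out. The paper breaks this circularity with a stratification $S_0\subset S_1\subset\cdots\subset S_n=CF(\phi)$ (Lemma~\ref{lemma:cycle:partialorder}, resting on Lemma~\ref{lemma:formulacycle:exist}): formulas in $S_0$ expand along every loop to $\mathsf{True}$ or to themselves, each $\mu\in S_i$ expands to some $\mu'$ with $CF(\mu')\subseteq S_{i-1}\cup\{\mu\}$, and $T\models_f\mu$ is then proved by induction up the strata, invoking Lemma~\ref{lemma:finitesat:expandsat2} only with endpoints already covered by the induction hypothesis. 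Your proposal has no analogue of this well-founded order; relatedly, your claim that a looping Release ``contributes a $\mu_2$-satisfying chunk to $S$'' conflates taking one DNF step of $\mu_2$ with establishing $S\models_f\mu_2$, which requires following $\mu_2$'s expansion all the way down through the lower strata.
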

\begin{proof}
  Let $\phi\tran{\eta}\phi= (\phi_0=\phi\tran{\omega_0}\phi_1\tran{\omega_1}\ldots\tran{\omega_{k}}\phi_{k+1}=\phi (k\geq 0))$ and the set $T=\bigcup_{0\leq j\leq k}\alpha_j$, where $\alpha_j\wedge X\phi_{j+1}\in DNF(\phi_j)\wedge\omega_j\vDash\alpha_j$ holds.

  ($\Rightarrow$). From Lemma~\ref{lemma:cycle:partialorder} we know there exists $S_0\subset S_1\subset\ldots\subset S_n= CF(\phi) (n\geq 0)$such that $\forall \mu\in S_0\forall\phi\tran{\eta}\phi\cdot\mu\tran{\eta}\mathsf{True}\vee\mu\tran{\eta}\mu$, and for $i\geq 1$ we have $\forall\mu\in S_i\forall\phi\tran{\eta}\phi\cdot\mu\tran{\eta}\mu'$ and $CF(\mu')\subseteq S_{i-1}\cup\{\mu\}$. For each $\mu$ in $S_0$, if $\mu\tran{\eta}\mathsf{True}$ then according to Lemma~\ref{lemma:finitesat:expandsat2} we have $T\models_f \mu$ holds; And if $\mu\tran{\eta}\mu$ since $\mu$ is not an Until formula, so $\mu$ is a Release formula. For the Release formula $\mu=\nu_1 R\nu_2$ we know every time $\mu\tran{\eta}\mu$ implies $\nu_2\tran{\eta}\mathsf{True}$. Thus according to Lemma~\ref{lemma:finitesat:expandsat2} we have $T\models_f\nu_2$ holds and then $T\models_f\mu$ holds according to its definition. So we prove now $\forall\mu\in S_0\cdot T\models_f\mu$. Inductively, for $i> 0$, if $\mu\in S_i$ and $\mu\tran{\eta}\mu'$ where $CF(\mu')\subseteq S_{i-1}$, and since we have proven $T\models_f\mu'$ then according to Lemma~\ref{lemma:finitesat:expandsat2} we know $T\models_f \mu$ holds. Else if $\mu\tran{\eta}\mu'\wedge\mu\in CF(\mu')$, then according to the assumption we know $\mu$ must be the Release formula, so for $\mu=\nu_1 R\nu_2$ we have $\nu_2\tran{\eta}\nu'$ where $CF(\nu')\subseteq S_{i-1}$. Since we have proven $T\models_f \nu'$ then according to Lemma~\ref{lemma:finitesat:expandsat2} we have $T\models_f \nu$ hold also. Then according to the definition of $\models_f$ we know $T\models_f\mu$ holds. Thus we can prove $\forall\mu\in S_n=CF(\phi)\cdot T\models_f\mu$, that is, $T\models_f\phi$ holds. Moreover since $\phi\tran{\eta}\phi$ is true thus according to the definition of $\models_f$ (Definition~\ref{def:finitestepsat}) we know $\eta\models_f \phi$ holds.

  ($\Leftarrow$) If $\exists\mu\in UCF(\phi)\cdot\mu\tran{\eta}\mu'\wedge \mu\in CF(\mu')$, then according to the expansion rule $\mu=\nu_1 U\nu_2=\nu_2\vee\nu_1\wedge X(\nu_1 U\nu_2)$ we can conclude $\eta\models_f\nu_2$ never holds, which makes $\eta\models_f\phi$ not hold. So the lemma is true.
\end{proof}

\begin{lemma}\label{lemma:release:sat}
  If $\phi$ is a $Realse$ formula, then $\phi\tran{\xi}\phi\Rightarrow\xi\vDash\phi$.
\end{lemma}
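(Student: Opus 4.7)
The plan is to reduce the claim, via the soundness direction (Lemma~\ref{lemma:finitesat:infsat}), to the finite statement $\eta_i \models_f \phi$ for every loop in the decomposition $\xi = \eta_0 \eta_1 \eta_2 \ldots$ provided by $\phi \tran{\xi} \phi$. Once each finite loop satisfies $\models_f \phi$, Lemma~\ref{lemma:finitesat:infsat} immediately yields $\xi \models \phi$, so the entire argument collapses to a per-loop property.

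To establish $\eta_i \models_f \phi$ for $\phi = \phi_1 R \phi_2$, Definition~\ref{def:finitestepsat} rule (2) reduces the task to verifying $S_i \models_f \phi_2$, where $S_i$ denotes the union of the conjunct sets $CF(\alpha_j)$ collected along some witnessing expansion path realising $\phi \tran{\eta_i} \phi$. Using the DNF rule $DNF(\phi_1 R \phi_2) = DNF(\phi_1 \wedge \phi_2) \cup DNF(\phi_2 \wedge X \phi)$ from Lemma~\ref{lemma:expansion}, I argue that every closing loop induces a finite expansion of $\phi_2$ that reaches $\mathsf{True}$: either the first step of the loop is drawn from $DNF(\phi_1 \wedge \phi_2)$ and already exposes a full DNF step of $\phi_2$ whose remainder is discharged by the rest of the loop, or the first step comes from $DNF(\phi_2 \wedge X \phi)$, in which case the new $\phi_2$-conjunct carried into the next state must be fully discharged by the remaining steps in order for the loop to return to the pure $\phi$-state. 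In either case, applying Lemma~\ref{lemma:finitesat:expandsat2} backwards along this induced expansion of $\phi_2$ gives $S_i \models_f \phi_2$, and hence $\eta_i \models_f \phi$.

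The main obstacle will be making the ``discharge'' argument rigorous when the loop passes through intermediate conjunctive formulas of the form $\psi \wedge \phi$, where $\psi$ is a partial expansion of $\phi_2$ interleaved with a self-loop of $\phi$. Extracting a single coherent expansion path $\phi_2 \tran{\eta_i} \mathsf{True}$ from such a loop, suitable for invoking Lemma~\ref{lemma:finitesat:expandsat2}, requires tracking how the $\psi$-component and the $\phi$-component evolve in lockstep. I anticipate exploiting Lemma~\ref{lemma:cycle:partialorder}, which supplies the hierarchy of $CF$-subsets for looping formulas; since $CF(\phi) = \{\phi\}$ for a root Release formula the hierarchy collapses trivially, but this forces the accompanying expansion of $\phi_2$ to be the substantive content of the loop. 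A structural induction on $\phi_2$, which itself may contain nested Until and Release operators, combined with the book-keeping lemmas already established, should then close the argument.
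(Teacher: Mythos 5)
Your plan has the right key observation but is built on a circular foundation. You propose to conclude via the Soundness lemma (Lemma~\ref{lemma:finitesat:infsat}): decompose $\xi$ into loops $\eta_i$ with $\phi\tran{\eta_i}\phi$, show $\eta_i\models_f\phi$ for each, and let Soundness deliver $\xi\vDash\phi$. But in this paper Lemma~\ref{lemma:release:sat} is itself a helper lemma \emph{inside} the proof of Lemma~\ref{lemma:finitesat:infsat}: the base case of that proof, for formulas $\mu\in S_0$ with $\mu\tran{\eta_i}\mu$ for all $i$, argues that such a $\mu$ must be a Release formula and then invokes Lemma~\ref{lemma:release:sat} to get $\eta_i\eta_{i+1}\ldots\vDash\mu$. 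So using Soundness here assumes the very statement you are trying to prove. Unless you supply an independent proof of Soundness that does not pass through this lemma, the argument does not stand within the paper's development.

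The detour is also unnecessary. The fact you correctly identify -- that a return to the pure state $\phi=\mu R\nu$ forces every copy of $\nu$ spawned along the loop to be fully discharged, i.e.\ $\nu\tran{\eta_i}\mathsf{True}$ for the tail $\eta_i=\omega_i\ldots\omega_n$ of each loop window -- already finishes the proof without any appeal to $\models_f$: by Lemma~\ref{lemma:dnf}, $\nu\tran{\eta_i}\mathsf{True}$ gives $\xi_i\vDash\nu$ directly, and repeating this over successive loops yields $\xi_j\vDash\nu$ for all $j\ge 0$, whence $\xi\vDash\mu R\nu$ by the ``$\nu$ holds forever'' branch of the Release semantics. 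This is exactly the paper's proof. Your route packages the same observation as $S_i\models_f\nu$ via Lemma~\ref{lemma:finitesat:expandsat2} and then hands it to Soundness, which is both heavier and, as noted, circular. Drop the appeal to Lemma~\ref{lemma:finitesat:infsat} and argue semantically from $\nu\tran{\eta_i}\mathsf{True}$.
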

\begin{proof}
  Let $\phi = \mu R\nu$. Since $\phi\tran{\xi}\phi$, so we have $\exists n\cdot\phi\tran{\xi^n}\phi\wedge\phi\tran{\xi_n}\phi$. Let $\xi^n=\omega_0\omega_1\ldots\omega_n$ and $\eta_i = \omega_i\omega_{i+1}\ldots\omega_n (0\leq i\leq n)$. Thus we can easily know $\forall 0\leq i\leq n\cdot\nu\tran{\eta_i}\mathsf{True}$, which makes $\forall 0\leq j\leq n\cdot \xi_j\vDash\nu$. Inductively for $\phi\tran{\xi_n}\phi$ we can get the same conclusion. So $\forall j\geq 0$ we have $\xi\vDash\nu$, which makes $\xi\vDash\phi$ according to the LTL semantics.
\end{proof}

Now we begin to prove Lemma~\ref{lemma:finitesat:infsat}.
\begin{proof}
  From Lemma~\ref{lemma:cycle:partialorder} we know there exists $S_0\subset S_1\subset\ldots\subset S_n= CF(\phi) (n\geq 0)$such that $\forall \mu\in S_0\forall\phi\tran{\eta}\phi\cdot\mu\tran{\eta}\mathsf{True}\vee\mu\tran{\eta}\mu$, and for $i\geq 1$ we have $\forall\mu\in S_i\forall\phi\tran{\eta}\phi\cdot\mu\tran{\eta}\mu'$ and $CF(\mu')\subseteq S_{i-1}\cup\{\mu\}$.

   Basically for each $\mu$ in $S_0$, if $\exists\mu\tran{\eta_i}\mathsf{True}$ holds, then since $\forall 0\leq j\leq i\cdot\mu\tran{\eta_j}\mu$ so we have $\forall 0\leq j\leq i\cdot\xi'=\eta_j\eta_{j+1}\ldots\vDash\mu$; And if $\forall i\geq 0\cdot\mu\tran{\eta_i}\mu$, since $\eta_i\models_f\phi\Rightarrow \eta_i\models_f\mu$, and according to Lemma~\ref{lemma:until:finitesat} we know $\mu$ cannot be an Until formula. Then according to Corollary~\ref{coro:expand:cycle} we can know $\mu$ is a Release formula. Also we have $\mu\tran{\eta_i}\mu$, and according to Lemma~\ref{lemma:release:sat} we know $\forall i\geq 0\cdot\mu\tran{\eta_i}\mu$ plus $\mu$ is a release formula implies $\forall i\geq 0\cdot \xi'=\eta_i\eta_{i+1}\ldots\vDash\mu$. So first we can prove $\forall \mu\in S_0\forall i\geq 0\cdot\eta_i\eta_{i+1}\ldots\vDash\mu$.

  Inductively for the set $S_{n+1} (n \geq 0)$, if $\exists\mu\in S_{n+1}\setminus S_{n}\forall i\geq 0\cdot \mu\tran{\eta_i}\mu'\wedge CF(\mu')\subseteq S_{n}$, then from the basic step we know $\eta_{i+1}\eta_{i+2}\ldots\vDash\mu'$ so $\eta_i\eta_{i+1}\ldots\vDash\mu$. Moreover, we also have $\forall 0\leq j\leq i\cdot \eta_j\eta_{j+1}\ldots\vDash\mu$. If $\forall i\geq 0\cdot\mu\tran{\eta_i}\mu'\wedge\mu\in CF(\mu')$, similarly according to Lemma~\ref{lemma:until:finitesat} and Corollary~\ref{coro:expand:cycle} we know $\mu$ must be a Release formula. Let $\mu=\nu_1 R\nu_2$ and we know $\forall i\geq 0\cdot\nu_2\tran{\eta_i}\nu'\wedge CF(\nu')\subseteq S_n$. We have proven $\eta_{i+1}\eta_{i+2}\ldots\vDash\nu'$, so we have $\forall i\geq 0\cdot \eta_i\eta_{i+1}\ldots\vDash\nu_2$. Then according to the LTL semantics we have $\forall i\geq 0\cdot\eta_i\eta_{i+1}\ldots\vDash\mu$. So we can prove now $\forall\mu\in S_{n+1}\forall i\geq 0\cdot\eta_i\eta_{i+1}\ldots\vDash\mu$.

  So finally we can prove the set $S_n= CF(\phi)$, $\forall\mu\in S_n\forall i\geq 0\cdot\eta_i\eta_{i+1}\ldots\vDash\mu$. Thus $\forall \mu\in S_n\forall i\geq 0\cdot\eta_i\eta_{i+1}\ldots\vDash\mu$ implies $\xi\vDash \phi$.
\end{proof}

\subsection{Proof of Lemma \ref{lemma:completeness}}
\begin{lemma}\label{lemma:complete:finiteexist}
  $\xi\vDash\phi\Rightarrow\exists n\cdot \xi^n\models_f\phi$.
\end{lemma}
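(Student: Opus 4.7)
The plan is to proceed by structural induction on $\phi$. For the base case $\phi = p$ (or $\neg p$), the semantic satisfaction $\xi \vDash p$ means $p \in \omega_0$, so the one-step expansion $p \tran{p} \mathsf{True}$ together with $\omega_0 \vDash p$ immediately gives $\xi^1 \models_f p$; the set $S = \{p\}$ fulfils the literal clause of Definition~\ref{def:finitestepsat}.

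For the Boolean cases, the inductive hypothesis combines with the DNF distributivity laws from Lemma~\ref{lemma:expansion}. If $\phi = \phi_1 \vee \phi_2$, pick whichever disjunct $\xi$ satisfies, apply IH, and lift the witness through $DNF(\phi_1 \vee \phi_2) = DNF(\phi_1) \cup DNF(\phi_2)$ and the disjunction clause of $\models_f$. If $\phi = \phi_1 \wedge \phi_2$, IH yields witnesses of lengths $n_1,n_2$; take $n=\max(n_1,n_2)$ and merge the two expansions pointwise using the conjunction rule for DNF — the pooled set $S$ then contains both $S_1$ and $S_2$, so $S \models_f \phi_1$ and $S \models_f \phi_2$ both hold by monotonicity of $\models_f$ in $S$ (a small auxiliary fact that should be noted separately). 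For $\phi = X\psi$, apply IH to $\xi_1 \vDash \psi$, obtain $\xi_1^m \models_f \psi$, and prepend the trivial step $X\psi \tran{\mathsf{True}} \psi$ from $DNF(X\psi) = \{\mathsf{True} \wedge X\psi\}$; since $CF(\mathsf{True}) = \emptyset$, the set $S$ is unchanged and the $X$-clause of $\models_f$ is discharged.

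For $\phi = \phi_1 U \phi_2$, the semantic definition supplies a witness index $i \ge 0$ with $\xi_i \vDash \phi_2$ and $\xi_j \vDash \phi_1$ for $j<i$. By IH there is $m$ with $\xi_i^m \models_f \phi_2$, witnessed by some expansion of $\phi_2$. Unfold $\phi_1 U \phi_2$ along the first $i$ symbols using the self-loop clause $\phi_1 \wedge X(\phi_1 U \phi_2)$ from Lemma~\ref{lemma:expansion}.3 (each step is justified by $\xi_j \vDash \phi_1$), then splice in the $\phi_2$-witness. Set $n=i+m$; the Until clause of $\models_f$ requires $S \models_f \phi_2$, which is inherited from the spliced portion.

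The Release case $\phi = \phi_1 R \phi_2$ is the main obstacle and requires the most care. In both subcases of the semantic definition we have $\xi_0 \vDash \phi_2$, so IH gives an $m$ and an expansion $\phi_2 \tran{\beta_0} \psi_1 \tran{\beta_1} \cdots \tran{\beta_{m-1}} \psi_m$ with $\omega_j \vDash \beta_j$ and $S_\beta := \bigcup_j CF(\beta_j) \models_f \phi_2$. Translate this into an expansion of $\phi_1 R \phi_2$ by always taking the second branch of Lemma~\ref{lemma:expansion}.4 — namely $DNF(\phi_2 \wedge X(\phi_1 R \phi_2))$ — so each step has the form $\beta_j \wedge X(\psi_{j+1} \wedge \phi_1 R \phi_2)$. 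The labels $\alpha_j := \beta_j$ have $CF(\alpha_j) = CF(\beta_j)$, so the accumulated $S$ coincides with $S_\beta$, and the Release clause of $\models_f$ (which only demands $S \models_f \phi_2$) is discharged directly. The subtlety here is confirming that the stepwise expansion of $\phi_2 \wedge X(\phi_1 R \phi_2)$ really does produce conjuncts whose label-component is exactly $\beta_j$; this follows from the conjunction rule of Lemma~\ref{lemma:expansion}.6 applied to $\beta_j \wedge X(\psi_{j+1})$ (from $\phi_2$'s expansion) and $\mathsf{True} \wedge X(\phi_1 R \phi_2)$ (from the stationary component). Once this is in place, $\xi^m \models_f \phi_1 R \phi_2$ as required, completing the induction.
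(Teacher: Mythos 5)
Your proof is correct and follows essentially the same route as the paper's: structural induction on $\phi$ with the identical case analysis (the Until case reduces to the semantic witness index $i$ plus the induction hypothesis on $\phi_2$, the Release case reduces to $\xi\vDash\phi_2$ plus the induction hypothesis on $\phi_2$), the only difference being that you spell out the witnessing expansions that the paper's appendix proof leaves implicit. One small slip in your Release case: after the first step the successor formulas accumulate residues (expanding $\psi_1\wedge\phi_1 R\phi_2$ by the conjunction rule re-expands the $R$-component, producing a fresh conjunct), so later steps have labels of the form $\beta_j\wedge\gamma_j$ rather than exactly $\beta_j$, and the accumulated set $S$ only \emph{contains} $S_\beta$ rather than coinciding with it --- this is harmless, since the monotonicity of $\models_f$ in $S$ that you already isolated as an auxiliary fact closes the gap.
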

\begin{proof}
  We prove it by induction over the size of formula $\phi$.
  \begin{itemize}
    \item Basic step: If $\phi=p$, then $\xi\vDash\phi\Rightarrow p\in \xi^1$. So according to Definition \ref{def:finitestepsat} we know $\xi^1\models_f \phi$ is true.
    \item Inductive step: Assume for the formulas $\phi_i$ ($i=1,2$) we have $\xi\vDash\phi_i\Rightarrow\exists n\cdot \xi^n\models_f\phi_i$ hold. Then
        \begin{enumerate}
          \item If $\phi = X\phi_1$, then $\xi\vDash\phi\Rightarrow\xi_1\vDash\phi_1$. By induction hypothesis we know $\exists n\cdot {\xi_1}^n\models_f\phi_1$ holds, so ${\xi}^{n+1}\models_f\phi$ also holds.
          \item If $\phi = \phi_1\wedge\phi_2$, then $\xi\vDash\phi\Rightarrow\xi\vDash\phi_1\wedge\xi\vDash\phi_2$. By induction hypothesis we know $\exists n_1\cdot {\xi}^{n_1}\models_f\phi_1$ and $\exists n_2\cdot {\xi}^{n_2}\models_f\phi_2$ hold, so we can conclude $\exists n\geq max(n_1,n_2)\cdot {\xi}^{n}\models_f\phi$ holds.
          \item If $\phi = \phi_1\vee\phi_2$, then $\xi\vDash\phi\Rightarrow\xi\vDash\phi_1\vee\xi\vDash\phi_2$. By induction hypothesis we know $\exists n_1\cdot {\xi}^{n_1}\models_f\phi_1$ or $\exists n_2\cdot {\xi}^{n_2}\models_f\phi_2$ hold, so we can conclude $\exists n=n_1\vee n=n_2\cdot {\xi}^{n}\models_f\phi$ holds.
          \item If $\phi = \phi_1 U \phi_2$, then $\xi\vDash\phi_1 U\phi_2\Rightarrow\exists i\geq 0\cdot\xi_i\vDash\phi_2$. By induction hypothesis we have $\exists n\cdot {\xi_i}^n\models_f\phi_2$ hold, so from Definition \ref{def:finitestepsat} we have $\xi^{i+n}\models_f \phi$ hold.
          \item If $\phi = \phi_1 R \phi_2$, then $\xi\vDash\phi_1 R\phi_2\Rightarrow\forall i\geq 0\cdot\xi_i\vDash\phi_2$. So $\xi\vDash\phi_2$ holds. Then By induction hypothesis we know $\exists n\cdot \xi^n\models_f\phi_2$ and according to Definition \ref{def:finitestepsat} we know $\xi^n\models_f\phi$ also holds.
        \end{enumerate}
  \end{itemize}
\end{proof}

\begin{lemma}\label{lemma:complete:recursive}
  $\phi\tran{\xi}\phi\wedge\xi\vDash\phi\Rightarrow\exists n\cdot\phi\tran{\xi^n}\phi\wedge \xi^n\models_f\phi\wedge (\phi\tran{\xi_n}\phi\wedge\xi_n\vDash\phi)$.
\end{lemma}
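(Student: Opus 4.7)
The plan is to choose $n$ as the first partition point of the given looping expansion that is large enough to exhibit $\models_f$ on the prefix. I would first invoke Lemma \ref{lemma:complete:finiteexist} on $\xi \vDash \phi$ to obtain some $m \geq 1$ with $\xi^m \models_f \phi$, witnessed by an expansion $\phi = \phi_0 \tran{\alpha_0} \cdots \tran{\alpha_{m-1}} \phi_m$ whose label set $S_m = \bigcup_{j<m} CF(\alpha_j)$ satisfies the structural conditions of Definition \ref{def:finitestepsat}. Independently, the hypothesis $\phi \tran{\xi} \phi$ yields a partition $\xi = \eta_1 \eta_2 \cdots$ with $\phi \tran{\eta_i} \phi$ for all $i$. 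Then I would set $n := |\eta_1 \cdots \eta_k|$ for the least $k$ such that $|\eta_1 \cdots \eta_k| \geq m$.

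With this $n$, three of the four conjuncts are nearly immediate: composing the first $k$ loops of the partition gives $\phi \tran{\xi^n} \phi$, the tail partition $\xi_n = \eta_{k+1} \eta_{k+2} \cdots$ gives $\phi \tran{\xi_n} \phi$, and $\xi_n \vDash \phi$ follows by combining $\xi \vDash \phi$ with the loop $\phi \tran{\xi^n} \phi$ through Lemma \ref{lemma:dnf} (if necessary, extracting a consistent accepting expansion via a König-style argument on the finitely branching tree of DNF choices compatible with $\xi$). For the remaining conjunct $\xi^n \models_f \phi$, I would extend the $\models_f$-witness expansion past position $m$, continuing $\phi_m \tran{\omega_m} \phi_{m+1}' \tran{\omega_{m+1}} \cdots \tran{\omega_{n-1}} \phi_n'$, so that the combined label set $S_n \supseteq S_m$. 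A monotonicity property $S \models_f \phi \wedge S \subseteq S' \Rightarrow S' \models_f \phi$---proved by structural induction on $\phi$ and packaged cleanly through Lemma \ref{lemma:obligaionandsatonce}, which characterizes $S \models_f \phi$ as $S$ containing some obligation of $\phi$---then transfers the structural conditions from $S_m$ to $S_n$, giving $\xi^n \models_f \phi$.

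The main obstacle is ensuring that the extension of the $\models_f$-expansion beyond $\phi_m$ is actually realizable against $\xi$, i.e., that every transition $\phi_j' \tran{\omega_j} \phi_{j+1}'$ for $m \leq j < n$ is available in the DNF of $\phi_j'$. This requires the suffix $\xi_m$ to remain consistent with some continuing expansion from $\phi_m$; I would secure this by first arguing $\xi_m \vDash \phi_m$ (propagating $\xi \vDash \phi$ along the $\models_f$-expansion step by step through the forward direction of Lemma \ref{lemma:dnf}), and then deriving the compatible step-by-step expansion over $\omega_m, \ldots, \omega_{n-1}$ from the semantic completeness of the DNF at each reachable formula. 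The monotonicity subclaim and this realizability argument are the two auxiliary facts that do the real heavy lifting.
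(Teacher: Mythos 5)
Your proof is correct and follows essentially the same route as the paper: both rest on Lemma~\ref{lemma:complete:finiteexist} together with the fact that $\models_f$ propagates from a prefix $\xi^m$ to the first loop point $n\geq m$, the paper phrasing this step as a contradiction (otherwise $\forall i\leq n\cdot\neg(\xi^i\models_f\phi)$, contradicting Lemma~\ref{lemma:complete:finiteexist}) where you argue directly, and both then read off $\phi\tran{\xi_n}\phi\wedge\xi_n\vDash\phi$ from the loop structure. You are in fact more explicit than the paper about the monotonicity and extension-realizability facts that both arguments silently rely on.
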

\begin{proof}
  We first prove $\phi\tran{\xi}\phi\wedge\xi\vDash\phi\Rightarrow\exists n\cdot\phi\tran{\xi^n}\phi\wedge \xi^n\models_f\phi$. If $\forall n\cdot\phi\tran{\xi^n}\phi\wedge \neg (\xi^n\models_f\phi)$, we can conclude $\forall i\leq n\cdot \neg (\xi^i\models_f\phi)$, thus causing the contradiction with Lemma~\ref{lemma:complete:finiteexist}. Moreover, since $\phi\tran{\xi}\phi\wedge\phi\tran{\xi^n}\phi\wedge\xi\vDash\phi$, so $\phi\tran{\xi_n}\phi\wedge\xi_n\vDash\phi$ is also true. So this lemma is true.
\end{proof}

  To prove Lemma~\ref{lemma:completeness} we can use Lemma~\ref{lemma:complete:recursive} inductively, and obviously it is true.

\subsection{Proof of Theorem \ref{thm:central}}
\begin{proof}
  ($\Rightarrow$). According to Corollary~\ref{coro:expand:existcycle} and Lemma~\ref{lemma:completeness} we know it is true.

  \hspace*{5mm}($\Leftarrow$). According to Corollary~\ref{coro:expand:existcycle} and Lemma~\ref{lemma:finitesat:infsat} it is true.
\end{proof}

\subsection{Proof of Theorem \ref{thm:correct}}
\begin{lemma}\label{lemma:automata:runable}
  Let $\xi=\omega_0\omega_1\ldots$ and $\mathcal{A}_{\lambda}$ the B\"uchi automaton for $\lambda$ generated by \textit{DNF-based} construction. Then $\psi_0=\lambda\tran{\omega_0}\psi_1\tran{\omega_1}\ldots\tran{\omega_{n-1}}\psi_n$ holds, where $\psi_i\in EF(\lambda)$, if and only if there is a corresponding path $s_0\tran{\omega_0}s_1\tran{\omega_1}\ldots\tran{\omega_{n-1}}s_n$ in $\mathcal{A}_{\lambda}$ where each $s_i$ is the $\psi_i$-state.
\end{lemma}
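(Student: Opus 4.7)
The plan is to prove both directions simultaneously by induction on the length $n$ of the path, exploiting the fact that the transition relation $\delta$ of $\mathcal{A}_\lambda$ is \emph{defined} in terms of the expansion relation $\tran{\alpha}$. The key technical point is that a ``$\psi$-state'' in $\mathcal{A}_\lambda$ is a pair $\langle \psi, P\rangle$ with some process set $P$; so on the forward direction we must synthesize an appropriate sequence of process sets, while on the backward direction we can simply project the automaton states onto their formula component.

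\textbf{Base case ($n=0$).} Both statements reduce to the trivial observation that $s_0 = \langle \lambda,\emptyset\rangle \in S_0$ is a $\lambda$-state, matching $\psi_0 = \lambda$.

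\textbf{Forward direction ($\Rightarrow$).} Assume by induction that from the expansion prefix $\lambda\tran{\omega_0}\psi_1\tran{\omega_1}\ldots\tran{\omega_{n-2}}\psi_{n-1}$ we have built a path $s_0\tran{\omega_0}\ldots\tran{\omega_{n-2}}s_{n-1}$ in $\mathcal{A}_\lambda$ with $s_i=\langle\psi_i,P_i\rangle$. Since $\psi_{n-1}\tran{\omega_{n-1}}\psi_n$, by Definition~\ref{def:expandset} there exists some $\alpha_{n-1}$ with $\psi_{n-1}\tran{\alpha_{n-1}}\psi_n$ and $\omega_{n-1}\models\alpha_{n-1}$. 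Define
\[
P_n \;=\; \begin{cases} \emptyset & \text{if } \exists O\in OS_{\psi_n}\cdot O\subseteq P_{n-1}\cup CF(\alpha_{n-1}),\\ P_{n-1}\cup CF(\alpha_{n-1}) & \text{otherwise.}\end{cases}
\]
By the construction of $\mathcal{A}_\lambda$, the state $s_n = \langle\psi_n,P_n\rangle$ lies in $S$ (since $\psi_n\in EF(\lambda)$) and $s_n\in\delta(s_{n-1},\omega_{n-1})$, extending the run to length $n$.

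\textbf{Backward direction ($\Leftarrow$).} Assume there is a path $s_0\tran{\omega_0}\ldots\tran{\omega_{n-1}}s_n$ with $s_i=\langle\psi_i,P_i\rangle$. By the definition of $\delta$, each transition $s_i\tran{\omega_i}s_{i+1}$ requires an $\alpha_i$ satisfying $\psi_i\tran{\alpha_i}\psi_{i+1}$ and $\omega_i\models\alpha_i$. By Definition~\ref{def:expandset}, this is precisely $\psi_i\tran{\omega_i}\psi_{i+1}$. Stringing these together yields the expansion path $\lambda\tran{\omega_0}\psi_1\tran{\omega_1}\ldots\tran{\omega_{n-1}}\psi_n$.

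I expect no serious obstacle here: the lemma is essentially a bookkeeping statement that the automaton's transition relation faithfully refines the LTS transitions, the only non-trivial content being the determination of the process sets in the forward direction. That determination is deterministic given $P_{n-1}$ and $\alpha_{n-1}$, so the induction goes through cleanly; the final path in $\mathcal{A}_\lambda$ is unique once we fix the choice of the witness $\alpha_i$ at each step.
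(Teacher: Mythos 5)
Your proof is correct and follows essentially the same route as the paper's: induction on the path length, using the fact that the automaton's transition relation $\delta$ is defined directly from the expansion relation $\psi_i\tran{\alpha_i}\psi_{i+1}$ with $\omega_i\models\alpha_i$. You are in fact slightly more careful than the paper, which leaves the synthesis of the process sets $P_i$ in the forward direction implicit, whereas you spell out the update rule explicitly.
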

\begin{proof}
We prove it by induction over $n$.

1). When $n = 1$, if $\psi_1\in DNF(\lambda)$, then according to our construction directly we know for $\psi_0=\lambda\tran{\omega_0}\psi_1$, if and only if there is a $s_0\tran{\omega_0}s_1$ where $s_i$ is the $\psi_i$-state and $\lambda\tran{\omega_0}\psi_1$.

2). When $n = k, k\geq 1$ we assume $\psi_0=\lambda\tran{\omega_0}\psi_1\tran{\omega_1}\ldots\tran{\omega_{k-1}}\psi_k$ if and only if there is a corresponding path $s_0\tran{\omega_0}s_1\tran{\omega_1}\ldots\tran{\omega_{k-1}}s_k$ where for $k \geq i\geq 0$ each $s_i$ is the $\psi_i$-state in $\mathcal{A}_{\lambda}$. Then for $\psi_0=\lambda\tran{\omega_0}\psi_1\tran{\omega_1}\ldots\tran{\omega_{k-1}}\psi_k\tran{\omega_{k}}\psi_{k+1}$ holds, we know if and only if $\exists\alpha_k\wedge X\psi_{k+1}\in DNF(\psi_k)\wedge\omega_k\models\alpha_k$ holds from Definition~\ref{def:expand}. According to the construction we know $\exists\alpha_k\wedge \psi_{k+1}\in DNF(\psi_k)\wedge\omega_k\models\alpha_k$ if and only if there is a $s_k\tran{\omega_k}s_{k+1}$ where $s_{k+1}$ is $\psi_{k+1}$-state. So it is true that $\psi_0=\lambda\tran{\omega_0}\psi_1\tran{\omega_1}\ldots\tran{\omega_{k-1}}\psi_k\tran{\omega_{k}}\psi_{k+1}$ if and only if there is a $s_0\tran{\omega_0}s_1\tran{\omega_1}\ldots\tran{\omega_{k-1}}s_k\tran{\omega_k}s_{k+1}$ in $\mathcal{A}_{\lambda}$. The proof is done.
\end{proof}


Now we come to prove Theorem~\ref{thm:central}.
\begin{proof}
  ($\Leftarrow$) Let $\xi=\omega_0\omega_1\ldots$ be an accepting run of
  $\A_\lambda$, and we want to prove that $\xi\models\lambda$. Let
  $\sigma:=s_0\tran{\omega_0}s_1\tran{\omega_1}\ldots$ be the
  corresponding path accepting $\xi$. Thus, $inf(\sigma)$ contains at
  least one accepting state $s\in F$. Assume $s=\langle\phi, \emptyset\rangle$. Since there exists a finite path $s_0\tran{\omega_0}s_1\tran{\omega_1}s_2\ldots\tran{\omega_n}s_{n+1}=s$, where each $s_i$ is the $\phi_i$-state. According to Lemma~\ref{lemma:automata:runable} we know $\lambda=\phi_0\tran{\omega_0}\phi_1\tran{\omega_1}\phi_2\ldots\tran{\omega_n}\phi_{n+1}=\phi$ holds. Then we know $\exists \xi_n=\eta_1\eta_2\ldots$ so that for each $\eta_i=\omega_{i_0}\omega_{i_1}\ldots\omega_{i_n} (i,n\geq 1)$ we have $s_{i_0}=s\tran{\omega_{i_0}}s_{i_1}\tran{\omega_{i_1}}\ldots\tran{\omega_{i_n}}s_{i_{n+1}}=s$, of which for simplicity we denote as $s\tran{\eta_i}s$. According to Lemma~\ref{lemma:automata:runable} we know each time $s\tran{\eta_i}s$ holds $\phi\tran{\eta_i}\phi$ also holds ($s$ is the $\phi$-state). Moreover, according to our construction and Lemma~\ref{lemma:obligaionandsatonce} we know $\eta_i\models_f\phi$ holds. Finally according to Theorem~\ref{thm:central} we can conclude $\xi\vDash\lambda$.

($\Rightarrow$) Let $\xi=\omega_0\omega_1\ldots$ and $\xi\vDash\lambda$, we now prove there is an accepting run $\sigma=s_0\tran{\omega_0}s_1\tran{\omega_1}\ldots$ in $\mathcal{A}_{\lambda}$. From Theorem~\ref{thm:central} we know $\xi\vDash\lambda\Rightarrow\exists\phi\exists n\cdot\lambda\tran{\xi^n}\phi\wedge(\exists\xi_n=\eta_1\eta_2\ldots\cdot\forall i\geq 1\cdot\phi\tran{\eta_i}\phi\wedge \eta_i\models_f\phi)$. According to Lemma~\ref{lemma:automata:runable} we can find an infinite path $\sigma = s_0\tran{\omega_0}s_1\tran{\omega_1}\ldots\tran{\omega_{n-1}}s\tran{\omega_n}\ldots$ in $\mathcal{A}_{\lambda}$ on which $\xi$ can run. Here $s_0$ is the $\lambda$-state and $s$ is the $\phi$-state, and for each $\eta_i=\omega_{i_0}\omega_{i_1}\ldots\omega_{i_n} (i,n\geq 1)$ we have $s_{i_0}=s\tran{\omega_{i_0}}s_{i_1}\tran{\omega_{i_1}}s_{i_2}\ldots\tran{\omega_{i_n}}s_{i_{n+1}}=s$, of which for simplicity we denote as $s\tran{\eta_i}s$. Let $s_{i_j} (n+1 \geq j\geq 0)$ be the $\phi_{i_j}$-state, and the set $T=\bigcup_{0\leq k\leq n}\alpha_{i_k}$ where each $\alpha_{i_k}$ satisfies $\exists\alpha_{i_k}\wedge X(\phi_{i_{k+1}})\in DNF(\phi_{i_k})\wedge \omega_{i_k}\models\alpha_{i_k}$. Since $T\models_f \phi$ holds so according to Lemma~\ref{lemma:obligaionandsatonce} we know $\exists O\in OS_{\phi}\cdot O\subseteq T$. Moreover, our construction guarantees for each $s\tran{\eta_i}s$ there is $s_{i_j}=\langle\phi_{i_j}, P\rangle (0\leq j\leq n)$ so that $P=\emptyset$. Since such states with the format of $\langle-, \emptyset\rangle$ is finite, so there must be such a state in $inf(\sigma)$. Finally we prove the theorem is true.
\end{proof}

\end{document}